\newcolumntype{Y}{>{\centering\arraybackslash}X}
\newtheorem{prop}[theorem]{Proposition}
\theoremstyle{definition}
\newtheorem{transformation}{Transformation}
\newtheorem*{appendixproof}{\!}
\setlist[itemize]{leftmargin=*}
\setlist[enumerate]{leftmargin=*}
\newcommand{\Low}[0]{\textup{\texttt{Low}}}
\newcommand{\MaxIdx}[0]{\textup{\texttt{MaxIdx}}}
\newcommand{\Comp}[0]{\textup{\texttt{Comp}}}
\newcommand{\LowComp}[0]{\textup{\texttt{LowComp}}}
\newcommand{\Plow}[0]{{\mathcal{P}_{\texttt{L}}}}
\newcommand{\Pcomp}[0]{{\mathcal{P}_{\texttt{C}}}}
\newcommand{\Reduce}[0]{\textup{\texttt{Reduce}}}
\newcommand{\scReduce}[0]{\textup{\texttt{Reduce}}}
\newcommand{\HEReduce}[0]{\textup{\texttt{HE-Reduce}}}
\newcommand{\HEReduceOptimized}[0]{\textup{\texttt{HE-Reduce-Optimized}}}
\newcommand{\mb}[1]{\mathbf{#1}}
\newcommand{\x}[0]{\mathbf{x}}
\newcommand{\y}[0]{\mathbf{y}}
\def\BibTeX{{\rm B\kern-.05em{\sc i\kern-.025em b}\kern-.08em
    T\kern-.1667em\lower.7ex\hbox{E}\kern-.125emX}}
\begin{document}

\title{An Algorithm for Persistent Homology Computation Using Homomorphic Encryption}

\author{Dominic Gold\inst{1} \and
Koray Karabina\inst{2, 3} \and
Francis C. Motta\inst{1}}
\authorrunning{D. Gold et al.}

\institute{Florida Atlantic University, Boca Raton, FL, USA \\
\email{dgold2012@fau.edu, fmotta@fau.edu}
\and 
National Research Council Canada, Ottawa, Ontario, CA
\and
University of Waterloo, Waterloo, Ontario, CA \\
\email{koray.karabina@nrc-cnrc.gc.ca}}

\maketitle

\thispagestyle{plain}
\pagestyle{plain}

\begin{abstract}
Topological Data Analysis (TDA) offers a suite of computational tools that provide quantified shape features in high dimensional data that can be used by modern statistical and predictive machine learning (ML) models. In particular, persistent homology (PH) takes in data (e.g., point clouds, images, time series) and derives compact representations of latent topological structures, known as persistence diagrams (PDs). Because PDs enjoy inherent noise tolerance, are interpretable and provide a solid basis for data analysis, and can be made compatible with the expansive set of well-established ML model architectures, PH has been widely adopted for model development including on sensitive data, such as genomic, cancer, sensor network, and financial data. Thus, TDA should be incorporated into secure end-to-end data analysis pipelines. In this paper, we take the first step to address this challenge and develop a version of the fundamental algorithm to compute PH on encrypted data using homomorphic encryption (HE). 

\end{abstract}

\keywords{homomorphic encryption, topological data analysis, secure computing, persistent homology, applied cryptography, privacy enhancing technology}

\section{Introduction} 
\label{section: introduction}


Topological Data Analysis (TDA) has blossomed into a suite of computational tools, built on firm mathematical theory, that generate quantified, discriminating, shape-based features of data, which can provide interpretable representations of high dimensional data and be taken in by modern statistical and predictive ML models. To apply the flagship approach, known as persistent homology (PH), data---usually in the form of point clouds or scalar-functions defined on a mesh (e.g., images, time series)---are transformed into a binary matrix that encodes the evolution of a family of simplicial complexes. From this matrix a collection of persistence diagrams (PDs) can be derived through a simple reduction algorithm. PDs provide compact representations of the number and size of geometric/topological structures in the data as multisets of planar points, and can be equipped with natural metrics. Models can then be developed either directly on PDs using, for example, hierarchical clustering or $k$-medoids in the metric space of diagrams for classification tasks, or subsequent transformations can be applied to produce topological feature vectors \cite{pis, landscapes, Perea2022, Chung2022, pmlr-v108-carriere20a, Reininghaus2015ASM} to be used with ML model architectures such as random forests, support vector machines, and neural networks. We refer to these steps as the TDA-ML pipeline, as illustrated in Figure \ref{fig:secure tda pipeline}.

\begin{figure}[t]
    \centering
    \includegraphics[width = 1.0\textwidth]{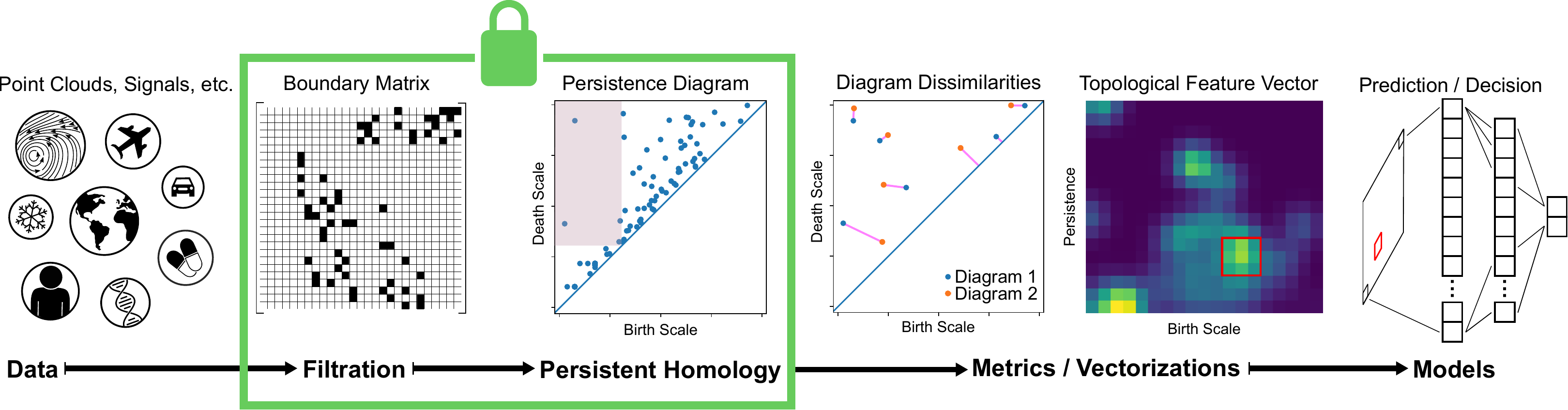}
    \caption{The TDA-ML Pipeline. Data is first transformed into a family of topological spaces encoded in a binary matrix called a boundary matrix, and then the boundary matrix is transformed into compact representations of the topological structures in the data called persistence diagrams. The distances between diagrams are computed or further transformations produce topological feature vectors. Finally, topological feature vectors are input into downstream models for a desired task. The green box indicates the contribution of this paper, securing the boundary matrix to persistence diagram step in the pipeline.}
    \label{fig:secure tda pipeline}
\end{figure}

Crucial to the use of PH in applications are the numerous stability results that establish---under a variety of assumptions about the data and the metrics placed on the data and the PDs---the (Lipschitz) continuity of the map sending data to PDs \cite{Chazal2012PersistenceSF, Cohen-SteinerEHM10, skraba2021wasserstein} and to feature vectors \cite{pis}. Due its inherent noise tolerance and suitability across domain and data types, PH has been widely adopted for model development including on sensitive data, such as genomic \cite{Shnier2019}, cancer \cite{Bukkuri2021}, sensor network \cite{Yu_2022}, and financial data \cite{GIDEA2018820}. The reader may refer to recent review articles for references to a variety of PH applications \cite{Otter2017, chazalreview}. 

As the scale of predictive models and their data demands grow, there is pressure to move to collaborative and cloud-based systems in which analysis is performed remotely and in a distributed fashion (e.g., federated learning \cite{Federated,KonecnyMRR16}). This is especially true for propriety and sensitive models that require large training data. On the other hand, a user---be it an independent data producer or agent lacking the capabilities demanded by the models---may need to keep private both their data and the decisions informed by that data. Thus, there is a growing need in industry and government for efficient, secure end-to-end data analysis pipelines that protect vital information on which sensitive decisions are made; to protect privacy, ensure compliance with personal data management regulations, and prevent hostile interference or misuse. 

Example application domains, where bridging topological data analysis and secure end-to-end algorithms will yield more efficient,
privacy-preserving, and robust applications where data analysis, data mining, statistical
inference and pattern recognition tasks are performed on private data collected from a large
number of, and potentially competing, parties include video surveillance
for law enforcement, location and energy use tracking for smart cities and autonomous
vehicles \cite{Private-Surveillance,Private-SmartCity}, financial data \cite{GIDEA2018820}, and biomedical data such as genomics \cite{Shnier2019} and cancer \cite{Bukkuri2021}, to name a few.

In order to address challenges with outsourcing sensitive data analysis, cryptographic researchers have been developing secure multiparty computing tools since the 1980s \cite{Yao86}. A good portion of the theoretical foundations of these primitives have been successfully adapted for practical applications in industry \cite{SMPC-Apps}. For example, recent innovations in homomorphic encryption (HE) have expanded the variety and complexity of the operations and algorithms that can compute on encrypted data (e.g., comparisons and conditionals \cite{encryptedcomparisons, cheon2019, FastComp2021, HE-cond}. 
Secure multiparty computing tools are nowadays interacting with privacy-preserving machine learning (ML) applications \cite{HE-LogReg, Secure-Federated, EricLogReg2020}. 
Indeed, there has been a recent surge in the development of secure ML algorithms using HE \cite{fang2021, lee2021privacypreserving, Aloufi2019, encryptedkmeans}. Thus, HE promises to expand to support complex algorithms and models that protect the privacy of both input data and model outputs. Similarly, sensitive data may be outsourced to a third party database management system (DBMS), where data owner may not fully trust DBMS but still request DBMS to perform some relational operations on the data such as sort, join, union, intersect, and difference. Specialized (symmetric key) encryption schemes allow data owners to encrypt their data, while preserving the ability of DBMS to perform such operations over the encrypted data \cite{SQLoED, CryptDB, secure_encrypted_databases, Karabina_ICDE}. 

In practice, a hybrid use of public key and symmetric encryption schemes are complementary in creating secure and trustworthy data analytical services and applications, which take encrypted data and perform both training and inference on it. Many such models have been performed this way, like logistic or ridge regression \cite{EricLogReg2020, Secure_LogReg, HE-LogReg, Ridge_Regression_Linear, Ridge_Regression_Millions}, support vector machines \cite{Fast_Homo_SVM, HE_Friendly_SVM}, random forests \cite{RF-Cryptotree, RF-DT-Inference-FPGA, RF-Fine_Grained, RF-Private}, and even neural networks \cite{NN-CryptoNets, NN-CryptoNN, NN-EMD, NN-Yongsoo}. The dual benefits of an HE framework for ML model training and inference are that while the client protects their data, the server protects their models that take in this encrypted data. In the TDA-ML pipeline (Fig. \ref{fig:secure tda pipeline}), both feature generation and model training/evaluation on those features represent critical components of the model development and deployment. Thus, each step back in the pipeline that can be realized in an HE framework relaxes the preprocessing demands on the client and strengthens the protection of the server's model. Thus securing the boundary matrix to persistence diagram step (green box in Fig. \ref{fig:secure tda pipeline}) is a critical step to allow a Server to fully protect any model that uses topological data features.

{\bf Our contributions:} We develop 
\texttt{HE-Reduce} (Algorithm~\ref{alg:secure persistence alg}) as a first-of-its-kind version of the boundary matrix reduction algorithm (\texttt{Reduce}, Algorithm~\ref{alg:standard persistence alg}), which is at the heart of PH and TDA, and which is suitable for secure computation using HE. We achieve this
by modifying the logical structure of 
Algorithm~\ref{alg:standard persistence alg} and
by developing new arithmetic circuits to replace its computational and conditional statements.
As a result, \texttt{HE-Reduce}
traces essentially the same steps as in \texttt{Reduce} but in a HE-friendly manner so that computations can be performed securely in the ciphertext space. We prove the correctness of our proposed algorithm and provide a complexity analysis. Our analysis is constructive and provides lower bounds on the implementation parameters that guarantee correctness. 
We implement our algorithms using the CKKS scheme from the OpenFHE library \cite{OpenFHE} but our techniques can be adapted for other HE schemes by implementing a compatible comparison function using BGV/BFV or TFHE schemes at a comparable cost; see \cite{Comp-BGV}. Finally, we highlight some limitations of our proposed algorithm and suggest some improvements together with some empirical evidence.

{\bf Outline:} The rest of this paper is organized as follows. Section~\ref{sec: preliminaries} establishes the mathematical and computational preliminaries of PH and HE. 
Section~\ref{sec: preliminaries} also outlines the main challenges associated with transforming \texttt{Reduce} to \texttt{HE-Reduce}.
In Section \ref{sec:secperhomologyalg}, we establish an HE-compatible version of the boundary matrix reduction algorithm, presented in Algorithm~\ref{alg:secure persistence alg}, and establish conditions guaranteeing correctness. 
Section~\ref{s: Complexity} provides a complexity analysis for Algorithm~\ref{alg:secure persistence alg} and notes on the implementation, including limitations of the proposed algorithm and potential improvements. Our plaintext implementation of Algorithm~\ref{alg:secure persistence alg} in Section~\ref{sec:empirical} simulates an implementation of \texttt{HE-Reduce} using HE, verifies the correctness of our theoretical results, and provides some positive evidence for improvements.
Our experiments showcase the propagation of errors due to relaxing algorithm parameters; see Figure~\ref{fig: approximate reduction errors}. 
We make concluding remarks in Section~\ref{s: Concluding remarks} concerning potential future research thrusts in secure TDA. In some cases, we have deferred technical proofs to the Appendix.


\section{Preliminaries}
\label{sec: preliminaries}

Our approach to adapting the PH boundary matrix reduction algorithm into a secure framework is to encrypt the input to the reduction algorithm and to allow computations to be performed on ciphertexts in such a way that the decrypted output of the algorithm is equivalent to the output of the algorithm running on the plaintext input. 
In Section~\ref{subsection: TDA}, we provide some necessary background information on PH and present the main PH boundary matrix reduction algorithm in Algorithm~\ref{alg:standard persistence alg}. In Section~\ref{subsection: HE}, we present an overview of HE and explain some of the challenges that would occur when developing a cryptographic version of Algorithm~\ref{alg:standard persistence alg}
based on HE.

We denote vectors and matrices with boldface, as in $\mb{v} \in \mathbb{R}^n$, $\mb{R} \in \mathbb{R}^{n \times n}$, and denote the $i$-th components of vectors with brackets, e.g., $\mb{v}[i]$, and columns of matrices with subscripts, $\mb{R}_i$. We denote the infinity norm of $\mb{v}$ by $\displaystyle |\mb{v}| = \lVert \mb{v} \rVert_{\infty} = \max_{i} \big| \mb{v}[i] \big|.$ We then define the following metric between any two vectors $\mb{x}, \mb{y} \in \mathbb{R}^n$ in the usual manner: $$|\mb{x} - \mb{y}| = \lVert \mb{x} - \mb{y} \rVert_{\infty} = \max_{i} \big| \mb{x}[i] - \mb{y}[i] \big|,$$ where $|\cdot|$ in the final expression is the usual absolute value of a real number. 
Furthermore, for $\mb{v} \in [0, 1]^n$, we denote $l_\mb{v} = low(\mb{v})$ as the integer-valued maximum index containing a 1 to help ease notation when appropriate.

\subsection{Persistent Homology}
\label{subsection: TDA}

\begin{figure}[t]
    \centering
    \includegraphics[width = 1.0\linewidth]{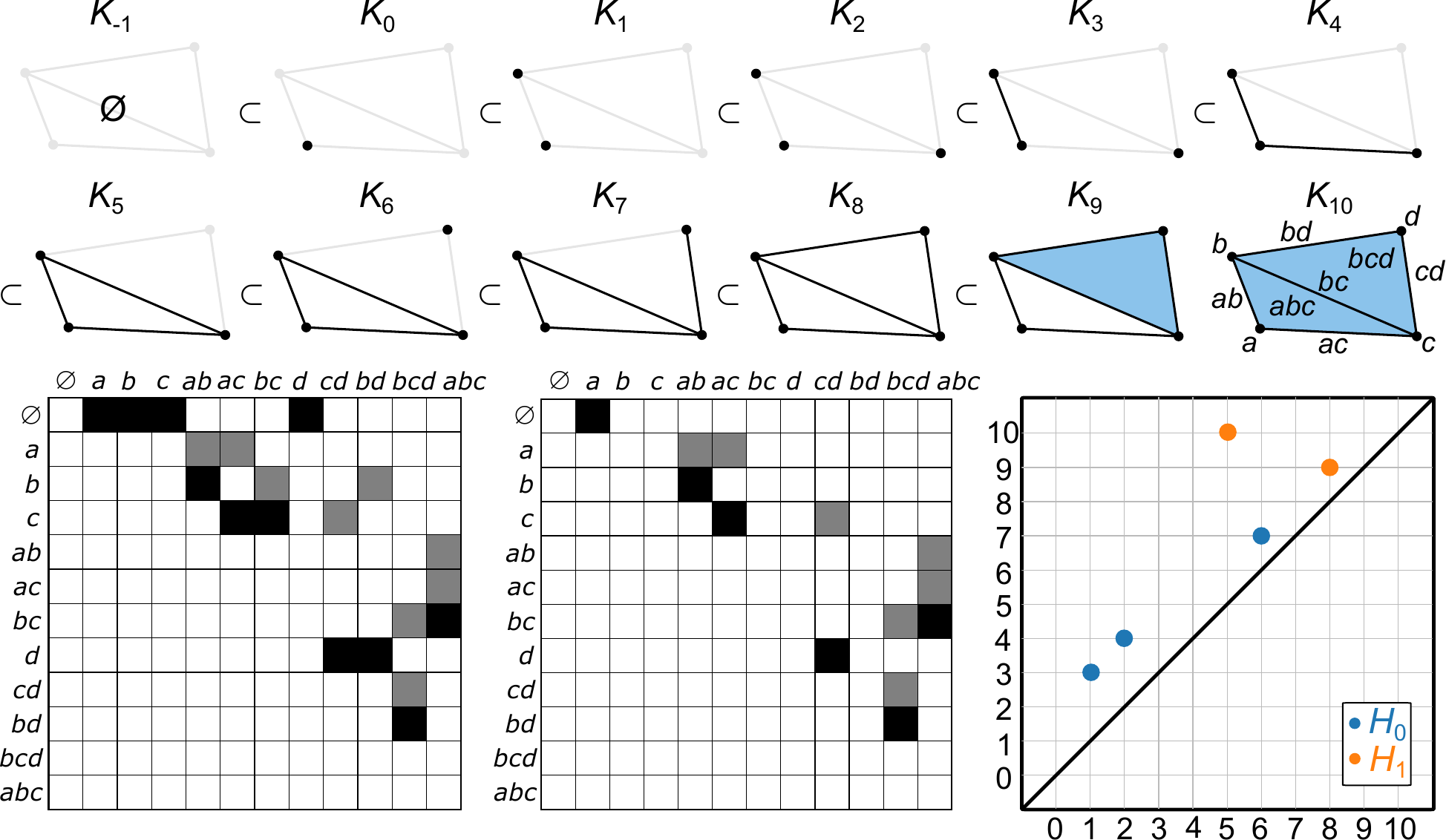}
    \caption{(Rows 1,2) Example filtration given by an ordering of the simplices in a simplicial complex that consists of 4 points, 5 edges, and 2 triangles. (Row 3) From left to right, the exact binary boundary matrix, the binary reduced boundary matrix, and the $H_0$ and $H_1$ persistence diagrams corresponding to the given filtration. White boxes in the matrices indicate 0s and shaded boxes represent 1s, with the lowest 1 in each column shaded with black.}
    \label{fig: square filtration}
\end{figure}

PH, a mathematical device from algebraic topology, provides a means of comparing data through its latent shape-based structures. This is achieved by first associating to a dataset an ordered, nested family of combinatorial objects that are equipped with well-defined notions of shape. In particular, these shape features will be representations of $k$-dimensional holes in the data. Intuitively, a $k$-dimensional hole is a vacancy left by a $(k+1)$-dimensional object whose $k$-dimensional boundary remains. In this way, PH can be regarded as a feature extraction tool which pulls from data topological/geometric features which may provide scientific insights and can be used to train discriminating or predictive models. Although there are different forms of (persistent) homology theory, we restrict our attention to simplicial homology because of its intuitive appeal and practical computability. 

\begin{definition}
  An \textit{abstract simplicial complex}, $K$, is a finite collection of finite subsets (called \textit{simplices}) such that if $\sigma \in K$, then $\tau \in K$ for all $\tau \subset \sigma$. A $k$-simplex, or a simplex of dimension $k$, is a set of size $k+1$, and the \textit{dimension} of a complex, $\text{dim}(K)$, is the maximum dimension of any of its simplices. A proper subset, $\tau \subsetneq \sigma \in K$, is called a \textit{face} of $\sigma$. If $\tau$ is a codimension-1 face of $\sigma$, i.e., $\tau \subset \sigma \in K$ and $|\tau| = |\sigma|-1$, we call $\tau$ a \textit{boundary face} of $\sigma$.  For simplicity, we will denote the $k$-simplex $\{x_0, x_1, \ldots, x_{k}\}$ by $x_0x_1 \ldots x_k$.
\end{definition}

One may regard 0-simplices (singleton sets) as points in some Euclidean space, 1-simplices (pairs) as edge segments between points, 2-simplices (sets of size 3) as filled-triangles, 3-simplices (sets of size 4) as filled tetraheda and so on, with the requirement that simplices in the geometric realization intersect only along common faces. Figure \ref{fig: square filtration} illustrates such geometric realizations of abstract simplicial complexes. For example, $K_5$ is the geometric realization of the abstract simplicial complex $\{\emptyset, a, b, c, ab, ac, bc\}$.  The empty triangle formed by the edges $ab, bc,$ and $ac$ at index 5 in Figure \ref{fig: square filtration} provides an example of a 1-dimensional hole formed by the vacancy of the missing 2-simplex, $abc$, enclosed by its three boundary edges, $ab$, $ac$, and $bc$. The holes in a simplicial complex, $K$ are collected into a group, denoted $H_1(K)$, composed of equivalence classes of collections of $1$-simplices that form cycles (e.g., $ab$, $ac$, and $bc$ in $K_5$) that could be the boundary faces of some collection of $2$-simplices, but aren't. Similarly, a collection of triangles in $K$ that enclose a void become represents of elements in $H_2(K)$. More generally, for each dimension $k$, the $k$-dimensional homology group $H_k(K)$ comprises equivalence classes of $k$-dimensional cycles that are not boundaries of a collection of $(k+1)$-dimensional simplices. $H_0(K)$ encodes the connected components of $K$.

By ordering the simplices of a simplicial complex so that no simplex appears before any of its faces, one forms a nested sequence of simplicial complexes, which we'll call a \textit{filtration}. Across this filtration one can track which simplices gave birth to homological features and which simplices kill off those homological features to determine (birth, death) pairs that track the persistence of each homological feature. For example, in Figure \ref{fig: square filtration}, $H_1$($K_4$) is trivial since $K_4$ contains no holes. This is in contrast to the complexes $K_5-K_9$ that have a non-trivial $H_1$ element represented by the boundary edges $ab, bc,$ and $ac$ that was born with the introduction of $bc$ at index 5. In $K_8$ there appears another hole with the introduction of the edge $bd$, which then disappears in $K_9$ when the triangle $bcd$ fills the cycle formed by $bc$, $bd$, $cd$.

In practice one usually defines a complex, $K$, from a dataset and computes a filtration from a real-valued function $f: K \rightarrow \mathbb{R}$ that satisfies $f(\tau) \leq f(\sigma)$ if $\tau \subseteq \sigma \in K$. $f$ encodes the `scales' at which each simplex appears in the filtration gotten by ordering simplices according to their scales and sorting ties arbitrarily while ensuring each simplex never appears before its faces. A multitude of methods have been proposed to derive such filtrations \cite{PH_roadmap}, both from point cloud data (e.g., Vietoris-Rips filtration \cite{ZOMORODIAN2010263}, alpha filtration \cite{Edelsbrunner1995}) and related filtrations for functions on a cubical mesh \cite{Wagner2012}. However determined, the structures in the filtration can be encoded in a square, binary matrix ${\mb \Delta}({\mb K})$ called a \textit{boundary matrix}, whose rows and columns are indexed by the simplices in $K$, ordered $\sigma_1, \ldots, \sigma_n$ so that $i < j$ if $f(\sigma_i) < f(\sigma_j)$ or if $\sigma_i \subset \sigma_j$. The entries of the boundary matrix are
\[{\mb \Delta}_{i,j} = \begin{cases} 1,& \text{ if } \sigma_i \text{ is a boundary face of } \sigma_j \\ 0, & \text{otherwise} \end{cases}.\]
 Thus, ${\mb \Delta}$ encodes the order in which simplices appear in the filtration and the relationship between each simplex and its boundary simplices. We let the first row and column correspond to the empty simplex, $\emptyset$, so that the vertices have boundary equal to $\emptyset$. Thus, vertices are encoded by a column [1,0,\ldots,0], while ${\mb \Delta}_{0}$ is then necessarily a zero column, which could be omitted. The scales, $f(\sigma_i)$, at which each simplex is added to the complex may be regarded as a real-valued vector in $\mathbb{R}^n$ and can be held separately from the combinatorial information encoded in the boundary matrix.
\begin{algorithm}[t]
\caption{\texttt{Reduce}(${\mb \Delta}$)}
\label{alg:standard persistence alg}
\flushleft\textbf{Input:} A boundary matrix ${\mb \Delta} = [\mb{\Delta}_0 \: | \: \mb{\Delta}_1 \: | \: ... \: | \: \mb{\Delta}_{n-1}] \in \mathbb{Z}_2^{n \times n}$ \\
\textbf{Output:} A reduced matrix $\mb{R} \in \mathbb{Z}_2^{n \times n}$
\begin{algorithmic}[1]
\State $\mb{R} \gets {\mb \Delta}$
\For {$j \gets 0$ \textbf{to} $n-1$}
    \While {$\textbf{ exists } j_0 < j \textbf{ with } low(\mb{R}_{j_0}) = low(\mb{R}_{j})$}
        \State $\mb{R}_{j} \gets (\mb{R}_{j} + \mb{R}_{j_0}) \mod 2$
    \EndWhile
\EndFor
\State \Return $R$
\end{algorithmic}
\end{algorithm}

It is shown in \cite{persalg, edelsharer} that calculation of the persistence pairs can be achieved through a straightforward algorithm (Algorithm \ref{alg:standard persistence alg}) that brings a boundary matrix into a reduced form. The critical operation needed to transform a filtered simplicial complex ${\mb K}$---given by the monotonic filtration function $f: K \rightarrow \mathbb{R}$ and encoded in a boundary matrix ${\mb \Delta}$---into its PDs is the function 
$$
low(\mb{v}) = \max (\{i \; | \; (\mb{v}[i] = 1\}),
$$
which returns the largest index among those coordinates of the binary vector $\mb{v}$ that are equal to 1. Progressing from $j=1$ to $n$ (i.e., in the order of the simplices given by the monotonic function $f$), each column ${\mb \Delta}_j$ is replaced with the mod-2 sum ${\mb \Delta}_i + {\mb \Delta}_j$, whenever $\text{low}({\mb \Delta}_i) = \text{low}({\mb \Delta}_j)$ and $i<j$, until the lowest 1 in column $j$ is distinct from all lowest 1s in the preceding columns. The lowest 1s in the reduced boundary matrix then specify the indices of the pair of simplices at which each PH class of the corresponding dimension is born and dies. More precisely, let ${\mb R} = \texttt{Reduce}({\mb \Delta})$ be the reduction of the boundary matrix ${\mb \Delta}$ after applying Algorithm \ref{alg:standard persistence alg}. Then $(f(\sigma_i), f(\sigma_j))$ is a (finite persistence) point in the $k$-dimensional PD $\text{dgm}_k({\mb K})$ if and only if  $\sigma_i$ is a simplex of dimension $k$ and $i = \text{low}({\mb R_j})$. In other words, a $k$-dimensional homology class was born with the introduction of the simplex $\sigma_{i} = \sigma_{\text{low}({\mb R}_j)}$ and died when $\sigma_j$ was added to the filtration. 

In Figure \ref{fig: square filtration} we illustrate the original boundary matrix, its reduced form after applying Algorithm \ref{alg:standard persistence alg}, and $H_0$ and $H_1$ PDs associated to the given filtration. In the reduced matrix, columns $b$ and $c$ consist of all zeros, since their appearance created homology ($H_0$) classes\footnote{The first vertex $a$ is a special case, and technically kills off the (-1)-dimensional (reduced) homology class at index -1.}. The connected components represented by vertices $b$ and $c$ are then killed by the introduction of $ab$ and $ac$ respectively, since these edges merge the connected component into the component represented by $a$, which was born earlier. This is encoded in the reduced boundary matrix by the low 1s at indices ($b$, $ab$) and ($c$, $ac$) respectively. The edge $bc$ likewise gives birth to an $H_1$ class, that is later killed off by the introduction of the triangle $abc$. This is why, in $\texttt{Reduce}({\mb \Delta})$, column $bc$ consists of all zeros and the low 1 in $abc$ is in row $bc$. 
    
The low 1s in the reduced matrix encode the birth-death simplex pairs appearing in the PDs of the filtration. Here we take the scale of each simplex to be the index of the complex in which it first appears so that the low 1 at ($b$,$ab$) is sent to the point (1,3) in the $H_0$ diagram. Similarly, ($bd$, $bcd$) maps to (8,9) and ($bc$, $abc$) maps to (5,10) in the $H_1$ PD, $\text{dgm}_1({\mb K}).$ If the scales of each simplex were determined instead by some geometric information in the data (e.g., using pairwise distances between points as is the case for the Vietoris-Rips filtration), the positions of the points in the PDs would capture these scales, rather than merely the indices.

\subsection{Homomorphic Encryption}
\label{subsection: HE}
Let $\mathcal{M}$ be a message (plaintext) space and 
$\mathcal{C}$ be a ciphertext space. We assume that $\mathcal{M}$ and $\mathcal{C}$ are commutative rings with their respective identity elements, and addition and multiplication operations, denoted
$(\mathcal{M}, 1_{\mathcal{M}}, +, \times)$ and $(\mathcal{C}, 1_{\mathcal{C}}, \oplus, \otimes)$. When the underlying ring is clear from the context, we simply denote the identity element by $1$ and so by abuse of notation the scalar space of the ring consists of elements $s = \sum_{i=1}^{s}{1} \in \mathbb{Z}$.
For a given parameter set \texttt{params},
an HE scheme consists of algorithms as described in the following:
\begin{itemize}
    \item \texttt{KeyGen}(\texttt{params}): Takes \texttt{params} as input, and outputs a public key and secret pair (\texttt{pk}, \texttt{sk}), and an evaluation key \texttt{evk}.
    \item $\texttt{Enc}_{\texttt{pk}}(\mb{m})$: Takes  
    a plaintext message $\mb{m}\in \mathcal{M}$ and the public key \texttt{pk} as input, and outputs a ciphertext $c\in \mathcal{C}$.
    \item $\texttt{Dec}_{\texttt{sk}}(c)$: Takes  
    a ciphertext $c\in \mathcal{C}$ and the public key \texttt{sk} as input, and outputs a plaintext message $\mb{m}\in \mathcal{M}$.
    \item $\texttt{Add}_{\texttt{evk}}(c_1, c_2)$: Takes a pair of ciphertexts $(c_1, c_2)$,  $c_i \in \mathcal{C}$ and the evaluation key \texttt{evk} as input, an outputs a ciphertext $c_{\texttt{add}} \in \mathcal{C}$.
    \item $\texttt{Mult}_{\texttt{evk}}(c_1, c_2)$: Takes a pair of ciphertexts $(c_1, c_2)$,  $c_i \in \mathcal{C}$ and the evaluation key \texttt{evk} as input, an outputs a ciphertext $c_{\texttt{mult}} \in \mathcal{C}$.
    \item $\texttt{Eval}_{\texttt{evk}}(f; c_1, ..., c_k)$: Takes 
     an arithmetic circuit $f: \mathcal{M}^k \rightarrow \mathcal{M}$, ciphertexts $c_i\in\mathcal{C}$, and the evaluation key \texttt{evk} as input, and outputs a ciphertext $c_{\texttt{eval}} \in \mathcal{C}$.
\end{itemize}

Here, \texttt{params} generally consists of a security parameter $\lambda$ and a multiplicative depth parameter $L$. The security parameter
$\lambda$ says that complexity of the best attack to break the security of the HE scheme is $\mathcal{O}$. The depth parameter $L$ guarantees that the HE scheme can evaluate circuits of maximum multiplicative depth $L$.
 We frequently refer to multiplicative depth and computational complexity of circuits in our analysis and they are defined as follows.
\begin{definition}
Let $f$ be an arithmetic circuit. \textit{Multiplicative depth}, or simply \textit{depth}, of $f$ is the maximum number of sequential multiplications required to compute $f$. \textit{Computational complexity}, or simply \textit{complexity}, of $f$ is the number of multiplication and addition operations required to compute $f$.
\end{definition}
For example, $f(m_1, m_2, m_3, ..., m_n) = \sum_{i=1}^{n}{m_i^{2^i}}$ is a depth-$n$ multiplicative circuit, where $m_i^{2^i}$ can be computed after $i$ successive multiplications (squarings). A naive way to compute $f$ would require $n(n+1)/2$ multiplications and $(n-1)$ additions and so we can say that $f$ has computational complexity $\mathcal{O}(n^2)$.

A basic correctness requirement\footnote{\label{neg_prob}The correctness and homomorphic features of HE may be violated with negligible probability.} for an HE scheme is that the decryption operation is the inverse of the encryption operation, that is 
\[\texttt{Dec}_{\texttt{sk}}(\texttt{Enc}_{\texttt{pk}}(m)) = m\] 
for all $m\in \mathcal{M}$.
The homomorphic feature\footref{neg_prob} of an HE scheme requires
\[\texttt{Dec}_{\texttt{sk}}(\texttt{Eval}_{\texttt{evk}}(f; c_1, ..., c_k)) = f(m_1, ..., m_k)\]
for all $c_i\in \mathcal{C}$ such that  $c_i= \texttt{Enc}_{\texttt{pk}}(m_i)$.
In other words, HE allows one to evaluate polynomials
on encrypted data such that the decryption of the result is exactly the same as the value of that polynomial evaluated on plaintext messages. We should note that we presented here a limited overview of HE schemes so that our paper is self-contained. HE schemes are much more involved (e.g., ~consisting of other algorithms such as scaling, relinearization, bootstrapping, etc.) and their implementations require a great deal of detail (e.g., ~encoding and decoding algorithms so that the plaintext messages can be mapped into the  message space of the scheme, batching operations, etc.). Moreover, most of these details depend on the choice of the HE scheme. For a survey of HE schemes and existing libraries, we refer the reader to \cite{HE_survey2018, OpenFHE}.

Some of the challenges of using HE in practice are:
\begin{itemize}
    \item Increasing the depth of the arithmetic circuit significantly increases the complexity of the circuit's encrypted evaluation. Practical HE schemes can handle arithmetic circuits with relatively low depth. For example, \cite{EricLogReg2020} reports and compares some results for homomorphic evaluation of circuits up to depth $30$. Bootstrapping is a viable option to reset the level of a ciphertext right before maximum tolerance is reached.
    \item Algorithms in general require evaluation of functions that are not necessarily polynomials and approximation of functions through low-depth circuits is a challenge. Similarly, algorithms involve conditional statements and evaluating these statements while running an algorithm on ciphertext variables requires different ways of handling conditionals. As an example, given $m_1, m_2\in\mathbb{Z_p}$ for some prime $p$, the conditional statement that returns $m_1+m_2$ if $m_1=m_2$; and that returns $m_1$ if $m_1\ne m_2$ can be implemented over ciphertexts as 
    $$
    (\texttt{Eval}_{\texttt{evk}}(f; c_1,c_2)\otimes c_1)
    \oplus ((1-\texttt{Eval}_{\texttt{evk}}(f; c_1,c_2))\otimes(c_1\oplus c_2)),
    $$
    where $c_i = \texttt{Enc}_{\texttt{pk}}(m_i)$, and $f(m_1, m_2) = (m_1-m_2)^{p-1}$ can be implemented as an arithmetic circuit of depth $\mathcal{O}(\log_2{p})$ using a square-and-multiply type exponentiation algorithm.
\end{itemize}

Our objective is to adapt 
Algorithm~\ref{alg:standard persistence alg}
so that secure boundary matrix reduction operation can be performed based on encrypted boundary matrices
using HE. In the light of our discussion above, there are three main challenges to address:
\begin{enumerate}
    \item Develop an arithmetic circuit for {\it encrypted low} computations so that given 
    a pair of ciphertexts $c_1$ and $c_2$ 
    (representing the encryption of column vectors $\mb{v}_1$ and $\mb{v}_2$), $low(\mb{v}_1) = low(\mb{v}_2)$ can be verified; see line 3 in Algorithm~\ref{alg:standard persistence alg}.
    \item Develop an arithmetic circuit so that the conditional modular addition operation (line 4 in Algorithm~\ref{alg:standard persistence alg}) can be performed in the ciphertext space.
    \item Modify the logical structure of Algorithm~\ref{alg:standard persistence alg} so that all of the modular vector additions in lines 2-4 in Algorithm~\ref{alg:standard persistence alg} are correctly executed in the ciphertext space, until $low(\mb{R}_{j_0})\ne low(\mb{R}_j)$ for all $j_0<j$, and for all $j = 0,...,(n-1)$.
\end{enumerate}

\section{HE-Compatible Matrix Reduction}
\label{sec:secperhomologyalg}
\subsection{$\Low$: HE-compatible computation of $low$}
\label{s: Secure low}

The first obstacle to realizing an HE-compatible \texttt{Reduce} algorithm is computing the largest index of any 1 in an $n$-dimensional binary vector $\mb{v} \in \{0, 1\}^n$, called $low(\mb{v})$ (see Section \ref{subsection: TDA}). For reasons that will become clear, it will be necessary for us to extend the usual definition of $low$---as defined in Section \ref{sec: preliminaries}---to the $n$-dimensional 0-vector; we assign $low(\mb{0}) = n-1$.  By construction, a non-zero column in a boundary matrix of a valid filtration can never have a $low$ of $n-1$ before or during reduction by Algorithm \ref{alg:standard persistence alg}. \footnote{If it did, that would imply the simplex that appeared latest is the boundary of a simplex that appeared earlier, which violates the condition that each step in the filtration gives a valid complex.}

In \cite{cheon2019}, the authors introduce a method of locating the index of the maximum value of a vector ($maxidx$) of distinct numbers
using HE. We adapt this method to obtain an approximation of the $low$ value of a binary vector. First, in \textsc{Lemma}~\ref{lemma:low_maxidx_exact}, we establish the correctness of our reimagining of the exact $low$ function obtained by monotonically scaling vector coordinates with respect to their index while ensuring all coordinates remain distinct and guaranteeing the $low$ corresponds to the new largest coordinate.

\begin{transformation}
\label{trans:S}
    For $\mb{v} \in \mathbb{R}^n$, let $S\left(\mb{v}\right) := \left[\mb{v}[i] + \frac{i}{n}\right]_{i=0}^{n-1}$
\end{transformation}

\begin{definition}
Let $\mathcal{D}^n = \{\mb{v} \in \mathbb{R}^n \; | \; \mb{v}[i] \neq \mb{v}[j], 0 \leq i \neq j < n\}$ be the collection of $n$-dimensional vectors with distinct coordinates. For a vector $\mb{v} \in \mathcal{D}^n$, define
$maxidx: \mathcal{D}^n  \rightarrow \mathbb{Z}$ by
$
maxidx(\mb{v}) = k
$
if $\mb{v}[k] > \mb{v}[j]$ for all $j$ different from $k$.
\end{definition}




\begin{lemma}
\label{lemma:low_maxidx_exact}
For any binary vector $\mb{v} \in \{0, 1\}^n$, $$low(\mb{v}) = maxidx(S(\mb{v}))$$
\end{lemma}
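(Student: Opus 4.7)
The plan is to verify two things in sequence: first that $S(\mb{v})$ actually lies in $\mathcal{D}^n$ so that $maxidx(S(\mb{v}))$ is even defined, and second that the unique maximum coordinate of $S(\mb{v})$ sits precisely at the index $low(\mb{v})$.

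For the well-definedness step, I would observe that since $\mb{v}[i]\in\{0,1\}$ and $i/n\in[0,1)$, each entry $S(\mb{v})[i]=\mb{v}[i]+i/n$ lies in $[0,1)$ when $\mb{v}[i]=0$ and in $[1,2)$ when $\mb{v}[i]=1$. Within each of these two strata the entries are separated by the distinct fractional parts $i/n$, and the two strata are disjoint, so all $n$ coordinates of $S(\mb{v})$ are pairwise distinct; hence $S(\mb{v})\in\mathcal{D}^n$ and $maxidx(S(\mb{v}))$ is well defined.

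For the main identity I would split into cases on whether $\mb{v}$ is the zero vector. If $\mb{v}=\mb{0}$, then $S(\mb{v})[i]=i/n$, whose maximum is attained uniquely at $i=n-1$, matching the convention $low(\mb{0})=n-1$. If $\mb{v}\ne\mb{0}$, let $k=low(\mb{v})$, so $\mb{v}[k]=1$ and $\mb{v}[j]=0$ for every $j>k$. Then $S(\mb{v})[k]=1+k/n\ge 1$, whereas for $j>k$ we have $S(\mb{v})[j]=j/n<1$, and for $j<k$ we have $S(\mb{v})[j]\le 1+j/n<1+k/n$. Thus $S(\mb{v})[k]$ strictly exceeds every other entry, giving $maxidx(S(\mb{v}))=k=low(\mb{v})$.

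I do not expect a genuine obstacle here; the key conceptual point is just that the additive perturbation $i/n<1$ is small enough that it cannot override a true $1$-entry at a larger index, yet large enough (and monotonic in $i$) to serve as a tiebreaker among equal $\mb{v}$-entries. The only care needed is to keep track of the boundary case $\mb{v}=\mb{0}$, which is exactly why the definition of $low$ was extended with $low(\mb{0})=n-1$ in Section~\ref{s: Secure low}.
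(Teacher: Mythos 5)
Your proof is correct and follows essentially the same route as the paper's: verify that $S(\mb{v})$ has pairwise distinct entries so $maxidx$ is defined, then show the coordinate at index $low(\mb{v})$ strictly dominates all others by splitting on indices below and above $low(\mb{v})$. Your explicit handling of the zero vector via the convention $low(\mb{0})=n-1$ is in fact slightly more careful than the paper's argument, which tacitly assumes $\mb{v}[low(\mb{v})]=1$.
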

\begin{proof}
See Appendix~\ref{s: Low correctness proofs}.
\end{proof}

How does our argument about $maxidx$ approximating $low$ hold in our ``approximate arithmetic'' setting? The following generalization of \textsc{Lemma}~\ref{lemma:low_maxidx_exact} states that as long as our \textit{approximate} binary vector $\mb{v'} \in \mathbb{R}^n$ isn't too far from an underlying, true binary vector $\mb{v} \in \{ 0, 1 \}^n$, then we may continue to extract $low(\mb{v})$ using $maxidx(\mb{v'})$.

\begin{lemma}
\label{lemma:low_maxidx_approx}
Let $\mb{v}\in \{0, 1\}^n$ and $\mb{v'}\in \mathbb{R}^n$ be given such that $\left| \mb{v'} - \mb{v} \right| < \frac{1}{2n}$. Then 
\[low(\mb{v}) =  maxidx(S(\mb{v'})).\] 
\end{lemma}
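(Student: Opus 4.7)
The plan is to reduce to the exact case (Lemma~\ref{lemma:low_maxidx_exact}) and show that the perturbation $\mb{v'} - \mb{v}$ is too small to move the location of the unique maximum of $S(\mb{v})$. Let $k = low(\mb{v}) = maxidx(S(\mb{v}))$. I would prove that $S(\mb{v'})[k] > S(\mb{v'})[j]$ for every $j \neq k$, which identifies $k$ as the (unique) maximum index of $S(\mb{v'})$ and so gives $maxidx(S(\mb{v'})) = k$.

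The core quantitative input is a gap estimate: I claim that for any $\mb{v} \in \{0,1\}^n$ with $k = low(\mb{v})$ (using the convention $low(\mb{0}) = n-1$), one has $S(\mb{v})[k] - S(\mb{v})[j] \geq \tfrac{1}{n}$ for all $j \neq k$. To verify this I would split into two cases. If $\mb{v} = \mb{0}$, then $S(\mb{v})[i] = i/n$, so the gap from index $n-1$ to any $j < n-1$ is $(n-1-j)/n \geq 1/n$. Otherwise $\mb{v}[k] = 1$ and $\mb{v}[j] = 0$ for all $j > k$; for $j < k$ one has $S(\mb{v})[k] - S(\mb{v})[j] \geq 1 + k/n - (1 + j/n) = (k-j)/n \geq 1/n$, and for $j > k$, $S(\mb{v})[k] - S(\mb{v})[j] = 1 + (k-j)/n \geq 1/n$ because $j - k \leq n-1$.

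With the gap in hand, I would finish by writing $S(\mb{v'})[i] = S(\mb{v})[i] + (\mb{v'}[i] - \mb{v}[i])$ and using $|\mb{v'}[i] - \mb{v}[i]| \leq |\mb{v'} - \mb{v}| < \tfrac{1}{2n}$ coordinatewise. Then for every $j \neq k$,
\[
S(\mb{v'})[k] - S(\mb{v'})[j] \;\geq\; \bigl(S(\mb{v})[k] - S(\mb{v})[j]\bigr) - |\mb{v'}[k] - \mb{v}[k]| - |\mb{v'}[j] - \mb{v}[j]| \;>\; \tfrac{1}{n} - \tfrac{1}{2n} - \tfrac{1}{2n} \;=\; 0,
\]
so $S(\mb{v'})[k]$ strictly exceeds every other coordinate. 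In particular $S(\mb{v'}) \in \mathcal{D}^n$ at least in the sense that its maximum is uniquely attained, and $maxidx(S(\mb{v'})) = k = low(\mb{v})$.

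There is no real obstacle; the only subtle point is choosing the constant $\tfrac{1}{2n}$ carefully, which is why the gap lower bound must be exactly $\tfrac{1}{n}$ (and hence why the scaling factor in Transformation~\ref{trans:S} is $i/n$ rather than anything smaller). I would flag this tightness explicitly, since it is precisely what forces the hypothesis $|\mb{v'} - \mb{v}| < \tfrac{1}{2n}$ and will later dictate the implementation parameters used by \HEReduce{}.
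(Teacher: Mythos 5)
Your proof is correct and follows essentially the same route as the paper's: both arguments show $S(\mb{v'})[k] > S(\mb{v'})[j]$ for every $j \neq k$ by exploiting that the index offsets $i/n$ create at least a $1/n$ separation in the exact case while each coordinate perturbation is strictly below $1/2n$; you merely package this as an explicit gap lemma plus a triangle-inequality step, where the paper runs the two cases $j<k$ and $j>k$ directly on $\mb{v'}$. A small bonus of your version is that it explicitly treats the convention $low(\mb{0}) = n-1$, which the paper's appendix proof passes over by assuming $\mb{v}[k]=1$.
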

\begin{proof}
See Appendix~\ref{s: Low correctness proofs}.
\end{proof}

\begin{remark} The proximity between $\mb{v}$ and $\mb{v'}$ cannot be relaxed for the above choice of Transformation~\ref{trans:S}, since it is possible to construct vectors, $\mb{v}$ and $\mb{v'}$ such that  $|\mb{v} - \mb{v'}| = \frac{1}{2n}+ c$, with $0 = low(\mb{v}) \neq maxidx(S(\mb{v}')) = n-1$ for any $c > 0$.
\end{remark}

Using this construction, it is then natural to apply the $\MaxIdx$ function presented in \cite{cheon2019} (Algorithm \ref{alg:MaxIdx}), to develop the \texttt{Low} function (Algorithm \ref{alg:low}). This $\Low$ function will estimate $low$ with arbitrary accuracy, for real vectors that well-approximate binary vectors. 



\begin{algorithm}[t]
\caption{\texttt{MaxIdx}($\mb{v}; d, d', m, t$) from \cite{cheon2019}}
\label{alg:MaxIdx}
\flushleft \textbf{Input:} A vector $\mb{v} \in [\frac{1}{2}, \frac{3}{2})^n \cap \mathcal{D}^n$; $d, d', m, t \in \mathbb{N}$ \\
\textbf{Output:} A vector $\mb{b} \in [0,1]^n$ such that $\mb{b}[k] \approx 1$, if $maxidx(\mb{v}) = k$, otherwise $\mb{b}[i] \approx 0$.\\
\textbf{Depth:}  $d' + 1 + t(d + \log m + 2)$\\
\textbf{Complexity:} $O(n + d' + t(d + n \log m))$
\begin{algorithmic}[1]
\State $I \gets \texttt{Inv}(\sum_{j = 0}^{n-1} \mb{v}[j]/n; d')$
\For {$j \gets 0$ \textbf{to} $n-2$}
    \State $\mb{b}[j] \gets \mb{v}[j]/n \cdot I$
\EndFor
\State $\mb{b}[n-1] \gets 1 - \sum_{j = 0}^{n-2} \mb{b}[j]$
\For {$i \gets 1 \textbf{ to } t$}
    \State $I \gets \texttt{Inv}(\sum_{j = 0}^{n-1} \mb{b}[j]^m; d)$
    \For {$j \gets 0 \textbf{ to } n-2$}
        \State $\mb{b}[j] \gets \mb{b}[j]^m \cdot I$
    \EndFor
    \State $\mb{b}[n-1] \gets 1 - \sum_{j = 0}^{n-2} \mb{b}[j]$
\EndFor
\State \Return $\mb{b}$
\end{algorithmic}
\end{algorithm}

\texttt{MaxIdx} takes a vector $\mb{v} \in [1/2, 3/2)^n$ and returns a vector $\mb{b}$ with $\mb{b}[k] \approx 1$ if $maxidx(\mb{v}) = k$ and $\mb{b}[j] \approx 0$ for $j \neq k$. The component-wise accuracy in approximating the coordinates of the true maximum value indicator vector ($\mb{b}$ with $\mb{b}[maxidx(\mb{v})] = 1$ and 0 elsewhere) is controlled by a tuple of parameters $\mathcal{P}_{\texttt{L}} = (d, d', m, t)$. In \cite{cheon2019}, the authors show that the error in each coordinate is bounded by $2^{-\alpha}$, for $\alpha > 0$ which can be made arbitrarily large with sufficiently large choices of $d, d', m$, and $t$.

To attain the actual index containing the maximum value of $\mb{v}$, as opposed to the maximum index indicator vector, $\mb{b}$, we compute the dot product between $\mb{b}$ and $[0, 1, ..., n-1]$. This is the approach we adopt in the $\texttt{Low}$ function given in Algorithm \ref{alg:low}. Since the $\MaxIdx$ algorithm requires the input vector to be in the interval $[\frac{1}{2}, \frac{3}{2})^n$, and our inputs $S(\mb{v}')$ will be in the interval $[0, 2)^n$, we apply a linear transformation that preserves the $maxidx$ of its input.

\begin{transformation}
\label{trans:T low}
$T_{\texttt{L}}\left(\mb{v}\right) := \left[ \frac{\mb{v}[i] + 1}{2} \right]_{i=0}^{n-1}$ 
\end{transformation}

The error in $\texttt{MaxIdx}$ propagates through the $\texttt{Low}$ algorithm in the following manner:
\begin{theorem}
\label{theorem: lowerror}
Let $\alpha > 0$ and fix parameters $d, d', m, t$ for the $\MaxIdx$ algorithm so that 
\[|\MaxIdx(\mb{x}; d,d',m,t) - \mb{e}_{maxidx(\mb{x})}| < 2^{-\alpha},\] 
for all $\mb{x} \in [\frac{1}{2}, \frac{3}{2})^n$.
Further assume $\mb{v'} \in [0, 1]^n$ and $\mb{v} \in \{ 0, 1 \}^n$ are such that $|\mb{v'} - \mb{v}| < \frac{1}{2n}$. Then $$\big| \Low(\mb{v'}; d, d', m, t) - low(\mb{v}) \big| < \frac{3}{2}(n)(n-1)  2^{-\alpha}.$$
\end{theorem}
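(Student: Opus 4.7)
The plan is to recognize
\[\Low(\mb{v'}; d, d', m, t) = \sum_{j=0}^{n-1} j \cdot \mb{b}[j], \qquad \mb{b} := \MaxIdx(T_{\texttt{L}}(S(\mb{v'})); d, d', m, t),\]
i.e., as the dot product of the approximate indicator returned by \MaxIdx\ with $[0, 1, \ldots, n-1]$, and then execute three steps: verify the preconditions needed to invoke \MaxIdx\ and Lemma~\ref{lemma:low_maxidx_approx}; identify the exact indicator vector being approximated; and propagate the coordinate-wise approximation error through the final dot product via the triangle inequality.

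First I would check that the input fed to \MaxIdx\ is admissible. Since $\mb{v'} \in [0,1]^n$, each coordinate of $S(\mb{v'})$ lies in $[0, 2)$ and each coordinate of $T_{\texttt{L}}(S(\mb{v'}))$ lies in $[1/2, 3/2)$ as required. For the distinctness condition $T_{\texttt{L}}(S(\mb{v'})) \in \mathcal{D}^n$, the hypothesis $|\mb{v'} - \mb{v}| < 1/(2n)$ combined with $\mb{v} \in \{0,1\}^n$ forces $|\mb{v'}[i] - \mb{v'}[j]| < 1/n$ whenever $\mb{v}[i] = \mb{v}[j]$, so the index-dependent shift $i/n$ introduced by $S$ strictly separates the two coordinates; the remaining case $\mb{v}[i] \neq \mb{v}[j]$ is easier because the gap between $\mb{v'}[i]$ and $\mb{v'}[j]$ already exceeds $1 - 1/n$ before the shift is applied.

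Next, because $T_{\texttt{L}}$ is a strictly increasing affine map, $maxidx(T_{\texttt{L}}(\mb{u})) = maxidx(\mb{u})$, so combining with Lemma~\ref{lemma:low_maxidx_approx} gives
\[k := low(\mb{v}) = maxidx(S(\mb{v'})) = maxidx(T_{\texttt{L}}(S(\mb{v'}))).\]
Hence the exact indicator vector targeted by \MaxIdx\ is $\mb{e}_k$, and the fixed-parameter hypothesis supplies $|\mb{b}[j] - \mb{e}_k[j]| < 2^{-\alpha}$ for each $j$. Writing $low(\mb{v}) = k = \sum_{j=0}^{n-1} j \cdot \mb{e}_k[j]$ and applying the triangle inequality,
\[|\Low(\mb{v'}) - low(\mb{v})| \leq \sum_{j=0}^{n-1} j \cdot |\mb{b}[j] - \mb{e}_k[j]| < \frac{n(n-1)}{2} \cdot 2^{-\alpha},\]
which is strictly sharper than, and therefore implies, the claimed bound $\frac{3}{2} n(n-1) \cdot 2^{-\alpha}$.

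I expect the distinctness check to be the most delicate point: one must use the strictness of $|\mb{v'} - \mb{v}| < 1/(2n)$ (rather than a non-strict inequality) to guarantee that the coordinates of $S(\mb{v'})$ are genuinely distinct, so that $maxidx$ is unambiguously defined and Lemma~\ref{lemma:low_maxidx_approx} actually applies. Once that is in place the remainder is essentially triangle-inequality bookkeeping against the per-coordinate \MaxIdx\ error.
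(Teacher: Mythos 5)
Your proof is correct, but it takes a more direct route than the paper. The paper's proof triangulates through the intermediate quantity $\Low(\mb{v}; d,d',m,t)$ evaluated on the \emph{exact} binary vector: it combines \textsc{Lemma}~\ref{lemma:lowerrorexact} (error $\tfrac{n(n-1)}{2}2^{-\alpha}$ between $\Low(\mb{v})$ and $low(\mb{v})$) with \textsc{Lemma}~\ref{lemma:lowerrorapprox} (error $n(n-1)2^{-\alpha}$ between $\Low(\mb{v}')$ and $\Low(\mb{v})$), and the sum of the two is exactly the stated $\tfrac{3}{2}n(n-1)2^{-\alpha}$. You instead observe that, by \textsc{Lemma}~\ref{lemma:low_maxidx_approx} and monotonicity of $T_{\texttt{L}}$, the vector $\mb{b}=\MaxIdx(T_{\texttt{L}}(S(\mb{v}')))$ already targets $\mb{e}_{low(\mb{v})}$ itself, so a single application of the coordinate-wise $2^{-\alpha}$ guarantee through the dot product with $[0,\ldots,n-1]$ gives $\tfrac{n(n-1)}{2}2^{-\alpha}$ directly, which implies the stated bound with a factor of $3$ to spare; your care in checking that $T_{\texttt{L}}(S(\mb{v}'))$ lands in $[\tfrac12,\tfrac32)^n$ with distinct coordinates (using strictness of $|\mb{v}'-\mb{v}|<\tfrac{1}{2n}$) is exactly what is needed to invoke the hypothesis and \textsc{Lemma}~\ref{lemma:low_maxidx_approx}. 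The paper's decomposition buys modularity—\textsc{Lemma}~\ref{lemma:lowerrorexact} and \textsc{Lemma}~\ref{lemma:lowerrorapprox} are stated as separately reusable facts—at the cost of a factor of $3$; your sharper constant would even let the $\log(3)$ term be dropped from the bound on $\alpha$ in \textsc{Theorem}~\ref{theorem: low parms}, mildly reducing the required $\MaxIdx$ parameters.
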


\begin{proof}
The result follows from \textsc{Lemmas} \ref{lemma:lowerrorexact} and \ref{lemma:lowerrorapprox} in Appendix~\ref{s: Low correctness proofs} and the triangle inequality.
\end{proof}

In the next section we establish choices of parameters ensuring a specified level of accuracy of the approximating \texttt{Low} function. 


As a final remark, we note that the dependence of \texttt{Low}'s error on $n^2$ is a consequence of extracting the $low$ of a vector using a dot product between the vector of indices, $[0,\ldots,n-1]$, and the max-index-indicator vector. This may be unavoidable when using the current implementation of the \texttt{MaxIdx} function, although it is conceivable that a fundamentally different approach to computing $\Low$ may yield a better error growth with the size of the boundary matrix.


\begin{algorithm}[t]
\caption{\texttt{Low}($\mb{v}'; d, d', m, t$)} 
\label{alg:low}
\flushleft \textbf{Input:} A vector $\mb{v'} \in [0, 1]^n$; $d, d', m, t \in \mathbb{N}$ \Comment{$\mb{v'} \approx \mb{v} \in \{0, 1\}^n$} \\
\textbf{Output:} A real number $r$ \Comment{$r \approx low(\mb{v})$}\\
\textbf{Depth:} $d' + 2 + t(d + \log(m) + 2)$ \\
\textbf{Complexity:} $O(n + d' + t(d + n \log m))$ 
\begin{algorithmic}[1]
\State $\mb{v'} \gets S(\mb{v'})$ \Comment{\textsc{Lemma}~\ref{lemma:low_maxidx_approx}}
\State $\mb{v'} \gets T_{\texttt{L}}(\mb{v'})$ \Comment{Maps $[0, 2)^n$ to $[1/2, 3/2)^n$}
\State $\mb{b} \gets \texttt{MaxIdx}(\mb{v'}; d, d', m, t)$ \Comment{Algorithm \ref{alg:MaxIdx}}
\State $r \gets \mb{b} \cdot [0, 1, ..., n-1]$ \Comment{Extract $low$ estimate}
\State \Return $r$
\end{algorithmic}
\end{algorithm}  
\subsection{Parameters for $\Low$}
\label{subsection: low parms}

Having established an approximation of the $low$ function that is amenable to an HE framework, we next establish the prerequisite results needed to inform the choices of \texttt{Low}'s parameters that will guarantee correctness. There are two results we create in order to help ease the proof of the theorem at the end of this section. The first of these is to establish a lower bound on this ratio over for all binary vector inputs to \texttt{Low}, as this value will directly affect the choice of parameters for the $\MaxIdx$ and, subsequently, the $\Low$ functions. 

Let us borrow Theorem 5 from \cite{cheon2019}, which gives the parameter choices $(d, d', m, t)$ to achieve any desired non-zero error 
\[|\MaxIdx(\mb{v}; d, d', m, t) - \mb{e}_{maxidx(\mb{v})}| < 2^{-\alpha}.\]
\begin{theorem}[Theorem 5 in \cite{cheon2019}]
\label{MaxIdx Parameter Choices}
    Let $\mb{v} \in [\frac{1}{2}, \frac{3}{2})^n$ be a vector with $n$ distinct entries. Define $c$ to be the ratio of the maximum value over the second maximum value such that $c \in (1, 3)$. If 
    \begin{align*}
    t &\geq \frac{1}{\log(m)}[\log(\alpha + \log(n) + 1) - \log \log (c)] \\
    \min(d, d') &\geq \log(\alpha + t + 2) + (m-1)\log(n) - 1
    \end{align*}
then the error (component-wise) of the $\MaxIdx(\mb{v}; d, d', m, t)$ algorithm compared to $\mb{e}_{maxidx(\mb{v})}$ is bounded by $2^{-\alpha}$.
\end{theorem}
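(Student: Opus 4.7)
The plan is to prove the theorem by tracking two coupled quantities through the $t$ iterations of the main loop in Algorithm~\ref{alg:MaxIdx}: (i) the ratio $\rho_i$ between the current value of the max-index coordinate and any other coordinate, and (ii) the accumulated approximation error $\varepsilon_i$ coming from the low-depth inversion calls $\texttt{Inv}$. The target is to show that under the stated lower bounds on $t$ and $\min(d,d')$ one has $\rho_t \gtrsim c^{m^t}$ with $\varepsilon_t \ll 2^{-\alpha}$, from which the component-wise error bound follows.

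First I would analyze lines 1--5 (the initialization). Since $\mb{v}\in[\tfrac{1}{2},\tfrac{3}{2})^n$, the argument of $\texttt{Inv}$ on line 1 lies in a fixed compact interval bounded away from zero, so a depth-$d'$ approximation of $1/x$ on this interval has error decaying exponentially in $d'$. This yields a vector $\mb{b}$ that well-approximates the normalized vector $\mb{v}[j]/\sum_i\mb{v}[i]$, satisfies $\sum_j \mb{b}[j]=1$ exactly (enforced by line 5), and preserves the multiplicative ratio so that $\mb{b}[k]/\mb{b}[j]\ge c$ for all $j\ne k$, where $k=maxidx(\mb{v})$.

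Next I would analyze one pass of the outer loop (lines 6--9). Ignoring inversion error, each iteration realizes the map $\mb{b}[j]\mapsto \mb{b}[j]^m/\sum_\ell \mb{b}[\ell]^m$, which raises each pairwise ratio to the $m$-th power: writing $\rho_i$ for $\min_{j\ne k}\mb{b}[k]/\mb{b}[j]$ after iteration $i$, the noise-free recursion is $\rho_i=\rho_{i-1}^m$, so $\rho_t\ge c^{m^t}$. From $\sum_j\mb{b}[j]=1$ this gives $\mb{b}[j]\le \rho_t^{-1}$ for $j\ne k$ and $|\mb{b}[k]-1|\le (n-1)\rho_t^{-1}$. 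Requiring this to be $\le 2^{-\alpha}$ and taking logs produces exactly $m^t\log c \gtrsim \alpha+\log n$, which rearranges to the stated lower bound on $t$. The constants absorbed into the ``$+1$'' and the factor of $\log(n)$ inside the log account for replacing $n-1$ by $n$ and inflating slightly to accommodate error terms.

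The main obstacle will be the error bookkeeping for $\texttt{Inv}$ at iteration $i$. The argument $\sum_\ell\mb{b}[\ell]^m$ can be as small as $\Theta(n^{1-m})$ when the distribution is still close to uniform, so the depth of the inverse-approximation circuit must scale with $(m-1)\log n$ in order to invert such shrinking inputs with enough relative precision that the per-iteration noise does not spoil the geometric amplification. This is precisely where the lower bound $\min(d,d')\ge \log(\alpha+t+2)+(m-1)\log n -1$ enters: the $(m-1)\log n$ term reflects the worst-case input magnitude, and the $\log(\alpha+t+2)$ term controls how fast per-iteration errors must decay so that their sum across $t$ iterations remains negligible compared to $2^{-\alpha}$. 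Because this statement is quoted verbatim as Theorem~5 of \cite{cheon2019}, I would follow the strategy there, treating the loop as a Newton-style concentration scheme whose contraction rate is governed by $c$ and $m$, and whose stability is governed by $d$ and $d'$; the technical crux is showing these two effects decouple sufficiently for the triangle-inequality bound to close.
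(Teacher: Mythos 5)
This statement is not proved in the paper at all: it is imported verbatim as Theorem~5 of \cite{cheon2019}, and the paper's ``proof'' is the citation itself. So there is no internal argument to compare yours against; the relevant comparison is with the proof in the cited source. Your sketch does correctly identify the two mechanisms that drive that proof: (i) the noise-free iteration $\mb{b}[j]\mapsto \mb{b}[j]^m/\sum_\ell\mb{b}[\ell]^m$ raises the max-to-runner-up ratio to $c^{m^t}$, and with $\sum_j\mb{b}[j]=1$ (enforced by the $\mb{b}[n-1]\gets 1-\sum_{j\le n-2}\mb{b}[j]$ step) this yields $|\mb{b}[k]-1|\le (n-1)c^{-m^t}$ and $\mb{b}[j]\le c^{-m^t}$ for $j\ne k$, which after taking logarithms gives the stated lower bound on $t$ up to the absorbed constants; and (ii) the argument of \texttt{Inv} can shrink to order $n^{1-m}$ (by the power-mean inequality applied to $\sum_\ell\mb{b}[\ell]^m$ with $\sum_\ell\mb{b}[\ell]=1$), which is exactly why $(m-1)\log n$ appears in the bound on $\min(d,d')$, with the $\log(\alpha+t+2)$ term controlling per-iteration inversion error.

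However, as a proof your proposal is incomplete: the step you yourself flag as ``the main obstacle''---showing that the accumulated \texttt{Inv} errors over $t$ iterations do not destroy the geometric amplification, i.e., the coupled recursion for $(\rho_i,\varepsilon_i)$ and the closing triangle-inequality bound---is not carried out but deferred to \cite{cheon2019} (``I would follow the strategy there''). That error-propagation analysis is precisely the technical content of Theorem~5 in the cited paper; without it, the specific forms of the two displayed inequalities (in particular why $\log(\alpha+t+2)$ suffices and why the ratio-preservation survives the approximate normalization, including the asymmetric treatment of the last coordinate) are asserted rather than derived. In short: your outline is a faithful reconstruction of the cited proof's strategy and would be the right road map, but it does not yet constitute an independent verification of the parameter bounds, and within this paper there is nothing more to match it against, since the result is used as a black box.
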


Of great importance to us is a lower bound on $c$, the ratio of the largest to the second largest coordinate values in the input to $\MaxIdx$'s parameters. As $c$ approaches 1, $\MaxIdx$ and $\Low$'s parameters $d, d'$, and $t$ grow without limit. For this reason, we aim to obtain a larger lower bound on $c$ across all possible (approximate binary) input vectors. We re-write the bound $| \mb{v} - \mb{v}' | < \frac{1}{2n}$ as $|\mb{v} - \mb{v}'| \leq \frac{\varepsilon}{2n}$ where $\varepsilon \in [0, 1)$ to fine-tune parameter $c$.


We compute that a lower bound on $c$ is given by $c \geq 1 + \frac{2-2\varepsilon}{6n-4+\varepsilon}$ in \textsc{Lemma}~\ref{prop:c bound for approx} in Appendix~\ref{s: Low correctness proofs}. Importantly, if $\varepsilon = 1$ (and so assume $\mb{v}'$ is approximately binary only within the bound $1/2n$ needed for \textsc{Lemma}~\ref{lemma:low_maxidx_approx} to compute $low$ via $maxidx$) then the ratio of the first to the second largest coordinates of the transformed $\mb{v}'$ can be arbitrarily close to 1. As a consequence, there will no longer exist a choice of finite parameters in the \texttt{Low} algorithm that guarantees correctness over all possible approximately-binary vectors $\mb{v}'$.
On the other hand, as $\varepsilon$ gets closer to 0, the lower bound on $c$ increases away from 1, which will allow $\texttt{Low}$ to be computed more efficiently. Thus there will be a trade-off between the computational cost of maintaining $\mb{v}'$ sufficiently close to binary throughout the boundary matrix reduction, and estimating $low$ efficiently.

The variable $\alpha$ specifies the desired level of accuracy of $\MaxIdx$ (to $2^{-\alpha}$), and informs the minimum parameters needed to attain said accuracy. \textsc{Lemma}~\ref{lemma: alpha to delta} recasts the accuracy parameter of \Low \; to an arbitrary $\delta > 0$. With this, we can specify the choice of parameters needed to approximate $low(\mb{v})$ using $\Low(\mb{v'}; d, d', m, t)$ to arbitrary accuracy. 

\begin{theorem}
\label{theorem: low parms}
Assume $\mb{v} \in \{ 0, 1 \}^n$ and $\mb{v'} \in [0, 1]^n$ are such that $\left| \mb{v} - \mb{v'} \right| \leq \frac{\varepsilon}{2n}$, for some $0 \leq \varepsilon < 1$. Choose the parameters $d, d', m$, and $t$ for the $\MaxIdx$ function, along with a pre-determined $\delta>0$, such that
\begin{align*}
    \alpha &>   \log(3) + 2\log(n) - \log(\delta)-1   \\
    t &\geq \frac{\log\Big(\alpha + 1 + \log(n) \Big) - \log \log \Big( 1 + \frac{2-2\varepsilon}{6n-4+\varepsilon} \Big)}{\log m} \\
    \min (d, d') &\geq  \log (\alpha + t + 2) + (m-1)\log (n) - 1 
\end{align*}
Then $\Low(\mb{v}'; d, d', m, t)$ has $\delta$-error. That is, 
\[\left|\Low(\mb{v'}; d, d', m, t) - low(\mb{v}) \right| < \delta.\]
\end{theorem}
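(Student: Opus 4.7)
The plan is to chain together the three prior results and solve for the accuracy parameters in terms of $\delta$. First, Theorem~\ref{theorem: lowerror} already tells us that if \texttt{MaxIdx} is accurate to within $2^{-\alpha}$ component-wise on its input, then the output of \texttt{Low} is within $\tfrac{3}{2}n(n-1)\, 2^{-\alpha}$ of $low(\mb{v})$. So to guarantee $\delta$-accuracy of \texttt{Low}, it suffices to force $\tfrac{3}{2}n(n-1)\,2^{-\alpha}<\delta$. Solving for $\alpha$ and using the loose bound $n(n-1)<n^2$ to clean up the logarithm gives exactly the stated condition $\alpha > \log(3)+2\log(n)-\log(\delta)-1$. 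This is the easy half of the argument.

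The second half is to translate a prescribed $\alpha$ into honest constraints on $t$ and $\min(d,d')$ via Theorem~\ref{MaxIdx Parameter Choices}. That theorem's bounds involve the ratio $c$ of the largest to second-largest coordinate of the vector that actually gets fed into \texttt{MaxIdx}, namely $T_{\texttt{L}}(S(\mb{v'}))$. I would invoke \textsc{Lemma}~\ref{prop:c bound for approx} to get the uniform lower bound
\[
c \;\geq\; 1+\frac{2-2\varepsilon}{6n-4+\varepsilon},
\]
which is valid because $\mb{v'}$ lies within $\varepsilon/(2n)$ of a true binary vector. Since the bound on $t$ in Theorem~\ref{MaxIdx Parameter Choices} is monotonically decreasing in $c$ (through the term $-\log\log(c)$), substituting this worst-case $c$ yields
\[
t \;\geq\; \frac{\log\!\bigl(\alpha+\log(n)+1\bigr)-\log\log\!\bigl(1+\tfrac{2-2\varepsilon}{6n-4+\varepsilon}\bigr)}{\log m},
\]
matching the statement. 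The bound on $\min(d,d')$ from Theorem~\ref{MaxIdx Parameter Choices} does not depend on $c$, so it carries over verbatim.

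Finally I would check two hygiene points. First, Theorem~\ref{MaxIdx Parameter Choices} requires $c\in(1,3)$, so I need to verify that the lower bound we use stays strictly above $1$ for $\varepsilon<1$ (it does, since then the numerator $2-2\varepsilon$ is positive) and that the chosen $c$ does not exceed $3$ for any admissible input—this follows immediately because the coordinates of $T_{\texttt{L}}(S(\mb{v'}))$ lie in $[\tfrac12,\tfrac32)$. Second, I would confirm that the assumption $|\mb{v}-\mb{v'}|\le \varepsilon/(2n)<1/(2n)$ justifies appealing to \textsc{Lemma}~\ref{lemma:low_maxidx_approx}, so that the integer value $maxidx(S(\mb{v'}))$ recovered by \texttt{MaxIdx} really does equal $low(\mb{v})$; it is only the accuracy with which the indicator vector $\mb{b}$ approximates $\mb{e}_{maxidx(S(\mb{v'}))}$ that is at stake, and that accuracy is exactly what $\alpha$ controls.

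The main obstacle I anticipate is not any single step but the bookkeeping: one must make sure the chain $\delta \Rightarrow \alpha \Rightarrow t \Rightarrow \min(d,d')$ is carried out in the correct order, that the $c$-bound is substituted into the right place (only $t$, not $\min(d,d')$), and that monotonicity of each bound in the relevant parameter is invoked so that the inequalities go the right way. Once that ordering is set, the proof reduces to assembling the cited results and collecting constants.
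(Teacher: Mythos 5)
Your proposal is correct and follows essentially the same route as the paper: the $\alpha$ condition is derived exactly as in Lemma~\ref{lemma: alpha to delta} (via Theorem~\ref{theorem: lowerror} and the bound $n(n-1)<n^2$), the $t$ condition comes from substituting the worst-case lower bound on $c$ from \textsc{Proposition}~\ref{prop:c bound for approx} into Theorem~\ref{MaxIdx Parameter Choices} using monotonicity of $-\log\log(c)$, and $\min(d,d')$ then carries over once $\alpha$, $t$, and $m$ are fixed. Your added checks that $c\in(1,3)$ and that $\varepsilon/(2n)<1/(2n)$ licenses Lemma~\ref{lemma:low_maxidx_approx} are sound hygiene that the paper leaves implicit.
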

\begin{proof}
See Appendix~\ref{s: Low parameters proofs}.
\end{proof}


As these parameters are now well-established for the $\Low$ function, we now refer to this tuple of parameters $(d_{\texttt{L}}, d'_{\texttt{L}}, m_{\texttt{L}}, t_{\texttt{L}})$ as $\mathcal{P}_{\texttt{L}}$ to avoid confusion with the upcoming \texttt{Comp} function which will have a similar parameter naming convention. Furthermore, when $\mathcal{P}_{\texttt{L}}$ is clear from context, define $$\texttt{L}_{\mb{v}} \coloneqq \Low(\mb{v}; \mathcal{P}_{\texttt{L}})$$ for ease of notation in the upcoming sections.

\subsection{$\LowComp$: HE-compatible Equality Check} 
\label{subsection: LowComp}

\textsc{Theorem}~\ref{theorem: low parms} approximates $low(\mb{x})$ and $low(\mb{y})$ via  
$\Low(\mb{x^\prime}; \mathcal{P}_{\texttt{L}})$ and $\Low(\mb{y^\prime}; \mathcal{P}_{\texttt{L}})$. One of the remaining challenges is to characterize the
equality check $low(\mb{x}) = low(\mb{y})$ using 
$\Low(\mb{x^\prime}; \mathcal{P}_{\texttt{L}})$ and $\Low(\mb{y^\prime}; \mathcal{P}_{\texttt{L}})$. The second challenge is to rewrite \eqref{eq: original modular vector update} for $\mb{z}^\prime$ so that it can be computed by avoiding the if statement and the mod 2 addition.

Suppose that $\mb{x}^\prime$ and $\mb{y}^\prime$
are two real valued vectors that are approximations of the binary vectors $\mb{x}$ and $\mb{y}$, respectively. 
We must now determine a method that takes $\mb{x}^\prime$ and $\mb{y}^\prime$ as input, and outputs $\mb{z}^\prime$ such that $\mb{z}^\prime$ approximates the binary vector
\begin{align}
\label{eq: original modular vector update}
\mb{z} = 
\begin{cases}
\mb{x} + \mb{y} \mod 2 & \text{if } low(\mb{x}) = low(\mb{y})\\
\mb{x}  & \text{if } low(\mb{x}) \not= low(\mb{y})
\end{cases}
\end{align} 
\noindent In Section~\ref{subsection: mod 2 addition}, we show that 
$\mb{z}$ in \eqref{eq: original modular vector update} can be approximated by
\begin{align}
\label{eq:approximating_column_update}
\mb{z}^\prime = \Omega(\mb{x^\prime}-\mb{y^\prime})^2 + (1-\Omega)\mb{x^\prime},
\end{align}
where the predicate $\Omega$ takes $\Low(\mb{x^\prime}; \mathcal{P}_{\texttt{L}})$ and $\Low(\mb{y^\prime}; \mathcal{P}_{\texttt{L}})$
as input, and approximates the boolean value $low(\x)==low(\y)$. We establish the theory to calculate $\Omega$ in this section.

\begin{lemma} 
\label{lem:approxlow}
Let $\mb{x}, \mb{y} \in \{0, 1\}^n$ and $\mb{x}', \mb{y}' \in [0, 1]^n$ and assume that $\mathcal{P}_{\texttt{L}}$ is chosen such that $\lvert \Low(\mb{x}'; \mathcal{P}_{\texttt{L}}) - low(\mb{x}) \rvert < \delta$ and $\lvert \Low{(\mb{y}'; \mathcal{P}_{\texttt{L}})} - low(\mb{y}) \rvert < \delta$ for some $0 < \delta < \frac{1}{4}$. Let $\phi$ be any value in the interval $(2\delta, 1-2\delta)$. Then $$\lvert \Low{(\mb{x'}; \mathcal{P}_{\texttt{L}})} - \Low{(\mb{y'}; \mathcal{P}_{\texttt{L}})} \rvert \leq \phi \text{ iff } low(\mb{x}) = low(\mb{y})$$
\end{lemma}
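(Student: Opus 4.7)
The plan is to prove both directions by straightforward applications of the triangle inequality, exploiting the fact that $low(\mb{x})$ and $low(\mb{y})$ are integers so any nonzero difference between them is at least $1$. The interval $(2\delta, 1-2\delta)$ on $\phi$ is engineered precisely so that the "equal" case sits strictly below $\phi$ and the "unequal" case sits strictly above it; note this interval is non-empty exactly because $\delta < 1/4$.

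For the forward direction ($low(\mb{x}) = low(\mb{y}) \Rightarrow |\Low(\mb{x}') - \Low(\mb{y}')| \leq \phi$), I would apply the triangle inequality
\[
\bigl|\Low(\mb{x}'; \Plow) - \Low(\mb{y}'; \Plow)\bigr| \leq \bigl|\Low(\mb{x}'; \Plow) - low(\mb{x})\bigr| + \bigl|low(\mb{x}) - low(\mb{y})\bigr| + \bigl|low(\mb{y}) - \Low(\mb{y}'; \Plow)\bigr|.
\]
By hypothesis the two outer terms are each strictly less than $\delta$, and by assumption the middle term is $0$, giving a bound of $2\delta < \phi$ (using $\phi > 2\delta$).

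For the reverse direction, I would argue by contrapositive: suppose $low(\mb{x}) \neq low(\mb{y})$. Since both are integers, $|low(\mb{x}) - low(\mb{y})| \geq 1$. By the reverse triangle inequality,
\[
\bigl|\Low(\mb{x}'; \Plow) - \Low(\mb{y}'; \Plow)\bigr| \geq \bigl|low(\mb{x}) - low(\mb{y})\bigr| - \bigl|\Low(\mb{x}'; \Plow) - low(\mb{x})\bigr| - \bigl|\Low(\mb{y}'; \Plow) - low(\mb{y})\bigr| > 1 - 2\delta.
\]
Since $\phi < 1 - 2\delta$, this yields $|\Low(\mb{x}'; \Plow) - \Low(\mb{y}'; \Plow)| > \phi$, establishing the contrapositive.

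There is no real obstacle here; the proof is essentially a two-line triangle-inequality argument in each direction. The only subtle point worth noting is why the constraint $\delta < 1/4$ appears: it is needed solely to ensure that $(2\delta, 1-2\delta)$ is a non-empty interval so that a valid threshold $\phi$ exists. Conceptually, this lemma is the bridge that converts the real-valued, approximate $\Low$ outputs into a clean Boolean predicate suitable for the $\Omega$ gate in equation \eqref{eq:approximating_column_update}; the subsequent arithmetic-circuit realization of this thresholding (via the $\Comp$ machinery) is deferred to later sections and not part of what must be proved here.
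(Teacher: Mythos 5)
Your proof is correct and is essentially the same argument as the paper's: both directions rest on the triangle inequality combined with the integrality of $low$, with the thresholds $2\delta$ and $1-2\delta$ playing identical roles. The only (immaterial) difference is organizational---you prove the forward implication directly and the reverse by contrapositive, whereas the paper argues the forward implication by contrapositive and the reverse directly.
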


\begin{proof}
Suppose that $\lvert \Low{(\mb{x}'; \mathcal{P}_{\texttt{L}})} - \Low{(\mb{y}'; \mathcal{P}_{\texttt{L}})} \rvert > \phi$. Then
\begin{align*}
    \phi &< | \Low{(\mb{x}'; \mathcal{P}_{\texttt{L}})} - \Low{(\mb{y}'; \mathcal{P}_{\texttt{L}})} | \\
    & \leq | \Low{(\mb{x}'; \mathcal{P}_{\texttt{L}})} - low(\mb{x}) | + | \Low{(\mb{y}'; \mathcal{P}_{\texttt{L}})} - low(\mb{y}) | + | low(\mb{x}) - low(\mb{y}) | \\
    &< 2\delta + | low(\mb{x}) - low(\mb{y}) |.
\end{align*}
This implies that $|low(\mb{x}) - low(\y)| > \phi - 2\delta > 0$ as $\phi > 2\delta$ by assumption. Both $low(\x)$ and $low(\y)$ are integer-valued functions, so it must be the case that $low(\x) \neq low(\y)$.

Conversely, suppose that 
\[\lvert \Low{(\mb{x}'; \mathcal{P}_{\texttt{L}})} - \Low{(\mb{y}'; \mathcal{P}_{\texttt{L}})} \rvert \leq \phi.\] Then 
\begin{align*}
    | low(\mb{x}) - low(\mb{y}) | &\leq | \Low{(\mb{x}'; \mathcal{P}_{\texttt{L}})} - low(\mb{x}) | \\
    &+ | \Low{(\mb{y}'; \mathcal{P}_{\texttt{L}})} - low(\mb{y}) | \\
    &+ | \Low{(\mb{x}'; \mathcal{P}_{\texttt{L}})} - \Low{(\mb{y}'; \mathcal{P}_{\texttt{L}})} | \\
    & < \delta + \delta + \phi
\end{align*}
And so we have that $|low(\x) - low(\y)| < 2\delta + \phi < 1$ as $\phi < 1 - 2\delta$. Again, as $low$ is an integer-valued function, it must be the case that $low(\x) = low(\y)$.
\end{proof}

\begin{remark}
Tracing the proof of Lemma \ref{lem:approxlow} also reveals that the intervals on which $\lvert \Low{(\mb{x'}; \mathcal{P}_{\texttt{L}})} - \Low{(\mb{y'}; \mathcal{P}_{\texttt{L}})} \rvert$ and $\phi$ live are disjoint, and so it will never be the case that 
\[|\Low(\x'; \mathcal{P}_{\texttt{L}}) - \Low(\y'; \mathcal{P}_{\texttt{L}})| = \phi,\] despite the statement of the lemma.

 \end{remark}

The implications of \textsc{Lemma}~\ref{lem:approxlow} is that one does not need to be very accurate in the calculation of $\Low(\x'; \mathcal{P}_{\texttt{L}})$, and in fact only needs to approximate $low(\x)$ (using $\Low(\x'; \mathcal{P}_{\texttt{L}})$) to an accuracy of $\frac{1}{4}$. If that condition is guaranteed, then one may compare the value $|\Low(\x'; \mathcal{P}_{\texttt{L}}) - \Low(\y'; \mathcal{P}_{\texttt{L}})|$ to any $2\delta < \phi < 1 - 2\delta$ to check whether the underlying low values are equal or not.

With this lemma, our strategy to compare $low$ values of two approximately binary vectors will be to exploit an approximation of the function that compares the relative size of its two inputs. First, we introduce the following function:

\begin{definition}
For $\mb{x, y} \in \{ 0, 1 \}^n$, let $l_{\x} = low(\x)$ and $l_{\y} = low(\y)$. Define $$lowcomp(l_{\x}, l_{\y}) = \begin{cases} 0, & \text{ if } l_{\x} \neq l_{\y} \\ 1, & \text{ if } l_{\x} = l_{\y} 
\end{cases}. $$
\end{definition}
\noindent The function $lowcomp$ will be used to gate the mod 2 addition of two columns in place of the conditional equality check in Algorithm \ref{alg:standard persistence alg}. In particular, for a given $\mb{x}$ and $\mb{y} \in [0, 1]^n$, the statement ``update $\mb{x}$ to $\mb{x} + \mb{y} \mod 2$, if their lows are equal'' may be reinterpreted as
\begin{align*}
    \mb{x} = \mb{x} + lowcomp(l_{\x}, l_{\y}) \mb{y} \mod 2.
\end{align*}
We now establish a \texttt{LowComp} algorithm to estimate the $lowcomp$ function for approximately binary vectors. Our formulation is based on the \texttt{Comp} algorithm, which estimates the $comp$ function given in Definition \ref{def:comp} (both introduced in \cite{cheon2019}) that compares the relative size of its inputs.


\begin{definition}[\cite{cheon2019}]
\label{def:comp}
For any non-zero real numbers $a,b$, define $$comp(a, b) = \displaystyle \lim_{k \rightarrow \infty} \frac{a^k}{a^k + b^k} = \begin{cases}
1, & \text{ if } a > b \\
\frac{1}{2}, & \text{ if } a = b \\
0, & \text{ if } a < b
\end{cases}$$
\end{definition}

The $\Comp$ algorithm (Algorithm~\ref{alg:Comp}), approximates the $comp$ function by evaluating the expression $\displaystyle \frac{a^{m^t}}{a^{m^t} + b^{m^t}}$, for $t$ a positive integer, and $m$ often chosen to be a power to 2. 


\begin{algorithm}[t]
\caption{\texttt{Comp}($a, b; d, d', m, t$) from \cite{cheon2019}}
\label{alg:Comp}
\flushleft \textbf{Input:} distinct real numbers $a, b \in [1/2, 3/2)$; $d, d', m, t \in \mathbb{N}$\\
\textbf{Output:} a real number $r \in (0, 1)$ \Comment{$r \approx 1$ if $a > b$; $r \approx 0$ if $a < b$}\\
\textbf{Depth:} $d' + 1 + t(d + \log(m) + 2)$ \\
\textbf{Complexity:} $O(d' + t(d +\log(m)))$
\begin{algorithmic}[1]
\State $I \gets \texttt{Inv}(\frac{a + b}{2}; d')$
\State $a_0 \gets \frac{a}{2}\cdot I$
\State $b_0 \gets 1 - a_0$
\For {$n \gets 0 \textbf{ to } t-1$}
    \State $I \gets \texttt{Inv}(a_n^m + b_n^m; d)$
    \State $a_{n+1} \gets a_{n}^m \cdot I$
    \State $b_{n+1} \gets 1 - a_{n+1}$
\EndFor
\State \Return $a_t$
\end{algorithmic}
\end{algorithm}

$\Comp$, along with \textsc{Lemma}~\ref{lem:approxlow}, are the building blocks we need to build $\LowComp$. Using \textsc{Lemma}~\ref{lem:approxlow}, we make the observation that 
\begin{align*}
lowcomp(\x, \y) = 1 & \Leftrightarrow low(\x) = low(\y)\\
& \Leftrightarrow \phi \geq |\Low(\x'; \mathcal{P}_{\texttt{L}}) - \Low(\y'; \mathcal{P}_{\texttt{L}})|\\
& \Leftrightarrow \phi^2 \geq (\Low(\x') - \Low(\y'))^2
\end{align*}
and so we compare $(\Low(\x'; \mathcal{P}_{\texttt{L}}) - \Low(\y'; \mathcal{P}_{\texttt{L}}))^2$ to $\phi^2$ to determine if the underlying $low$ values are equal or not. This construction removes the need to implement an HE circuit to compute absolute value at the cost of two squarings.

We make two important notes before we explicitly define $\LowComp$. The first is that, by construction, $|\Low(\x'; \mathcal{P}_{\texttt{L}}) - \Low(\y'; \mathcal{P}_{\texttt{L}})|$ and $\phi$ exist in disjoint intervals (refer to \textsc{Lemma}~\ref{lem:approxlow}'s remark), and so $\phi$ and $|\Low(\x'; \mathcal{P}_{\texttt{L}}) - \Low(\y'; \mathcal{P}_{\texttt{L}})|$ will never be equal. Thus $\LowComp$ may be treated as an approximate binary indicator function for our application. The second is that the input $(\Low(\x'; \mathcal{P}_{\texttt{L}}) - \Low(\y'; \mathcal{P}_{\texttt{L}}))^2$ is in the interval $[0, (n-1)^2]$. As the $\Comp$ function requires its inputs to be in the interval $[\frac{1}{2}, \frac{3}{2})$, we apply a linear transformation to bring values in the correct interval.

\begin{transformation}
\label{trans:T comp}
    $T_{\texttt{C}}(x) :=  \frac{1}{2} + \frac{x}{n^2}$ 
\end{transformation}

 
 Since $T_{\texttt{C}}$ is a monotonic function, the relative order of the inputs are preserved. We now explicitly define $\LowComp$ by performing $\Comp$ on $T_{\texttt{C}}(\phi^2)$ and $T_{\texttt{C}}((\texttt{L}_{\mb{x}'} - \texttt{L}_{\mb{y}'})^2)$ as described in Algorithm \ref{alg:LowComp}. 

$\LowComp$ inherits from $\Comp$ that its outputs live in (0,1) and that it can approximate $lowcomp$ arbitrarily well given appropriately chosen parameters. We formalize this in the following theorem.

\begin{theorem}
\label{theorem: lowcomp error}
Let $\mb{x}, \mb{y} \in \{0, 1\}^n$ and $\mb{x}', \mb{y}' \in [0, 1]^n$ and assume that $\mathcal{P}_{\texttt{L}}$ is chosen such that $\lvert \Low(\mb{x}'; \mathcal{P}_{\texttt{L}}) - low(\mb{x}) \rvert < \delta$ and $\lvert \Low{(\mb{y}'; \mathcal{P}_{\texttt{L}})} - low(\mb{y}) \rvert < \delta$ for some $0 < \delta < \frac{1}{4}$. Let $\phi$ be any value in the interval $(2\delta, 1-2\delta)$. Define $\LowComp$ as in Algorithm \ref{alg:LowComp}.

If the parameters in the $\Comp$ function are chosen such that 
\[|\Comp(a, b; d, d', m, t) - comp(a, b)| < \eta,\] 
then we also have $$|\LowComp(\textup{\texttt{L}}_{\mb{x}'}, \textup{\texttt{L}}_{\mb{y}'}, \phi; d, d', m, t) - lowcomp(l_{\x}, l_{\y})| < \eta.$$
\end{theorem}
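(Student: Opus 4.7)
The plan is to reduce the claim to a direct application of the assumed error bound on $\Comp$ by first exhibiting an \emph{exact} equality $lowcomp(l_{\mb{x}}, l_{\mb{y}}) = comp\bigl(T_{\texttt{C}}(\phi^2),\, T_{\texttt{C}}((\texttt{L}_{\mb{x}'} - \texttt{L}_{\mb{y}'})^2)\bigr)$. Since Algorithm~\ref{alg:LowComp} defines $\LowComp$ precisely by invoking $\Comp$ on this pair of arguments with parameters $d, d', m, t$, once such an equality is in hand the hypothesis $|\Comp(a,b;d,d',m,t) - comp(a,b)| < \eta$ applied at $a = T_{\texttt{C}}(\phi^2)$ and $b = T_{\texttt{C}}((\texttt{L}_{\mb{x}'} - \texttt{L}_{\mb{y}'})^2)$ immediately yields the desired bound.

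To obtain the exact equality, I would split into the two cases distinguishing $lowcomp$. If $low(\mb{x}) = low(\mb{y})$, then \textsc{Lemma}~\ref{lem:approxlow}, together with the remark that $|\Low(\mb{x}'; \mathcal{P}_{\texttt{L}}) - \Low(\mb{y}'; \mathcal{P}_{\texttt{L}})|$ and $\phi$ live in disjoint intervals, gives the \emph{strict} inequality $|\Low(\mb{x}'; \mathcal{P}_{\texttt{L}}) - \Low(\mb{y}'; \mathcal{P}_{\texttt{L}})| < \phi$. Squaring both nonnegative sides and applying the strictly increasing affine map $T_{\texttt{C}}(\cdot) = \tfrac{1}{2} + \tfrac{\cdot}{n^2}$ preserves the strict inequality, so that $T_{\texttt{C}}((\texttt{L}_{\mb{x}'} - \texttt{L}_{\mb{y}'})^2) < T_{\texttt{C}}(\phi^2)$, whence by Definition~\ref{def:comp} the value $comp(T_{\texttt{C}}(\phi^2), T_{\texttt{C}}((\texttt{L}_{\mb{x}'} - \texttt{L}_{\mb{y}'})^2)) = 1 = lowcomp(l_{\mb{x}}, l_{\mb{y}})$. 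The complementary case $low(\mb{x}) \neq low(\mb{y})$ is entirely symmetric: the inequality flips strictly, $comp$ evaluates to $0$, and this again matches $lowcomp$.

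A brief side check verifies that the inputs to $\Comp$ fall inside its hypothesis interval $[\tfrac{1}{2}, \tfrac{3}{2})$, which is required for the $\Comp$ error bound to apply: since $\phi \in (0,1)$ and each $\Low$-value lies within $\delta < \tfrac{1}{4}$ of an integer in $\{0, \ldots, n-1\}$, both squared quantities are nonnegative and strictly less than $n^2$, so $T_{\texttt{C}}$ maps them into $[\tfrac{1}{2}, \tfrac{3}{2})$. The one place where a misstep could hurt is the appeal to strictness: with only $\leq$ in \textsc{Lemma}~\ref{lem:approxlow}, $comp$ could return the boundary value $\tfrac{1}{2}$ and the exact equality would fail. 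So the main (minor) subtlety is invoking the disjoint-intervals remark to upgrade $\leq$ to $<$; after that, the argument collapses to substituting $lowcomp$ for the exact $comp$ value inside the $\Comp$-error hypothesis.
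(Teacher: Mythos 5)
Your proposal is correct and follows essentially the same route as the paper's proof: case-split on whether $low(\mb{x})=low(\mb{y})$, use \textsc{Lemma}~\ref{lem:approxlow} (with the disjoint-intervals remark supplying strictness) and monotonicity of $T_{\texttt{C}}$ to identify the exact value of $comp$ at the arguments fed to $\Comp$, then invoke the assumed $\eta$-bound on $\Comp$. Your explicit check that the transformed inputs land in $[\tfrac{1}{2},\tfrac{3}{2})$ and your flagging of the strictness subtlety are nice touches the paper leaves implicit, but they do not change the argument.
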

\begin{proof}
See Appendix~\ref{s: LowComp correctness proofs}.
\end{proof}
\begin{algorithm}[t]
\caption{\texttt{LowComp}($\texttt{L}_{\mb{x}'}, \texttt{L}_{\mb{y}'}, \phi; d, d', m, t$)}
\label{alg:LowComp}
\flushleft \textbf{Input:} $\texttt{L}_{\mb{x}'}, \texttt{L}_{\mb{y}'} \in [0, n-1], \phi \in (2\delta, 1-2\delta)$; $d, d', m, t \in \mathbb{N}$ \\
\textbf{Output:} A real number $r \in (0, 1)$ \Comment{$r \approx lowcomp(l_{\x}, l_{\y})$}\\
\textbf{Depth:} $d' + 2 + t(d + \log(m) + 2)$ \\
\textbf{Complexity:} $O(d' + t(d +\log m))$
\begin{algorithmic}[1]
\State $\texttt{L}_{d} \gets \texttt{L}_{\mb{x}'} - \texttt{L}_{\mb{y}'}$
\State $r \gets \Comp(T_{\texttt{C}}(\phi^2), T_{\texttt{C}}(\texttt{L}_{d}^2); d, d', m, t)$ \Comment{Algorithm \ref{alg:Comp}}
\State \Return $r$
\end{algorithmic}
\end{algorithm}
 
\subsection{Parameters for $\LowComp$}

We shall proceed with the analysis of $\LowComp$'s parameters in a similar fashion to $\Low$'s parameters in Section \ref{subsection: low parms}. Theorem 4 from \cite{cheon2019} gives lower bounds for the parameters $d, d', m$, and $t$ to achieve $2^{-\alpha}$ error in the $\Comp$ function.

\begin{theorem}[Theorem 4 in \cite{cheon2019}]
\label{Comp parms}
    Let $x, y \in [1/2, 3/2)$ satisfy $$c \leq max(x, y)/min(x, y)$$ for a fixed $c \in (1, 3)$. If 
    \begin{align*}
    t &\geq \frac{1}{\log (m)}[\log(\alpha + 1) - \log \log (c)] \\
    d &\geq \log(\alpha + t + 2) + m - 2 \\
    d' & \geq \log(\alpha + 2) - 1
    \end{align*}
    then $|\Comp(x, y; d, d', m, t) - comp(x, y)| < 2^{-\alpha}$. 
\end{theorem}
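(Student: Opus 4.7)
The plan is to decompose the analysis into two pieces that will each contribute at most $2^{-(\alpha+1)}$ to the final error: (i) the error of an idealized iteration in which every $\Inv$ call is replaced by an exact reciprocal, and (ii) the propagation of the approximation errors that $\Inv$ actually produces at its stated depth. Combining the two bounds through the triangle inequality will give the stated $2^{-\alpha}$ guarantee.

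For the idealized analysis, I first observe that the assignment $b_{n+1} \gets 1 - a_{n+1}$ maintains the exact invariant $a_n + b_n = 1$ throughout the iteration, so the ratio $r_n := a_n/b_n$ satisfies $r_{n+1} = r_n^m$ under the ideal update. Assuming without loss of generality that $a>b$, the initial normalization gives $r_0 = a/b \ge c > 1$, hence $r_t \ge c^{m^t}$. Since $comp(a,b)=1$ and the ideal iterate equals $r_t/(r_t+1)$, the ideal error is
\begin{equation*}
|a_t^{\text{ideal}} - comp(a,b)| \;=\; \frac{1}{r_t+1} \;\le\; c^{-m^t}.
\end{equation*}
Forcing $c^{-m^t} \le 2^{-(\alpha+1)}$ and taking base-2 logs gives precisely $t \ge \frac{1}{\log m}\bigl[\log(\alpha+1) - \log\log c\bigr]$, matching the hypothesis.

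For the perturbation analysis, I would invoke the error guarantee for $\Inv$ at depths $d'$ and $d$ from \cite{cheon2019}: the initial normalization introduces an additive error of order $2^{-d'}$, and each of the $t$ subsequent iterative calls introduces an additive error of order $2^{-d}$. Setting $\epsilon_n := |a_n - a_n^{\text{ideal}}|$, one writes a recursion capturing two effects. First, because $a_n, b_n \in [0,1]$, raising to the $m$-th power amplifies an additive perturbation by at most a factor $m$. Second, multiplying by the approximate reciprocal of $a_n^m + b_n^m$ is stable because the invariant $a_n + b_n = 1$ keeps this denominator bounded below by $2^{1-m}$, so fresh $\Inv$ errors enter essentially additively rather than being magnified by a small denominator. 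Unrolling the recursion over $t$ steps yields a cumulative perturbation on the order of $t\cdot m \cdot 2^{-d}$, and the lower bounds $d \ge \log(\alpha+t+2) + m - 2$ and $d' \ge \log(\alpha+2)-1$ are calibrated precisely to drive this total below $2^{-(\alpha+1)}$. The triangle inequality then delivers $|\Comp(x,y;d,d',m,t) - comp(x,y)| < 2^{-\alpha}$.

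The principal obstacle will be the error-propagation step. The ideal ratio $r_n$ grows doubly exponentially, which at first glance suggests that any perturbation in $a_n$ should also blow up catastrophically since the gap $|a_n - b_n|$ eventually shrinks and would ordinarily appear in the denominator of a sensitivity estimate. The decisive point is to abandon relative-error bookkeeping and exploit the exact identity $a_n + b_n = 1$: this identity yields a uniform lower bound on $a_n^m + b_n^m$ that does not degrade as the iteration progresses, so the additive perturbations stay controlled and the required parameter lower bounds follow from a straightforward bound on a sum that is, in the end, only linear in $t$.
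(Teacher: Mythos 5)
First, note that the paper you are working from does not prove this statement at all: it is imported verbatim as Theorem~4 of \cite{cheon2019}, so the only proof to compare against is the one in that source. The first half of your sketch does reproduce its standard skeleton correctly: the invariant $a_n+b_n=1$ is exact by construction, the ideal iteration multiplies the ratio $r_n=a_n/b_n$ to $r_n^m$, and $|a_t^{\text{ideal}}-comp(a,b)|\le c^{-m^t}$ yields exactly the stated lower bound on $t$. That part is fine.

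The gap is in the perturbation half, and it is not a detail you can defer. You assert that each call to $\Inv$ contributes an additive error ``of order $2^{-d}$'' (resp.\ $2^{-d'}$) and that the total is of order $t\cdot m\cdot 2^{-d}$, with the stated bounds on $d,d'$ ``calibrated'' to push this below $2^{-(\alpha+1)}$. But the stated bounds are only \emph{logarithmic} in $\alpha$: with $d\ge\log(\alpha+t+2)+m-2$ one has $2^{-d}\approx 2^{2-m}/(\alpha+t+2)$, so $t\,m\,2^{-d}$ is polynomially, not exponentially, small in $\alpha$ and cannot possibly be below $2^{-(\alpha+1)}$. The actual proof in \cite{cheon2019} relies on the fact that $\Inv(x;d)$ has error $(1-x)^{2^{d+1}}/x$, i.e.\ \emph{doubly} exponential decay in $d$; on inputs $a_n^m+b_n^m\ge 2^{1-m}$ this is roughly $\exp(-2^{d+2-m})\cdot 2^{m-1}$, which with the stated $d$ is about $2^{-(\alpha+t+2)}$ --- this is where the ``$+m-2$'' comes from, and the ``$+t$'' inside the logarithm is precisely what absorbs the $t$-fold accumulation. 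Relatedly, your claim that the invariant makes fresh errors enter ``essentially additively'' is unjustified: the lower bound $a_n^m+b_n^m\ge 2^{1-m}$ means the reciprocal step can scale perturbations by up to $2^{m-1}$, and the ideal map $x\mapsto x^m/(x^m+(1-x)^m)$ has derivative equal to $m$ at its repelling fixed point $x=1/2$, so near the start of the iteration additive perturbations are genuinely amplified; a naive unrolling gives growth like $m^t$, not $t\cdot m$. Your final paragraph identifies exactly this as ``the principal obstacle,'' but the identity $a_n+b_n=1$ alone does not resolve it --- controlling that amplification (via per-step errors of size about $2^{-(\alpha+t)}$ coming from the double-exponential $\Inv$ guarantee) is the substance of the proof, and it is missing here.
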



The role of $c$ in $\Comp$ is similar to $\MaxIdx$ in the Section \ref{subsection: low parms}: the closer $c = \max(a, b)/\min(a, b)$ is to 1, the larger the value for all subsequent choice of parameters, thus increasing the ``effort'' needed for the $\Comp$ function to distinguish which of the two inputs is larger. For this reason, it is our goal to bound $c = \max(a, b)/\min(a, b)$ as far from 1 as possible.

Once a $\phi$ is fixed, the only guarantee is that $T_{\texttt{C}}((\texttt{L}_{\x'} - \texttt{L}_{\x'})^2)$ is either strictly greater or less than said $T_{\texttt{C}}(\phi^2)$ (see \textsc{Lemma} \ref{lem:approxlow}'s remark). Since we are only concerned with whether $low(\x)$ and $low(\y)$ are equal or not, the ratio $c$ may be reinterpreted as 
\begin{align*}c = \frac{\max \left\{ T_{\texttt{C}}(\phi^2), T_{\texttt{C}}((\texttt{L}_{\x'} - \texttt{L}_{\x'})^2) \right\}}{\min \left\{ T_{\texttt{C}}(\phi^2), T_{\texttt{C}}((\texttt{L}_{\x'} - \texttt{L}_{\x'})^2) \right\}} > \begin{dcases*} 
\frac{T_{\texttt{C}}(\phi^2)}{T_{\texttt{C}}((\texttt{L}_{\x'} - \texttt{L}_{\x'})^2)}, & \text{ if } $low(\x) = low(\y)$ \\
\frac{T_{\texttt{C}}((\texttt{L}_{\x'} - \texttt{L}_{\x'})^2)}{T_{\texttt{C}}(\phi^2)},  & \text{ if } $low(\x) \neq low(\y)$
\end{dcases*}
\end{align*}

It follows that $$c > \min \left\{ \frac{T_{\texttt{C}}(\phi^2)}{T_{\texttt{C}}((2\delta)^2)} , \frac{T_{\texttt{C}}((1-2\delta)^2)}{T_{\texttt{C}}(\phi^2)} \right\} > 1$$ where the minimum changes depending on which case we are in. Thus, once a $\delta \in (0, 1/4)$ is chosen, this expression is variable with respect to the value of $\phi$ and thus $T_{\texttt{C}}(\phi^2)$. The optimal choice of $\phi$ will ensure the minimum of these two ratios are as far away from $1$ as possible. So, we aim to optimize the right side of this expression with respect to $\phi$: that is, to determine what value of $T_{\texttt{C}}(\phi^2)$ solves 
\begin{equation}
\max \left( \min \left\{ \frac{T_{\texttt{C}}(\phi^2)}{T_{\texttt{C}}((2\delta)^2)}, \frac{T_{\texttt{C}}((1-2\delta)^2)}{T_{\texttt{C}}(\phi^2)} \right\} \right),
\label{eq:optimize_lambda}
\end{equation} 
where the max is taken over $T_{\texttt{C}}(\phi^2)$ in  the interval $\left(T_{\texttt{C}}((2\delta)^2), T_{\texttt{C}}((1-2\delta)^2)\right)$. 

This solution to Eq. \eqref{eq:optimize_lambda} comes from a general fact about positive real numbers, which we prove in \textsc{Proposition}~\ref{prop: geometric mean}, and which establishes the following corollary:

\begin{corollary}
\label{corollary: lowcomp c bound}
The value of $T_{\texttt{C}}(\phi^2)$ which solves Eq. \eqref{eq:optimize_lambda} 
is $$T_{\texttt{C}}(\phi^2)  = \sqrt{\left(\frac{1}{2} + (\frac{2\delta}{n})^2 \right) \left(\frac{1}{2} + (\frac{1-2\delta}{n})^2 \right)}.$$ Thus,  $c > \sqrt{\frac{n^2 + 2(1-2\delta)^2}{n^2 + 2(2\delta)^2}}.$ 
\end{corollary}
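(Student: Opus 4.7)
The plan is to reduce the claim to a direct application of \textsc{Proposition}~\ref{prop: geometric mean}, which (judging from the reference in the last paragraph before the corollary) gives the standard geometric-mean solution to a max-min problem of the form $\max_{x\in(A,B)}\min\{x/A,\,B/x\}$. The whole corollary is essentially book-keeping that translates the notation of the excerpt into the hypotheses of that proposition and then unpacks the answer.

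First I would introduce the shorthand $A = T_{\texttt{C}}((2\delta)^2) = \tfrac12 + (2\delta)^2/n^2$ and $B = T_{\texttt{C}}((1-2\delta)^2) = \tfrac12 + (1-2\delta)^2/n^2$, and let $x = T_{\texttt{C}}(\phi^2)$. Since $\delta \in (0,\tfrac14)$ we have $2\delta < 1 - 2\delta$, and since $T_{\texttt{C}}$ is strictly increasing on $[0,\infty)$, the admissible range is exactly $x \in (A,B)$ with $A < B$. Equation \eqref{eq:optimize_lambda} then becomes $\max_{x \in (A,B)} \min\{x/A,\,B/x\}$.

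Next I would invoke \textsc{Proposition}~\ref{prop: geometric mean}: for positive reals $A < B$, the function $g(x) = \min\{x/A,\,B/x\}$ on $(A,B)$ is maximized when the two arguments of the min coincide, i.e.\ when $x/A = B/x$, giving $x = \sqrt{AB}$ and optimal value $\sqrt{B/A}$. (Conceptually: $x/A$ is increasing and $B/x$ is decreasing, so the max of the min lies at the crossing.) Substituting back, the optimal $T_{\texttt{C}}(\phi^2)$ is $\sqrt{AB} = \sqrt{\bigl(\tfrac12 + (2\delta/n)^2\bigr)\bigl(\tfrac12 + ((1-2\delta)/n)^2\bigr)}$, which is the first claim.

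Finally, for the bound on $c$, I would use the strict inequality already established in the paragraph preceding the corollary, $c > \min\{T_{\texttt{C}}(\phi^2)/A,\, B/T_{\texttt{C}}(\phi^2)\}$, evaluated at the maximizer, giving $c > \sqrt{B/A}$. A short algebraic simplification, multiplying numerator and denominator inside the square root by $2n^2$, yields
\[
\sqrt{\frac{B}{A}} \;=\; \sqrt{\frac{\tfrac12 + (1-2\delta)^2/n^2}{\tfrac12 + (2\delta)^2/n^2}} \;=\; \sqrt{\frac{n^2 + 2(1-2\delta)^2}{n^2 + 2(2\delta)^2}},
\]
which is the second claim. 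The only potential obstacle is purely bookkeeping: ensuring that the strict inequalities from \textsc{Lemma}~\ref{lem:approxlow}'s remark (guaranteeing $|\texttt{L}_{\x'} - \texttt{L}_{\y'}| \ne \phi$) line up with the open interval in the proposition so that the supremum is realized as a strict lower bound on $c$, rather than an equality; this is why the conclusion is stated as $c >$ rather than $c \ge$.
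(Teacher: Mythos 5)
Your proposal is correct and follows essentially the same route as the paper: both identify $T_{\texttt{C}}((2\delta)^2)$ and $T_{\texttt{C}}((1-2\delta)^2)$ as the endpoints, apply \textsc{Proposition}~\ref{prop: geometric mean} to obtain the geometric-mean choice $T_{\texttt{C}}(\phi^2)=\sqrt{AB}$ with optimal value $\sqrt{B/A}$, and then feed this into the strict inequality $c > \min\{T_{\texttt{C}}(\phi^2)/A,\, B/T_{\texttt{C}}(\phi^2)\}$ before simplifying to $\sqrt{(n^2+2(1-2\delta)^2)/(n^2+2(2\delta)^2)}$. Your extra remark about why the bound is strict (the disjoint-interval observation from \textsc{Lemma}~\ref{lem:approxlow}) matches the paper's reasoning as well.
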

\begin{proof}
See Appendix~\ref{s: LowComp parameters proofs}.
\end{proof}

Having determined the bottleneck value $c$, we explicitly construct a choice of parameters for $\LowComp$ to achieve any desired level of accuracy (which has been re-contextualized from the $2^{-\alpha}$ error in $\Comp$ to an arbitrary $\eta$ error in $\LowComp$, see \textsc{Lemma}~\ref{lemma: alpha to eta}). 

\begin{theorem}
\label{theorem: lowcomp parms}
Let $\mb{x}, \mb{y} \in \{0, 1\}^n$ and $\mb{x}', \mb{y}' \in [0, 1]^n$ and assume that $\mathcal{P}_{\texttt{L}}$ is chosen such that $\lvert \Low(\mb{x}'; \mathcal{P}_{\texttt{L}}) - low(\mb{x}) \rvert < \delta$ and $\lvert \Low{(\mb{y}'; \mathcal{P}_{\texttt{L}})} - low(\mb{y}) \rvert < \delta$ for some $0 < \delta < \frac{1}{4}$. Define $\LowComp$ as in Algorithm \ref{alg:LowComp}, where we explicitly pick $$\phi = n\sqrt{\sqrt{\left(\frac{1}{2} + (\frac{2\delta}{n})^2 \right) \left(\frac{1}{2} + (\frac{1-2\delta}{n})^2 \right)} - \frac{1}{2}}.$$

If the parameters in the $\Comp$ function are chosen such that
\begin{align*}
    \alpha &> -\log(\eta) \\
    t &\geq \frac{1}{\log (m)} \left[ \log(\alpha + 2) - \log \log \left( \sqrt{\frac{n^2 + 2(1-2\delta)^2}{n^2 + 2(2\delta)^2}} \right) \right]\\
    d &\geq \log(\alpha + t + 2) + m - 2 \\
    d' &\geq \log(\alpha + 2) - 1  
\end{align*}
then $\LowComp$ has $\eta$-error. That is, $$|\LowComp(\textup{\texttt{L}}_{\mb{x}'}, \textup{\texttt{L}}_{\mb{y}'}, \phi; d, d', m, t) - lowcomp(l_{\x}, l_{\y})| < \eta$$
\end{theorem}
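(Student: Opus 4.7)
The plan is to compose three earlier results: Corollary \ref{corollary: lowcomp c bound} (optimized lower bound on the ratio $c$ for the chosen $\phi$), Theorem \ref{Comp parms} (parameter conditions guaranteeing $2^{-\alpha}$ error in \Comp), and Theorem \ref{theorem: lowcomp error} (transfer of error from \Comp{} to \LowComp).

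First I would verify that the specified value of $\phi$ realizes the optimum of Eq.~\eqref{eq:optimize_lambda}. Since $T_{\texttt{C}}(x) = \tfrac{1}{2} + x/n^2$, setting $T_{\texttt{C}}(\phi^2)$ equal to the geometric mean from Corollary \ref{corollary: lowcomp c bound} gives
\[\tfrac{1}{2} + \frac{\phi^2}{n^2} = \sqrt{\left(\tfrac{1}{2} + (2\delta/n)^2\right)\left(\tfrac{1}{2} + ((1-2\delta)/n)^2\right)},\]
which, solved for $\phi > 0$, yields exactly the expression in the statement. With this $\phi$ fixed, the corollary guarantees that the ratio $c$ between the larger and smaller of $T_{\texttt{C}}(\phi^2)$ and $T_{\texttt{C}}((\texttt{L}_{\mb{x}'} - \texttt{L}_{\mb{y}'})^2)$ is bounded below by $\sqrt{(n^2 + 2(1-2\delta)^2)/(n^2 + 2(2\delta)^2)}$, uniformly over both cases ($low(\mb{x}) = low(\mb{y})$ or not), which is exactly the quantity appearing inside the inner logarithm of the $t$-condition.

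Next I would translate the target $\eta$-error to the exponential parameter $\alpha$ appearing in Theorem \ref{Comp parms}. The assumption $\alpha > -\log(\eta)$ gives $2^{-\alpha} < \eta$, which is the content of Lemma \ref{lemma: alpha to eta} invoked here. The two inputs passed to $\Comp$ in line 2 of Algorithm \ref{alg:LowComp}, namely $T_{\texttt{C}}(\phi^2)$ and $T_{\texttt{C}}((\texttt{L}_{\mb{x}'} - \texttt{L}_{\mb{y}'})^2)$, lie in $[1/2, 3/2)$ by construction of $T_{\texttt{C}}$ (since $(\texttt{L}_{\mb{x}'}-\texttt{L}_{\mb{y}'})^2 \in [0,(n-1)^2]$ and $\phi^2 < n^2$), and their ratio is bounded below by the $c$-bound just established. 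Substituting this bound into the $\log\log(c)$ term of Theorem \ref{Comp parms}, and noting that the $\log(\alpha+2)$ appearing in our $t$-bound dominates the $\log(\alpha+1)$ appearing in that theorem, the displayed conditions on $t$, $d$, and $d'$ imply the Theorem \ref{Comp parms} hypotheses. Hence $|\Comp - comp| < 2^{-\alpha} < \eta$ on the two relevant inputs.

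Finally, Theorem \ref{theorem: lowcomp error} transfers this bound directly to \LowComp: the hypothesis that $|\Comp - comp| < \eta$ on the relevant inputs immediately yields
\[|\LowComp(\texttt{L}_{\mb{x}'}, \texttt{L}_{\mb{y}'}, \phi; d, d', m, t) - lowcomp(l_{\mb{x}}, l_{\mb{y}})| < \eta,\]
the desired conclusion. The main (and essentially only) obstacle is the bookkeeping to confirm that the explicit $\phi$ realizes the optimum from Corollary \ref{corollary: lowcomp c bound} and that the substitution of the $c$-bound through the $\log\log$ term aligns cleanly with the stated $t$-condition; beyond that, the proof is a straightforward composition of the cited results.
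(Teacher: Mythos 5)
Your proposal is correct and follows essentially the same route as the paper's own proof: verify that the chosen $\phi$ realizes the geometric-mean optimum of Corollary~\ref{corollary: lowcomp c bound}, push the resulting lower bound on $c$ through the $\log\log$ term of Theorem~\ref{Comp parms}, convert $\eta$ to $\alpha$ via Lemma~\ref{lemma: alpha to eta}, and transfer the $\Comp$ error to $\LowComp$ via Theorem~\ref{theorem: lowcomp error}. Your extra remarks (checking the inputs lie in $[\frac{1}{2},\frac{3}{2})$ and noting that the stated $\log(\alpha+2)$ dominates the $\log(\alpha+1)$ required by Theorem~\ref{Comp parms}) are fine and only make the bookkeeping more explicit than the paper's version.
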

\begin{proof}
See Appendix~\ref{s: LowComp parameters proofs}.
\end{proof}

\begin{remark}
$\LowComp$ can now be thought of as a function of only two inputs ($\texttt{L}_{\x'}$ and $\texttt{L}_{\y'}$), as we will always choose this optimal value of $\phi$.
The theorem also implies a trade-off between $\delta$ and $\eta$. Indeed, estimating $low$ using $\Low$ to a high degree requires less ``effort'' for $\Comp$ to distinguish the (in)equality of two $\Low$ values. Similarly, less accurate $low$ estimates will require $\Comp$ to do more of the heavy-lifting. This intuition is confirmed by the dependence on $\delta$ of the lower bound on $c$. As $\delta$ approaches 0, the bound on $c$ increases further away from 1, causing our choice of parameters for $\LowComp$ to get smaller. On the flip side, as $\delta$ approaches its upper limit of $1/4$, then $c$ may get arbitrarily close to 1, causing $\LowComp$'s parameters to get arbitrarily large. We refer to these parameters as $\Pcomp = (d_{\texttt{C}}, d'_{\texttt{C}}, m_{\texttt{C}}, t_{\texttt{C}})$.
\end{remark}

 
\subsection{Conditional modular addition of vectors}
\label{subsection: mod 2 addition}

For a given $\mb{x}$ and $\mb{y} \in [0, 1]^n$, the statement ``update $\mb{x}$ to $\mb{x} +\mb{y} \mod 2$, if their $low$ values are equal'' from Equation \ref{eq: original modular vector update} may be reinterpreted as
\begin{align}
\label{eq:mod2}
    \mb{x} = \mb{x} + lowcomp(l_{\x}, l_{\y}) \mb{y} \mod 2.
\end{align}
Furthermore, addition modulo 2 can be recast as a polynomial operation using the observation that for any two $a, b \in \{ 0, 1\}$, the operation $(a-b)^2 = a + b \mod{2}$.
Thus, we may rewrite \eqref{eq: original modular vector update} as 
\begin{equation*}
    \mb{x} = lowcomp(l_{\x}, l_{\y})(\mb{x - y})^2 + (1 - lowcomp(l_{\x}, l_{\y}))\mb{x},
\end{equation*}
taking all operations component-wise, to remove mod 2 addition. 

We may then approximate this operation using $\Low$ to esimate $low$ and $\LowComp$ to estimate $lowcomp$. That is, the operation we will be performing on approximate binary vectors is
\begin{equation}
    \label{eq: gated mod 2 addition}
    \mb{x}' = \LowComp(\textup{\texttt{L}}_{\mb{x}'}, \textup{\texttt{L}}_{\mb{y}'})(\mb{x' - y'})^2 + (1 - \LowComp(\textup{\texttt{L}}_{\mb{x}'}, \textup{\texttt{L}}_{\mb{y}'}))\mb{x'},
\end{equation}
as alluded to in Eq. \eqref{eq:approximating_column_update} in Section~\ref{subsection: LowComp}.
\subsection{Modifying the Logical Structure of \texttt{Reduce}}
\label{s: ModifiedLogic}

The main operation in $\Reduce$ (Algorithm~\ref{alg:standard persistence alg}) is gated by a conditional \textbf{while} loop. As mentioned before, conditional statements cannot be explicitly implemented (or traced) over ciphertexts. Therefore, we need to rewrite lines 2-6 
in Algorithm~\ref{alg:standard persistence alg} so that they are HE-compatible.
This is done by replacing the \textbf{while} loop with a double nested \textbf{for} loop, each of which run through all preceding column indices. Assume $low(\mb{\Delta}_i) \neq low(\mb{\Delta}_k)$ for all $0 \leq i \neq k < j$, as is the case when the $\Reduce$ algorithm, applied to a boundary matrix $\mb{\Delta}$, first encounters column $j$. If we loop through the preceding $j$ columns once, comparing each $\mb{\Delta}_k, k=0\ldots, j-1$ to $\mb{\Delta}_j$, either $low(\mb{\Delta}_k) = low(\mb{\Delta}_j)$ for some $k<j$ or not. In the latter case, we know $\mb{\Delta}_j$ is already in reduced form and will not change---no matter how many times one loops again through the preceding $j-1$ columns---since the (binary) addition only happens when the $low$'s of two columns match. On the other hand, if some  $low(\mb{\Delta}_k) = low(\mb{\Delta}_j)$ for some $k<j$, $\mb{\Delta}_j \leftarrow \mb{\Delta}_j + \mb{\Delta}_k \mod 2$ will change $\mb{\Delta}_j$, and in particular, this addition necessarily causes the $low(\mb{\Delta}_j)$ to decrease. Thus, after such an update this new $\mb{\Delta}_j$ will never again be equal to $\mb{\Delta}_k$. In other words each column vector will only update column $j$ at most once. Without any assumptions about the order in which preceding columns update $\mb{\Delta}_j$, we simply loop over the preceding columns enough times to guarantee every vector which should have updated column $j$ has done so. This requires exactly $j$ loops over all preceding columns since each preceding column can only update $\mb{\Delta}_j$ at most once. For the base case, note that column $j=0$ is trivially in reduced form and $\mb{\Delta}_1$ will certainly be in reduced form after a single comparison with $\mb{\Delta}_0$. This aligns with the worst case complexity for the original $\Reduce$ algorithm: $O(j^2)$ for column $j$, $O(n^3)$ overall \cite{edelsharer}.
In Section \ref{s: Secure low}, we modified the existing $\MaxIdx$ from \cite{cheon2019} to attain the $\Low$ algorithm to estimate $low$. We have already discussed how to check the equality of two $low$ values using $\Low$ and $\LowComp$ in Section \ref{subsection: LowComp}. Finally, the mod 2 addition over rational numbers was constructed in Section \ref{subsection: mod 2 addition}. With all of this combined, we may now rewrite the main block of the \texttt{Reduce} algorithm as written in lines 6-9 of Algorithm \ref{alg:secure persistence alg} in a way which makes it compatible with each approximate algorithm on approximate vectors framework.

  
\subsection{An HE-Compatible \texttt{Reduce}}
\label{s: HE-Reduce-Alg}
As the challenges as listed
in Section~\ref{subsection: HE}
have now been addressed in Sections~\ref{s: Secure low}-\ref{s: ModifiedLogic}, we now present Algorithm \ref{alg:secure persistence alg}, which 
 is an HE-compatible version
of $\scReduce$ and which can take an encrypted boundary matrix as input and reduce it using HE-operations in the ciphertext space.
\begin{algorithm}[t]
\caption{\texttt{HE-Reduce}($\mb{\Delta}; \mathcal{P}_{\texttt{L}}, \mathcal{P}_{\texttt{C}}$)}
\label{alg:secure persistence alg}
\flushleft \textbf{Input:} A boundary matrix $\mb{\Delta} = [\: \mb{\Delta}_0 \: | \: \mb{\Delta}_1 \: | \: ... \: | \: \mb{\Delta}_{n-1} \: ] \in \mathbb{Z}_2^{n \times n}$ \\
\textbf{Output:} A matrix $\mb{R}' \in [0, 1]^{n \times n}$ \Comment{Approximates Algorithm \ref{alg:standard persistence alg}}
\begin{algorithmic}[1]

\State $\textup{\texttt{L}}_0 \gets \Low(\mb{\Delta}_0; \mathcal{P}_{\texttt{L}})$ \Comment{Algorithm \ref{alg:low}}
\State $\mb{\Delta}'_0 \gets \mb{\Delta}_0$

\For {$j \gets 1$ \textbf{to} $n-1$}
    \State $\textup{\texttt{L}}_j \gets \Low(\mb{\Delta}_j; \mathcal{P}_{\texttt{L}})$ \Comment{Algorithm \ref{alg:low}}
    \State $\mb{\Delta}'_j \gets \mb{\Delta}_j$
    \For {$k \gets 0$ \textbf{to} $j-1$} \Comment{Section \ref{s: ModifiedLogic}}
        \For {$j_0 \gets 0$ \textbf{to} $j-1$} 
        \State $\Omega \gets \LowComp(\textup{\texttt{L}}_{j_0}, \textup{\texttt{L}}_j; \mathcal{P}_{\texttt{C}})$ \Comment{Algorithm \ref{alg:LowComp}}
        \State $\mb{\Delta}'_{j} \gets \Omega  (\mb{\Delta}'_j - \mb{\Delta}'_{j_0})^2 + (1 - \Omega)  \mb{\Delta}'_j$ \Comment{Equation \ref{eq: gated mod 2 addition}}
        \State $\textup{\texttt{L}}_j \gets \Low(\mb{\Delta}'_j; \mathcal{P}_{\texttt{L}})$ \Comment{Algorithm \ref{alg:low}}
        \EndFor
    \EndFor
\EndFor
\State \Return $\mb{R}' = [\: \mb{\Delta}'_0 \: | \: \mb{\Delta}'_1 \: | \: ... \: | \: \mb{\Delta}'_{n-1} \: ]$
\end{algorithmic}
\end{algorithm}
We note that the moment we do the very first column addition, vectors have moved from $\{0, 1\}^n$ to $[0, 1]^n$, requiring the need for all algorithms to be compatible with approximate binary vectors. For this reason, we must have a guarantee of correctness, which is a function of the controllable errors in our approximation variables: $\mb{v}'$ (\textsc{Theorem}~\ref{theorem: lowerror}), $\Low$ (\textsc{Lemma}~\ref{lem:approxlow}) , and $\LowComp$ (Section \ref{subsection: low parms}). 


As long as $|\mb{v}' - \mb{v}| < 1/2n$, we know that $\Low(\mb{v}'; \mathcal{P}_{\texttt{L}})$ will approximate $low(\mb{v})$ as accurately as wanted. And as long as $\Low(\mb{v}'; \mathcal{P}_{\texttt{L}})$ estimates $low(\mb{v})$ to within $1/4$, $\LowComp$ is able to distinguish between $low(\x)$ and $low(\y)$ using $\Low(\mb{x}'; \mathcal{P}_{\texttt{L}})$ and $\Low(\mb{y}'; \mathcal{P}_{\texttt{L}})$. $\LowComp$ directly defines an approximately binary indicator 
$$\Omega \gets \LowComp(\textup{\texttt{L}}_{j_0}, \textup{\texttt{L}}_j; \mathcal{P}_{\texttt{C}})$$ which will be used to perform the ``mod 2'' addition, which will naturally have accumulating non-zero errors (determined by $\eta$). The finiteness of the algorithm guarantees the existence of an $\eta$ such that the accumulation of errors never exceeds the maximum threshold of $1/2n$. In a strict sense, $\HEReduce$ only fails to produce the correct reduced boundary matrix if the maximum error in some component is 1/2 or larger. If $|\Reduce(\mb{\Delta}) - \mb{R}'| < 1/2$, then $\texttt{Round}(\mb{R}') = \Reduce(\mb{\Delta}),$ where $\texttt{Round}$ casts entries to the nearest integer. This condition is guaranteed by the stricter requirement that errors are within $1/2n$.

\section{Complexity and Implementation Analysis}
\label{s: Complexity}
As in all HE-compatible functions, there is particular interest in $\HEReduce$'s complexity and depth to understand the noise growth that a ciphertext will accumulate as it passes through the algorithm. We will prove a more general statement that establishes the depth of our algorithm on an $n \times n$ boundary matrix. We note that while we establish the textbook version of the $\Reduce$ algorithm as $\HEReduce$, an immediate improvement to the algorithm to make it even more HE-compatible is easily seen. We implement this verison, $\HEReduceOptimized$, and analyze its depth and complexity.

\subsection{Analysis of $\HEReduceOptimized$}

\label{ss: HE-Optimized-Reduce}

In our implementation, we use Algorithm~\ref{alg:secure persistence alg optimized}, which is a slightly modified version of Algorithm~\ref{alg:secure persistence alg}. Here, the \texttt{Low} computation in line~10 in 
Algorithm~\ref{alg:secure persistence alg} is now pushed out of the \textbf{for} loop (see line~14 in Algorithm~\ref{alg:secure persistence alg optimized}) and the repetitive update operations
\[\mb{\Delta}'_{j} \gets \Omega   (\mb{\Delta}'_j - \mb{\Delta}'_{j_0})^2 + (1 - \Omega)   \mb{\Delta}'_j,\ j_0=0,...,j-1\]
in line~9 in Algorithm~\ref{alg:secure persistence alg} are now replaced by a single cumulative update operation (line~13 in Algorithm~\ref{alg:secure persistence alg optimized}), which can be explicitly rewritten as
\begin{equation}
\label{eq: cumulative low}
\mb{\Delta}'_{j} \gets \sum_{j_0=0}^{j-1}{\Omega_{j_0,j}((\mb{\Delta}'_j - \mb{\Delta}'_{j_0})^2)} + (1 - \sum_{j_0=0}^{j-1}{\Omega_{j_0,j}})   \mb{\Delta}'_j,
\end{equation}
where $\Omega_{j_0,j} = \LowComp(\textup{\texttt{L}}_{j_0}, \textup{\texttt{L}}_j; \mathcal{P}_{\texttt{C}})$.
The correctness follows from the fact that 
$\Omega_{j_0,j}$ is either approximately zero for all $j_0=0,...,j-1$ except for at most one value of $j_0=k$ (where it is approximately one), whence we have either
$\mb{\Delta}'_{j}$ stays approximately the same 
or is updated to
$\mb{\Delta}'_{j} \approx (\mb{\Delta}'_j - \mb{\Delta}'_{k})^2$,
as required.

\begin{algorithm}[t]
\caption{\texttt{HE-Reduce-Optimized}($\mb{\Delta}; \mathcal{P}_{\texttt{L}}, \mathcal{P}_{\texttt{C}}$)}
\label{alg:secure persistence alg optimized}
\flushleft \textbf{Input:} A boundary matrix $\mb{\Delta} = [\: \mb{\Delta}_0 \: | \: \mb{\Delta}_1 \: | \: ... \: | \: \mb{\Delta}_{n-1} \: ] \in \mathbb{Z}_2^{n \times n}$ \\
\textbf{Output:} A matrix $\mb{R}' \in [0, 1]^{n \times n}$ \Comment{Approximates Algorithm \ref{alg:standard persistence alg}}
\begin{algorithmic}[1]
\State $\textup{\texttt{L}}_0 \gets \Low(\mb{\Delta}_0; \mathcal{P}_{\texttt{L}})$ \Comment{Algorithm \ref{alg:low}}
\State $\mb{\Delta}'_0 \gets \mb{\Delta}_0$
\For {$j \gets 1$ \textbf{to} $n-1$}
    \State $\textup{\texttt{L}}_j \gets \Low(\mb{\Delta}_j; \mathcal{P}_{\texttt{L}})$ \Comment{Algorithm \ref{alg:low}}
    \State $\mb{\Delta}'_j \gets \mb{\Delta}_j$
    \For {$k \gets 0$ \textbf{to} $j-1$} \Comment{Section \ref{s: ModifiedLogic}}
        \State $\texttt{cumLeft} = 0,\ \texttt{cumOmega} = 0$
        \For {$j_0 \gets 0$ \textbf{to} $j-1$} 
        \State $\Omega \gets \LowComp(\textup{\texttt{L}}_{j_0}, \textup{\texttt{L}}_j; \mathcal{P}_{\texttt{C}})$ \Comment{Algorithm \ref{alg:LowComp}}
        \State $\texttt{cumOmega} \gets \texttt{cumOmega} + \Omega$
        \State $\texttt{cumLeft}\gets \texttt{cumLeft} + \Omega   (\mb{\Delta}'_j - \mb{\Delta}'_{j_0})^2$
        \EndFor
        \State $\mb{\Delta}'_{j} \gets \texttt{cumLeft} + (1 - \texttt{cumOmega})   \mb{\Delta}'_j$ \Comment{Equation \ref{eq: cumulative low}}
        \State $\textup{\texttt{L}}_j \gets \Low(\mb{\Delta}'_j; \mathcal{P}_{\texttt{L}})$
    \EndFor
\EndFor
\State \Return $\mb{R}' = [\: \mb{\Delta}'_0 \: | \: \mb{\Delta}'_1 \: | \: ... \: | \: \mb{\Delta}'_{n-1} \: ]$
\end{algorithmic}
\end{algorithm}

\begin{theorem}
\label{thm: depth and complexity optimized}
    Let $\mathbf{B} \in \mathbb{Z}_2^{n \times m}$ be a binary matrix with $n \geq m$. Furthermore suppose the tuples of parameters $\Plow = (d_{\texttt{L}}, d'_{\texttt{L}}, m_{\texttt{L}}, t_{\texttt{L}})$ and $\Pcomp = (d_{\texttt{C}}, d_{\texttt{C}}', m_{\texttt{C}}, t_{\texttt{C}})$ are given which give depth $D_{\texttt{L}} = d_{\texttt{L}} + 1 + t_{\texttt{L}}(d'_{\texttt{L}} + \log(m_{\texttt{L}}) + 2)$ and $D_{\texttt{C}} = d_{\texttt{C}} + 1 + t_{\texttt{C}}(d'_{\texttt{C}} + \log( m_{\texttt{C}}) + 2)$ to the $\Low$ and $\Comp$ functions, respectively. 
    Then, the depth of the $\HEReduceOptimized$ (Algorithm \ref{alg:secure persistence alg optimized}) is $\frac{m(m-1)}{2}[D_L + D_C + 1]$ and its complexity is
    \begin{align*}
\mathcal{O} \big(m^3[1 + d'_{\texttt{C}} + t_{\texttt{C}}(d_{\texttt{C}} + \log(m_{\texttt{C}}))] + m^2[d'_{\texttt{L}} + t_{\texttt{L}} (d_{\texttt{L}} + m \log (m_{\texttt{L}}))] \big).
\end{align*}
\end{theorem}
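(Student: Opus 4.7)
The plan is a direct counting argument that tracks the multiplicative depth along the longest data-dependency path and the total arithmetic operation count separately. I will partition the work of \HEReduceOptimized\ into (i) the triple-nested loop body that performs a $\LowComp$ and updates $\texttt{cumLeft}$/$\texttt{cumOmega}$, (ii) the per-middle-iteration $\Low$ refresh on $\mb{\Delta}'_j$, and (iii) the initial $\Low$ calls at lines~1 and~4. Both the depth and complexity claims then reduce to counting how many times each of these pieces executes and how much each execution contributes.

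For the depth bound, the key per-iteration observation is that one sweep of the middle $k$-loop during the processing of column $j$ adds exactly $D_{\texttt{L}} + D_{\texttt{C}} + 1$ levels to the depth of $\texttt{L}_j$: namely $D_{\texttt{C}}$ for the $\LowComp$ call, one for the cumulative update $\texttt{cumLeft} + (1 - \texttt{cumOmega})\,\mb{\Delta}'_j$ (the squaring $(\mb{\Delta}'_j - \mb{\Delta}'_{j_0})^2$ lives in a parallel branch whose depth is dominated by that of $\Omega$, since $D_{\texttt{C}} \geq 1$), and $D_{\texttt{L}}$ for the subsequent $\Low$ refresh on the updated $\mb{\Delta}'_j$. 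Because column $j$ undergoes $j$ middle iterations, and at the first such iteration the $\LowComp$ input $\texttt{L}_{j-1}$ already carries depth $d_{j-1} + D_{\texttt{L}}$ which exceeds the freshly initialized $\texttt{L}_j$ at depth $D_{\texttt{L}}$, I obtain the recurrence $d_j = d_{j-1} + j(D_{\texttt{L}} + D_{\texttt{C}} + 1)$ with $d_0 = 0$. Its solution $d_j = \tfrac{j(j+1)}{2}(D_{\texttt{L}} + D_{\texttt{C}} + 1)$, evaluated at $j = m-1$, yields the claimed depth of the output $\mb{R}'$.

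For the complexity, a component-wise count gives the answer. The inner body of the triple-nested loop executes $\sum_{j=1}^{m-1} j^2 = \mathcal{O}(m^3)$ times, and each execution performs one $\LowComp$ of cost $\mathcal{O}(d'_{\texttt{C}} + t_{\texttt{C}}(d_{\texttt{C}} + \log m_{\texttt{C}}))$ together with a constant number of ciphertext additions and multiplications, contributing $\mathcal{O}\bigl(m^3[1 + d'_{\texttt{C}} + t_{\texttt{C}}(d_{\texttt{C}} + \log m_{\texttt{C}})]\bigr)$ in total. The middle $k$-loop executes $\sum_{j=1}^{m-1} j = \mathcal{O}(m^2)$ times, each followed by a $\Low$ refresh of cost $\mathcal{O}(d'_{\texttt{L}} + t_{\texttt{L}}(d_{\texttt{L}} + m \log m_{\texttt{L}}))$ under the convention that vectors have length $\mathcal{O}(m)$, contributing $\mathcal{O}\bigl(m^2[d'_{\texttt{L}} + t_{\texttt{L}}(d_{\texttt{L}} + m \log m_{\texttt{L}})]\bigr)$; the $\mathcal{O}(m)$ initial $\Low$ calls at lines~1 and~4 are subsumed by this term. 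Adding these contributions gives the stated bound.

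The main obstacle I anticipate lies in the depth argument rather than the counting: the depth of $\texttt{L}_j$ cascades across columns via the reuse of finalized $\texttt{L}_{j_0}$'s from previously processed columns, so one must verify that at the first middle iteration of column $j$ the $\LowComp$ input depth is controlled by $\ell_{j-1} = d_{j-1} + D_{\texttt{L}}$ rather than by the (smaller) freshly initialized $\texttt{L}_j$, and that in every subsequent middle iteration the just-refreshed $\texttt{L}_j$ itself becomes the dominant input. Once this dominance pattern is pinned down, the clean additive increment $D_{\texttt{L}} + D_{\texttt{C}} + 1$ per middle iteration is justified, the recurrence solves immediately, and the complexity summation is routine.
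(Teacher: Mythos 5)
Your proposal is correct and follows essentially the same argument as the paper: a per-middle-iteration depth increment of $D_{\texttt{L}}+D_{\texttt{C}}+1$ (with the squaring branch dominated by the $\LowComp$ branch and the first iteration of column $j$ dominated by the previously finalized $\texttt{L}_{j_0}$'s), accumulated over the $\frac{m(m-1)}{2}$ middle iterations, together with the count of $\mathcal{O}(m^3)$ $\LowComp$ calls and $\mathcal{O}(m^2)$ $\Low$ calls for the complexity. The paper packages the depth bookkeeping as an induction on the number of columns rather than your explicit recurrence $d_j = d_{j-1} + jD$, but these are the same computation, and your dominance analysis is if anything more explicit than the paper's.
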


\begin{proof}
We proceed with an induction on $m$. For the base case, note that column $j=0$ is trivially in reduced form and $\mb{\Delta}_1$ will certainly be in reduced form after a single comparison with $\mb{\Delta}_0$. This aligns with the worst case complexity for the original $\Reduce$ algorithm: $O(j^2)$ for column $j$, $O(n^3)$ overall \cite{edelsharer}.

For the inductive hypothesis, assume that for all $j \leq m-1$, that the depth of the $\HEReduceOptimized$ algorithm, after termination on a $n \times j$ matrix, is $d(j) = \frac{j(j-1)}{2} D$. 

Now consider an $n \times m$ matrix $\mathbf{B} = [\: \x_0 \: | \: ... \: | \: \x_{m-2} \: | \: \x_{m-1} \: ] \in \mathbb{Z}_2^{n \times m}$. Then the sub-matrix $\mathbf{B}'$ obtained by excluding the last column $\x_{m-1}$ is an $n \times (m-1)$ matrix, and thus has depth $d(m-1) = \frac{(m-1)(m-2)}{2} D$ by the inductive hypothesis. Let us now focus on the last column, $\mb{x}_{m}$. Consider the outer loop corresponding to $k = 0$ in the $\HEReduceOptimized$ algorithm. After the first inner loop finishes, we have the depth of column $\mb{x}'_{m}$ is exactly $d(m-1) + D = \frac{(m-1)(m-2)}{2} D + D$, where the last $D$ term is added from the very last update.

However, in $\HEReduceOptimized$, for all subsequent $k = 1, ..., m-1$, every $k$ loop adds exactly $D$ to the depth only one time. This is because every run of the inner $j_0$ \textbf{for} loop runs in parallel with ciphertexts of lower depth than the most recent update of $\mb{x}'_m$. A counting argument will yield that the depth of column $\mb{x'}_m$ after all loops are completed is $[\frac{(m-1)(m-2)}{2} D + D] + [(m-2)D] = \frac{m(m-1)}{2}D,$ thus completing the induction.


As for the complexity, the optimized algorithm calls $\Low$ (which has complexity $O(m + d_L' + t_L(d_L + m \log m_L))$ exactly $m(m-1)/2$ times but still calls $\LowComp$ (which has complexity $O(d_C' + t_C(d_C +\log m_C))$) exactly $m(m-1)(2m-1)/6$ times. Thus, the overall complexity is as stated. 
\end{proof}

\begin{remark}
This algorithm performed on a boundary matrix $\mb{\Delta} \in \mathbb{Z}_2^{n \times n}$ is depth $n(n-1)/2 \left[ D_{\texttt{L}} + D_{\texttt{C}} + 1 \right]$ and cost $\mathcal{O}(n^3 + n^2[d'_{\texttt{L}} + t_{\texttt{C}}(d_{\texttt{L}} + n)])$ for the choice of $m_\texttt{C} = m_\texttt{L} = 2$, and assuming that $ d'_{\texttt{L}} >  d'_{\texttt{C}}$, $d_{\texttt{L}} >  d_{\texttt{C}}$, and $t_{\texttt{C}} \approx t_{\texttt{L}}$.
\end{remark}

\subsection{Implementation Notes}
\label{s: implementation notes}
In this section, we discuss our implementation of Algorithm~\ref{alg:secure persistence alg} using HE. We assume that a \texttt{Client} generates \texttt{pk}, \texttt{sk}, \texttt{evk}, for some suitable \texttt{params}; and the \texttt{Server} knows \texttt{pk} and \texttt{evk}. Note that the \texttt{Server} can evaluate circuits on ciphertexts but cannot decrypt; see Section~\ref{subsection: HE}.
By construction, variables of 
Algorithm~\ref{alg:secure persistence alg} deals with vectors over the set $\mathbb{R}$ of real numbers  and the approximate arithmetic is performed over $\mathbb{R}$. Additionally, as comparisons feature heavily in our implementation, we note that CKKS comparison circuits are comparable in amortized time to both BFV/BGV and TFHE schemes \cite{HE-Comparisons}. Therefore, the HEAAN \cite{CKKS} HE scheme, also known as the CKKS scheme, would be a suitable choice for implementing Algorithm~\ref{alg:secure persistence alg}. 
In CKKS, we have $\mathcal{M} = \mathbb{Z}[X]/\langle X^N+1\rangle$ and $\mathcal{C} = \mathbb{Z}_Q[X]/\langle X^N+1\rangle \times \mathbb{Z}_Q[X]/\langle X^N+1\rangle$. 
Moreover, CKKS allows one to encode and encrypt $N/2$ numbers $[x_0,...,x_{N/2 -1}]$,  $x_i\in \mathbb{R}$, as a single ciphertext, where ciphertext operations can be performed component-wise and simultaneously. As a result, under the setting of above CKKS parameters, a \texttt{Client} can encode and encrypt an $n\times n$ boundary matrix $\mb{\Delta}$ in at least two different ways: as $n$ ciphertexts $c_{0},...,c_{n-1}$, where
$c_i\in \mathcal{C}$ represents the encryption of the $i$'th column of $\mb{\Delta}$, which requires $n\le N/2$; or as a single ciphertext $c$, where $c\in \mathcal{C}$ represents the encryption of the ``concatenated columns of $\mb{\Delta}$''-vector, which requires $n\le \sqrt{N/2}$. 

For simplicity, we assume that a \texttt{Client} encrypts $\mb{\Delta}$ using the first method; obtains and sends $c_i\in\mathcal{C}$ to the \texttt{Server}. The \texttt{Server} can use $\texttt{evk}$
and compute $c_0^{\prime}, ..., c_{n-1}^\prime \leftarrow \texttt{Eval}_{\texttt{evk}}(f; c_0, ..., c_{n-1})$, using ciphertext addition and multiplication operations\footnote{In practice, one would have to utilize other ciphertext operations like \texttt{Rotate}.}, where $f$ is the arithmetic circuit induced by Algorithm~\ref{alg:secure persistence alg}. The \texttt{Server} sends $c_i^\prime$, $i=0,...,n-1$, back to the \texttt{Client}, who can use \texttt{sk} and decrypt $c_i^\prime$ to $x_i^\prime$. Note that, by our previous arguments following Algorithm~\ref{alg:secure persistence alg}, \texttt{Round}($x_i^\prime$) would match the $i$th column of \texttt{Reduce}($\mb{\Delta}$).

In order to get a more concrete sense of the implementation of Algorithm~\ref{alg:secure persistence alg} using CKKS, we consider CKKS
parameters at $\lambda = 128$-bit security level, and set $N = 2^{17}$ and $Q = P\cdot q_0\cdot \prod_{i=1}^{50}{q_i}$, as a product of $52$ primes with $\log_2{Q} = 3300$, $\log_2{P} = 660$, and $\log_2{q_i} \approx \delta = 51 < \log_2{q_0} < 2\delta = 102$; see Table 6 in \cite{CKKSParams}. This choice maximizes the depth $L$ of circuits that HEAAN can evaluate, without bootstrapping, to $L = 50$ and the precision digits of data during computations is kept at $10$. 
Under this choice of parameters, a \texttt{Client} can encode and encrypt boundary matrices of size $(n\times n)$, where $n \le N/2 = 2^{16} (\sqrt{N/2} = 2^{8})$ using the first (second) encoding approach. CKKS can handle circuits of depth up to $L=50$ and so one would have to bootstrap \cite{CKKS} once the depth limit is exhausted. 
In our implementation, we use Algorithm~\ref{alg:secure persistence alg optimized}, which is a slightly modified and optimized version of Algorithm~\ref{alg:secure persistence alg}.
Our implementation, using Intel(R) 16-Core(TM) i9-9900K 3.60GHz, can reduce a single encrypted 3x3 matrices in $4.5$ seconds with $40$ bootstrappings
using the (non-cryptographic) CKKS parameters $N=2^5$, $Q\approx 2^{3188}$; and 
$\Plow = \Pcomp = (5, 5, 2, 5)$, where the parameters are chosen such that the underlying $\Comp$ in our computations uses one of the optimal parameters as reported in \cite{cheon2019}. Note that  Reducing 3x3 matrices takes $225$ minutes  using $128$-bit secure CKKS parameters with $N=2^{17}$.
Note that if ciphertext slots are fully utilized then the amortized times would be $4.5/(2^4/(3+1)) = 1.125$ and $225\cdot60/(2^{16}/(3+1)) = 0.82$ seconds, respectively.

\subsection{Limitations and Potential Improvements}
\label{s: Limitations}
A major challenge in implementing \texttt{HE-Reduce} using HE is the cubic co-factor $n^3$ in the depth of the underlying arithmetic circuit (even $\HEReduceOptimized$ has a quadratic co-factor $n^2$, see Theorem~\ref{thm: depth and complexity optimized}.) 
As pointed out in an implementation scenario in Section~\ref{s: implementation notes}, HEAAN can handle circuits up to depth $50$ but the depth of $\HEReduceOptimized$ quickly reaches large numbers as $n$ grows and exceed $50$ even for small values of $n$. Therefore, the size of the boundary matrix may be too large in practice to be practically reducable. Indeed, the Vietoris-Rips and \v{C}ech filtrations have $2^{m}$ simplices in the worst case for a point cloud with $m$ points \cite{PH_roadmap} since they define scales for every simplex in the powerset of the vertices (although it would be unusual to compute with simplices of all dimensions). 
Another challenge is to encode and encrypt ($n\times n$) boundary matrices for large $n$. As noted in Section~\ref{s: implementation notes}, currently suggested HEAAN parameters \cite{CKKSParams} at $128$-bit security level limits $n < N/2 = 2^{16}$ or $n < \sqrt{N/2} = 2^8$, depending on the choice of encoding. Therefore, substantial improvements would be required before an efficient implementation of $\HEReduceOptimized$ can be realized.

A possible improvement would be to reduce the size of the boundary matrix by the choice of filtration, which is an active field of research. For example, for a point cloud of size $m$ in dimension $d$, the (weighted) alpha \cite{Edelsbrunner1995, edelsharer} and sparse Rips filtrations \cite{sparserips} create complexes of size $m^{\mathcal{O}(d/2)}$ and $\mathcal{O}(m)$ respectively \cite{PH_roadmap}. Very recent theoretical results also justify computing the PH of small subsets of a point cloud to construct a distribution of PDs representing the topology of the original cloud \cite{solomon2022}. This approach has the potential to massively reduce the size of each boundary matrix, whose reductions can be carried out completely in parallel.
Another improvement would come from relaxing our theoretical bounds for parameters to reduce the depth in Theorem~\ref{thm: depth and complexity optimized}. Section~\ref{sec:empirical} provides some motivating evidence of the feasibility and potential consequences of such an approach. 

\subsection{Empirical Results}
\label{sec:empirical}
The output of Algorithm~\ref{alg:secure persistence alg optimized} is an approximately binary matrix $$\mb{R}'=\HEReduceOptimized(\mb{\Delta};\Plow,\Pcomp) \in [0,1]^{n\times n}$$ which approximates the output of $\Reduce(\mb{\Delta})$. The key bound in parameter selection is that throughout $\HEReduceOptimized$, the approximate binary vectors must never disagree with the true underlying binary vectors by more than $1/2n$, to ensure the output of $\HEReduceOptimized$ returns an approximately binary vector with the same implied birth-death pairings as the exact $\Reduce$.
How prevalent are the cases in which the maximum error between the approximate and the exact reduced matrix exceeds $1/2n$? This question focuses on the accumulation of error throughout $\HEReduceOptimized$ due to approximating exact operations in plaintext, and is independent of the noise growth that is accumulated by HE operations. 

We explored this question in a fashion similar to the parameter relaxation experiment conducted in \cite{cheon2019} by systematically increasing the parameters $\Plow$ and $\Pcomp$ of $\HEReduceOptimized$ with respect to their depth and complexity to determine a minimum depth cofactor $D = D_L + D_C + 1$ (as defined in Theorem \ref{thm: depth and complexity optimized}) which resulted in 100\% accuracy. Specifically, for each parameter choice, we randomly sampled the space of $10\times 10$, upper-triangular, binary matrices and compared the results of exact and approximate reductions, recording when all entries were within $1/2n$ and/or 1/2 of the exact-reduced binary matrix. We found that the minimum depth (119) and complexity (55300) parameter pair for which 100\% of the approximately reduced matrices were within $1/2n$ of their exact counterparts was $\Plow = (3, 3, 2, 6)$ and $\Pcomp = (3, 3, 2, 12)$, as reported in Table \ref{tab: practical parms}. That said, it may be that some matrices will exhibit an error in excess of the $1/2n$ tolerance for these parameter choices, although we expect such examples to be rare if they exist. By reducing $t_{\texttt{C}}$ from 12 to 11, we found only 81.2\% of approximately reduced matrices had errors less than $1/2n$ and only 91.2\% of matrices had maximum error was less than $1/2$---and so would still yield the correct reduced matrix after rounding (Table \ref{tab: practical parms}). By additionally raising $t_L$ from 6 to 7 (so the circuit depth is again 119 but complexity is 51800) we find 98.6\% of approximately reduced matrices had errors less than $1/2n$ and 100\% of matrices had maximum error was less than $1/2$ (Table \ref{tab: practical parms}). These results in expected accuracy suggests there is moderate sensitivity to the choice of some parameters. 

\begin{table}[t]
\centering
\caption{Empirically-determined parameters for accurate reduction of $10 \times 10$ matrices.  $\mb{*}$ represents the lowest depth and complexity parameters  of $\HEReduceOptimized$  that exhibited correct reduction (within $1/2n$ error) of 100\% of randomly chosen matrices.}
\begin{tabularx}{\textwidth}{cYYYYYY}
 \hline
 & $\Plow$ & $\Pcomp$ & $D$ & Complexity & Within $1/2n$ & Within $1/2$\\ 
 \hline
 & (3, 3, 2, 6) & (3, 3, 2, 11) & 113 & 51300 & 81.2\% & 91.2\% \\ 
 & (3, 3, 2, 7) & (3, 3, 2, 11) & 119 & 51800 & 98.6\% & 100\% \\ 
$\mb{*}$ & (3, 3, 2, 6) & (3, 3, 2, 12) & 119 & 55300 & 100\% & 100\% \\
\hline
\end{tabularx}
\label{tab: practical parms}
\end{table}


We found that the same parameters for \HEReduceOptimized \; shown to correctly reduce random $10\times 10$ matrices, also correctly reduces the $12\times 12$ example boundary matrix given in Figure \ref{fig: square filtration}. Indeed, the maximum error in any component of the approximate reduced boundary matrix is $\num{2.04e-3}$, well within the required $1/2n = 1/24 \approx 0.041$ tolerance to guarantee correct computation of column low 1s (Figure \ref{fig: approximate reduction errors} (A)). 

By relaxing some choices of accuracy parameters we observe failure cases where \HEReduceOptimized \; produces approximate binary matrices that do not cast to the exact reduced matrix. For instance, relaxing $t_{\texttt{L}}$ from 5 to 6 returns a matrix that fails to be in reduced form, as both columns $ab$ and $bc$ have the same low 1s (Figure \ref{fig: approximate reduction errors} (B)). By increasing $t_{\texttt{C}}$ substantially, this issue is remedied, however, the approximately reduced matrix does not agree with the exact reduction (Figure \ref{fig: approximate reduction errors} (C)). It is interesting to note that, in this case, the low 1s are all correct, and so the correct persistence diagram is computed. 

Relaxing $\LowComp$ parameters also leads to failure, as shown in (Figure \ref{fig: approximate reduction errors} (D)), where large errors accumulate during reduction leading to values that fall far outside the allowed range of $[0, 1]$.  

\begin{figure}[t]
    \centering
    \includegraphics[width = 1.0\linewidth]{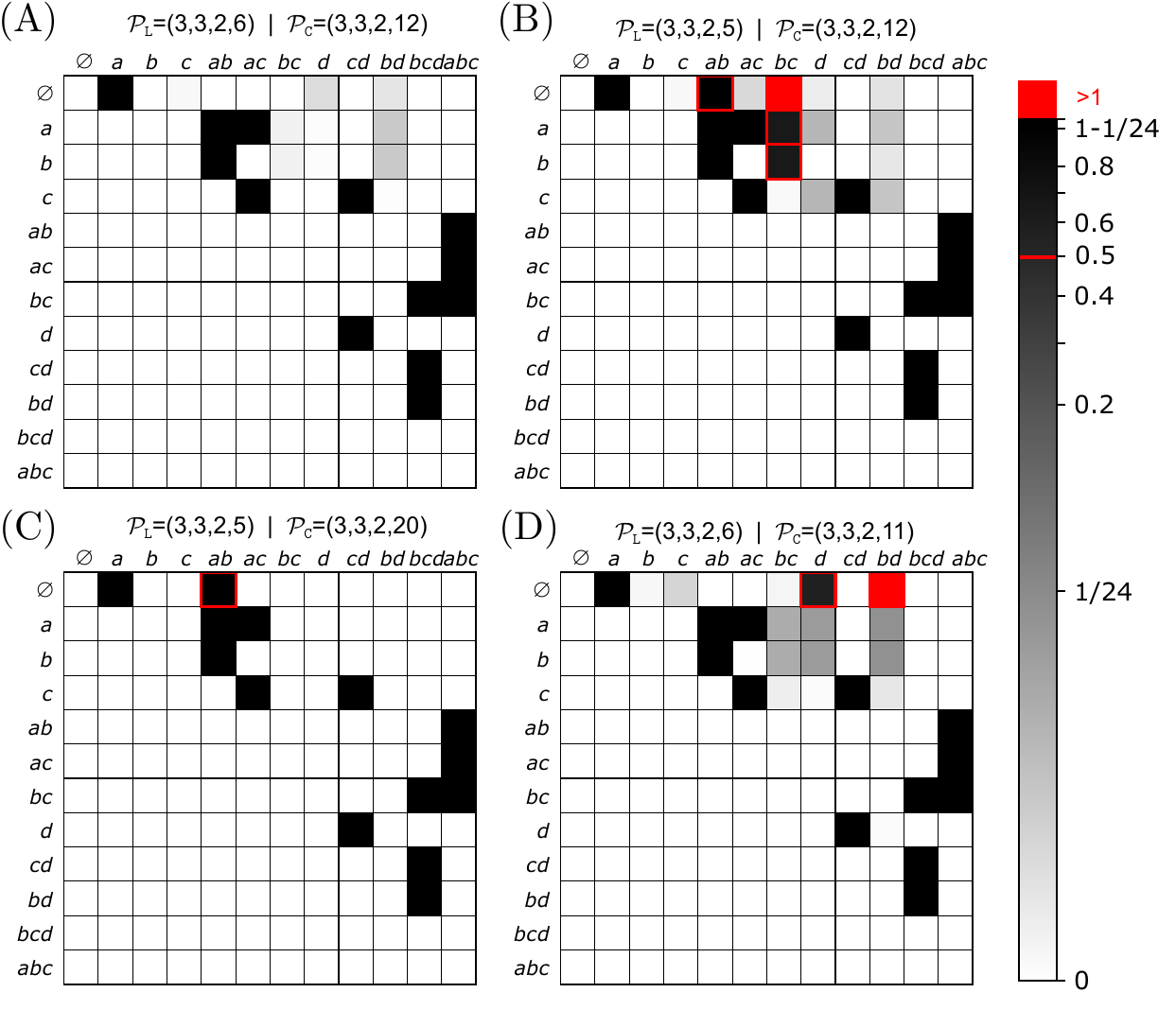}
    \caption{Approximate reductions of the boundary matrix in Figure \ref{fig: square filtration} using arithmetic circuits for various choices of algorithm parameters. Matrix entries are colored according their magnitude after approximate reduction by \HEReduceOptimized. Errors in excess of 1/2 are bordered in red. Entries with values in excess of 1 are colored red.}
    \label{fig: approximate reduction errors}
\end{figure}



\section{Concluding Remarks and Future Research}
\label{s: Concluding remarks}
We developed a new algorithm that enables key TDA computations in the ciphertext space using HE. We proved the correctness of our proposed algorithm, provided detailed correctness and complexity analysis, and an implementation of our algorithms using CKKS from the OpenFHE library \cite{OpenFHE}. We also presented some concrete directions for improvement and provided experimental results. To our knowledge, this is the first attempt to introduce secure computing for TDA. It would be interesting to extend and improve our results, and to implement secure TDA algorithms on realistic data sets.

The $\Reduce$ algorithm represents one of several fundamental components of TDA machinery which challenge existing technologies in the HE space. Another is the calculation of distances between PDs, which rely on combinatorial optimization algorithms to minimize the cost of matchings between persistence pairs in pairs of PDs \cite{BottleneckComplexity}. Others include the numerous methods being broadly deployed to vectorize PDs for use with downstream ML models \cite{pis, landscapes, Perea2022, Chung2022, pmlr-v108-carriere20a, Reininghaus2015ASM}. HE-compatible implementations could allow remote processing of encrypted PDs and would immediately enable the use of existing implementations of encrypted Euclidean distance calculations \cite{SEAL} and encrypted ML models that take as input finite-dimensional feature vectors \cite{HE-LogReg, Secure-Federated, EricLogReg2020}. We are hopeful these challenges will have implications beyond TDA-ML use cases by soliciting contributions from the broader HE community, and that the constraints imposed by HE will motivate new TDA approaches.

\bibliographystyle{splncs04}
\bibliography{ASIACRYPT}

\newpage
\appendix
\noindent {\Large \textbf{Appendix}}
\section{$\Low$ Correctness Proofs}
\label{s: Low correctness proofs}

\begin{appendixproof}[\textbf{Proof of \textsc{Lemma}~\ref{lemma:low_maxidx_exact}}]
First note that $S(\mb{v})\in \mathcal{D}^n$, since all entries are necessarily distinct by definition of Transformation \ref{trans:S}, and so $maxidx(S(\mb{v}))$ is defined. 

Suppose that $low(\mb{v})=k$ for some $0\le k \le n-1$. We have to show
$S(\mb{v})[k]>S(\mb{v})[j]$ for all $0\le j\le n-1$ and $j\ne k$.

Case 1: $0\le j <k$. Note that $\mb{v}[k]=1$ and that $\mb{v}[j]\in \{0,1\}$. Therefore,  
\[S(\mb{v})[k] = \mb{v}[k]+k/n = 1 + k/n > \mb{v}[j] + j/n = S(\mb{v})[j].\]

Case 2: $0\le k<j\le n-1$. Note that $\mb{v}[j]=0$ because $low(\mb{v})=k$ is the largest index with $\mb{v}[k]=1$. Therefore, $S(\mb{v})[j] = \mb{v}[j] + j/n = j/n$ and that
\[S(\mb{v})[k] = 1+k/n > (n-1)/n \ge j/n = S(\mb{v})[j].\]

\end{appendixproof}

\begin{remark}
The same argument in the proof of \textsc{Lemma}~\ref{lemma:low_maxidx_exact} would work for any transformation $S$ that is strictly monotonically increasing on $\{0,1,\ldots, n-1\}$ and strictly bounded by $1$. However, in the implementation of \texttt{MaxIdx}, which we use to approximate $maxidx$ (and thus $low$), the rate of convergence is increasing with the distance between distinct values in $\mb{v}$, and so a nonlinear choice of $S$ may have the effect of decreasing the rate of convergence, at least for some input vectors. Further analysis would be required to find an optimal choice for $S$.
\end{remark}

\begin{appendixproof}[\textbf{Proof of \textsc{Lemma}~\ref{lemma:low_maxidx_approx}}]
Suppose that $low(\mb{v})=k$ for some $0\le k \le n-1$. We have to show
$S(\mb{v'})[k]>S(\mb{v'})[j]$ for all $0\le j\le n-1$ and $j\ne k$.

Case 1: $0\le j <k$. Note that $\mb{v}[k]=1$, $\mb{v'}[k]>1-1/2n$, $\mb{v}[j]\in \{0,1\}$, 
and $\mb{v'}[j]<1+1/2n$. Therefore,  
\begin{align*}
S(\mb{v'})[k] & = \mb{v'}[k]+k/n > (1 - 1/2n) +k/n \\
& > (1+1/2n) + (k-1)/n > \mb{v'}[j] + j/n = S(\mb{v'})[j].
\end{align*}

Case 2: $0\le k<j\le n-1$. Note that $\mb{v}[j]=0$ because $low(\mb{v})=k$ is the largest index with $\mb{v}[k]=1$. Moreover, $\mb{v'}[k]>1-1/2n$ and $\mb{v'}[j]<1/2n$.  Therefore,
\begin{align*}
S(\mb{v'})[k] & = \mb{v'}[k]+k/n  > (1-1/2n)+k/n \ge 1-1/2n \\
& = 1/2n +  (n-1)/n > \mb{v'}[j] + j/n = S(\mb{v'})[j].    
\end{align*}
\end{appendixproof}


\begin{lemma}
\label{lemma:lowerrorexact} Let $\varepsilon > 0$ and let $\mb{e}_j$ denote the standard $n$-dimensional basis vector, with $\mb{e}_j[j]=1$ and $\mb{e}_j[i]=0$ for all $i \neq j$. Fix parameters $d, d', m, t$ for the $\MaxIdx$ algorithm so that $|\MaxIdx(\mb{x}; d,d',m,t) - \mb{e}_{maxidx(\mb{x})}| < 2^{-\alpha}$, for all $\mb{x} \in [\frac{1}{2}, \frac{3}{2})^n$. Then, for any binary vector $\mb{v} \in \{0,1\}^n$,  $$\big| \Low(\mb{v}; d, d', m, t) - low(\mb{v}) \big| < \frac{n(n-1)}{2}   (2^{-\alpha}),$$
where $\Low$, $low$, and $\MaxIdx$ are computed as described in \textup{\cite{cheon2019}}.
\end{lemma}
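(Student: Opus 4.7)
The plan is to unpack the definition of $\Low$ from Algorithm~\ref{alg:low}, apply \textsc{Lemma}~\ref{lemma:low_maxidx_exact} to identify the true $maxidx$ of the transformed input, and then bound the error in the final dot product via a triangle-inequality argument.

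First I would verify the preconditions. Given $\mb{v} \in \{0,1\}^n$, Transformation~\ref{trans:S} produces $S(\mb{v}) \in [0,2)^n$ whose entries are pairwise distinct (the $i/n$ offsets are distinct and $\mb{v}[i] \in \{0,1\}$, so the reasoning in \textsc{Lemma}~\ref{lemma:low_maxidx_exact} applies). Transformation~\ref{trans:T low} is a strictly increasing affine map, so it preserves both distinctness of coordinates and the location of the maximum, while bringing the range into $[1/2, 3/2)^n$. Hence $T_{\texttt{L}}(S(\mb{v})) \in [1/2,3/2)^n \cap \mathcal{D}^n$, and $maxidx(T_{\texttt{L}}(S(\mb{v}))) = maxidx(S(\mb{v})) = low(\mb{v})$ by \textsc{Lemma}~\ref{lemma:low_maxidx_exact}.

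Next, let $k = low(\mb{v})$ and $\mb{b} = \MaxIdx(T_{\texttt{L}}(S(\mb{v})); d, d', m, t)$. The hypothesis gives $|\mb{b} - \mb{e}_k| < 2^{-\alpha}$ component-wise. Line~4 of Algorithm~\ref{alg:low} returns $r = \mb{b}\cdot [0,1,\dots,n-1]$, and since $\mb{e}_k \cdot [0,1,\dots,n-1] = k = low(\mb{v})$, we have
\[
\bigl|\Low(\mb{v}; d,d',m,t) - low(\mb{v})\bigr|
= \bigl|(\mb{b} - \mb{e}_k)\cdot [0,1,\dots,n-1]\bigr|
\le \sum_{i=0}^{n-1} i\,\bigl|\mb{b}[i] - \mb{e}_k[i]\bigr|.
\]
Bounding each $|\mb{b}[i]-\mb{e}_k[i]| < 2^{-\alpha}$ and summing yields $\sum_{i=0}^{n-1} i \cdot 2^{-\alpha} = \tfrac{n(n-1)}{2}\cdot 2^{-\alpha}$, which is the desired conclusion.

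There is no real obstacle here; the argument is essentially a triangle inequality against the ``true'' indicator vector $\mb{e}_{low(\mb{v})}$. The only subtlety worth flagging is that the bound is strict: the sum has at least one index with a nonzero coefficient $i \ge 1$, and the MaxIdx error bound is a strict inequality, so the overall inequality remains strict (the trivial $i=0$ term contributes nothing and does not affect strictness, so one only needs that at least one term with $i \ge 1$ is present, which holds whenever $n \ge 2$; the case $n=1$ is degenerate since $\Low$ and $low$ both equal $0$).
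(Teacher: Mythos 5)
Your proof is correct and follows essentially the same route as the paper's: both identify $low(\mb{v})$ with $\mb{e}_{low(\mb{v})}\cdot[0,1,\dots,n-1]$ and bound the dot-product error term-by-term via the triangle inequality to get $\sum_{i=0}^{n-1} i\,2^{-\alpha} = \tfrac{n(n-1)}{2}2^{-\alpha}$. Your added care in tracking the transformations $S$ and $T_{\texttt{L}}$ (via \textsc{Lemma}~\ref{lemma:low_maxidx_exact}) and the strictness/$n\ge 2$ remark are fine refinements of details the paper leaves implicit, not a different argument.
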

\begin{proof}
To ease notation let $\mb{b} =  \MaxIdx(\mb{v}; d, d', m, t)$. Assume the paramaters $d, d', m$ and $t$ have been chosen so that $$\big| \mb{b}[i] - \mb{e}_k[i] \big| < 2^{-\alpha}$$
for all $0 \leq i < n-1$, if $k = maxidx(\mb{v})$.
Then

\begin{align*}
\big| \texttt{Low}(\mb{v}; d, d', m, t) - low(\mb{v}) \big| &= \left| \sum_{i = 0}^{n-1} i\mb{b}[i] - \sum_{i = 0}^{n-1} i \mb{e}_j[i] \right| \\
&= \left| \sum_{i = 0}^{n-1} i(\mb{b}[i] - \mb{e}_j[i]) \right| \\
&\leq \sum_{i = 0}^{n-1} \big| i(\mb{b}[i] - \mb{e}_j[i]) \big| \\
&< 2^{-\alpha} \sum_{i = 0}^{n-1} i \\
&= \frac{n(n-1)}{2} \left(2^{-\alpha} \right).
\end{align*}
\end{proof}

\begin{lemma}
\label{lemma:lowerrorapprox}
Let $\alpha > 0$ and fix parameters $d, d', m, t$ for the $\MaxIdx$ algorithm so that $|\MaxIdx(\mb{x}; d,d',m,t) - \mb{e}_{maxidx(\mb{x})}| < 2^{-\alpha}$, for all $\mb{x} \in [\frac{1}{2}, \frac{3}{2})^n$. Further assume $\mb{v'} \in [0, 1]^n$ and $\mb{v} \in \{ 0, 1 \}^n$ are such that $|\mb{v'} - \mb{v}| < \frac{1}{2n}$. Then $$\big| \Low(\mb{v'}; d, d', m, t) - \Low(\mb{v}; d, d', m, t) \big| < n(n-1)   2^{-\alpha}$$
\end{lemma}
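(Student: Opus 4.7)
The plan is to leverage the exact-$maxidx$ equality guaranteed by \textsc{Lemma}~\ref{lemma:low_maxidx_approx} to reduce everything to a triangle-inequality argument around a common standard basis vector. Since $|\mb{v'} - \mb{v}| < \tfrac{1}{2n}$, that lemma tells us $maxidx(S(\mb{v'})) = low(\mb{v}) = maxidx(S(\mb{v}))$. The transformation $T_{\texttt{L}}$ is affine and strictly monotonic, hence it preserves $maxidx$, so the inputs fed into \texttt{MaxIdx} inside the two invocations of \texttt{Low} share the same max-index, namely $low(\mb{v})$.

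Next I would introduce $\mb{b'} = \texttt{MaxIdx}(T_{\texttt{L}}(S(\mb{v'}));d,d',m,t)$ and $\mb{b} = \texttt{MaxIdx}(T_{\texttt{L}}(S(\mb{v}));d,d',m,t)$, and write $k = low(\mb{v}) = maxidx(T_{\texttt{L}}(S(\mb{v'}))) = maxidx(T_{\texttt{L}}(S(\mb{v})))$. By the hypothesis on the \texttt{MaxIdx} parameters, each coordinate satisfies $|\mb{b'}[i] - \mb{e}_k[i]| < 2^{-\alpha}$ and $|\mb{b}[i] - \mb{e}_k[i]| < 2^{-\alpha}$. A triangle inequality then gives
\[
|\mb{b'}[i] - \mb{b}[i]| \leq |\mb{b'}[i] - \mb{e}_k[i]| + |\mb{e}_k[i] - \mb{b}[i]| < 2 \cdot 2^{-\alpha}
\]
for every $0 \leq i \leq n-1$.

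Finally, since $\Low$ extracts its value via the dot product with $[0,1,\ldots,n-1]$, exactly as in the proof of \textsc{Lemma}~\ref{lemma:lowerrorexact}, I would compute
\[
\bigl|\Low(\mb{v'}) - \Low(\mb{v})\bigr| = \left|\sum_{i=0}^{n-1} i(\mb{b'}[i] - \mb{b}[i])\right| \leq \sum_{i=0}^{n-1} i \cdot 2 \cdot 2^{-\alpha} = n(n-1)\cdot 2^{-\alpha},
\]
which is the desired bound.

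There is no real obstacle here; the argument is essentially a mirror of \textsc{Lemma}~\ref{lemma:lowerrorexact} with the key added ingredient being the observation from \textsc{Lemma}~\ref{lemma:low_maxidx_approx} that $\mb{v'}$ and $\mb{v}$ produce the same $maxidx$ after transformation. The only place one must be a little careful is to justify that $T_{\texttt{L}}$ preserves $maxidx$, which follows from it being coordinate-wise affine with positive slope, and to note that the $\MaxIdx$ error hypothesis applies to both inputs since both $T_{\texttt{L}}(S(\mb{v}))$ and $T_{\texttt{L}}(S(\mb{v'}))$ lie in $[\tfrac{1}{2},\tfrac{3}{2})^n$ with distinct coordinates, the latter being ensured by the $\tfrac{1}{2n}$ proximity assumption.
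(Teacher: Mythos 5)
Your proposal is correct and follows essentially the same route as the paper's proof: both arguments hinge on the observation (via \textsc{Lemmas}~\ref{lemma:low_maxidx_exact} and \ref{lemma:low_maxidx_approx}) that $T_{\texttt{L}}(S(\mb{v}))$ and $T_{\texttt{L}}(S(\mb{v'}))$ share the same $maxidx$, namely $low(\mb{v})$, then apply the triangle inequality through the common basis vector $\mb{e}_k$ to get a coordinate-wise bound of $2\cdot 2^{-\alpha}$, and finish with the dot product against $[0,1,\ldots,n-1]$ exactly as in \textsc{Lemma}~\ref{lemma:lowerrorexact}. Your added remarks that $T_{\texttt{L}}$ preserves $maxidx$ and that both transformed inputs lie in $[\tfrac{1}{2},\tfrac{3}{2})^n$ with distinct coordinates are points the paper leaves implicit, so no changes are needed.
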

\begin{proof}
Recall that the $\Low$ algorithm presented in Algorithm \ref{alg:low} requires the input vectors to undergo two transformations, $S$ and $T_{\texttt{L}}$, before being fed into the $\MaxIdx$ algorithm. Let $\mb{x}' = T_{\texttt{L}}(S(\mb{v}'))$ and $\mb{x} = T_{\texttt{L}}(S(\mb{v}))$. The problem now reduces to bounding $$\big| \MaxIdx(\mb{x}'; d, d', m, t) - \MaxIdx(\mb{x}; d, d', m, t) \big|$$ where $\left| \mb{x}' - \mb{x} \right| < \frac{1}{4n} < \frac{1}{2n}$. By \textsc{Lemmas} \ref{lemma:low_maxidx_exact} and \ref{lemma:low_maxidx_approx}
\begin{align*}
maxidx(\mb{x}') & = maxidx(T_{\texttt{L}}(S(\mb{v}'))) = low(\mb{v})\\
& = maxidx(T_{\texttt{L}}(S(\mb{v}))) = maxidx(\mb{x}), 
\end{align*}
and so a triangle inequality obtained by adding and subtracting $\mb{e}_{maxidx(\mb{x'})}$ and $\mb{e}_{maxidx(\mb{x})}$ yields
\begin{align*}
    \big| \MaxIdx(\mb{x}'; d, d', m, t) &- \MaxIdx(\mb{x}; d, d', m, t) \big| \\ 
    < & 2^{-\alpha} + 0 + 2^{-\alpha} = 2^{-\alpha + 1}.
\end{align*}
Letting 
\begin{align*}
\mb{b} & =  \MaxIdx(\mb{x}; d, d', m, t)\\ 
\mb{b'} & =  \MaxIdx(\mb{x'}; d, d', m, t),    
\end{align*}
we have
\begin{align*}
    \big| \Low(\mb{v'}; d, d', m, t) - \Low(\mb{v}; d, d', m, t) \big| &= \left| \sum_{i = 0}^{n-1} i\mb{b}'[i] - \sum_{i = 0}^{n-1} i \mb{b}[i] \right| \\
    &=  \left| \sum_{i = 0}^{n-1} i(\mb{b}'[i] - \mb{b}[i]) \right|\\
    \leq & \sum_{i = 0}^{n-1} \big| i(\mb{b}'[i] - \mb{b}[i]) \big| < 2^{-\alpha + 1} \sum_{i = 0}^{n-1} i\\
    = & \frac{n(n-1)}{2} (2^{-\alpha + 1}) = n(n-1) 2^{-\alpha}.
\end{align*}
\end{proof}

\section{$\Low$ Parameters Proofs}
\label{s: Low parameters proofs}

\begin{prop}
\label{prop:c bound for approx}
Fix $n \geq 2$ and assume $\mb{v} \in \{ 0, 1 \}^n$ and $\mb{v'} \in [0, 1]^n$ satisfy $\left| \mb{v} - \mb{v'} \right| \leq \frac{\varepsilon}{2n}$, for some $0 \leq \varepsilon < 1$. Let $\mb{x'} = T_{\texttt{L}}(S(\mb{v'}))$ and denote the $(i+1)$-th smallest value of $\mb{x}'$ by $\mb{x}'_{(i)}$, so $\min \{\mb{x}'[i] \; | \; 0 \leq i \leq n-1\} =: \mb{x}'_{(0)} < \mb{x}'_{(1)} < \ldots < \mb{x}'_{(n-1)} := \max \{\mb{x}'[i] \; | \; 0 \leq i \leq n-1\}$. Let 
$$c = \mb{x}'_{(n-1)}/\mb{x}'_{(n-2)}$$ 
be the ratio of the largest coordinate value of $\mb{x}'$ over the second largest value. Then $$c \geq 1 + \frac{2-2\varepsilon}{6n-4+\varepsilon}.$$
\end{prop}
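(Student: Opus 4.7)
The plan is to locate the maximum of $\mb{x}' = T_{\texttt{L}}(S(\mb{v}'))$, derive explicit bounds on its largest and second-largest coordinates from the entrywise constraint on $\mb{v}'$, and then minimize the resulting ratio over the admissible configurations of $\mb{v}$ and $\mb{v}'$. Since $|\mb{v} - \mb{v}'| \leq \varepsilon/(2n) < 1/(2n)$ and $T_{\texttt{L}} \circ S$ preserves the relative ordering of coordinates, \textsc{Lemma}~\ref{lemma:low_maxidx_approx} identifies $maxidx(\mb{x}') = low(\mb{v}) =: k$. Hence $\mb{x}'_{(n-1)} = \mb{x}'[k]$ and $\mb{x}'_{(n-2)} = \max_{j \neq k} \mb{x}'[j]$, where $\mb{x}'[j] = (\mb{v}'[j] + j/n + 1)/2$.

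Next, using the coordinate-wise estimate $\mb{v}[j] - \varepsilon/(2n) \leq \mb{v}'[j] \leq \mb{v}[j] + \varepsilon/(2n)$, I would derive $\mb{x}'[k] \geq 1 + k/(2n) - \varepsilon/(4n)$ from $\mb{v}[k] = 1$. For $j \neq k$ there are two subcases: if $\mb{v}[j] = 1$ (which forces $j < k$), then $\mb{x}'[j] \leq 1 + j/(2n) + \varepsilon/(4n)$, maximized over such $j$ at $j = k-1$; if $\mb{v}[j] = 0$, then $\mb{x}'[j] \leq 1/2 + j/(2n) + \varepsilon/(4n)$, maximized at $j = n-1$ when $k \neq n-1$. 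The former upper bound is strictly larger whenever it applies.

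The binding configuration is therefore $k \geq 1$ with $\mb{v}[k-1] = 1$, for which $\mb{x}'_{(n-2)} \leq 1 + (k-1)/(2n) + \varepsilon/(4n)$, and a short calculation gives
\begin{equation*}
c \;\geq\; \frac{4n + 2k - \varepsilon}{4n + 2k - 2 + \varepsilon} \;=\; 1 + \frac{2 - 2\varepsilon}{4n + 2k - 2 + \varepsilon}.
\end{equation*}
This expression is decreasing in $k$, so its infimum over $1 \leq k \leq n-1$ is attained at $k = n-1$, producing precisely the claimed lower bound $1 + (2-2\varepsilon)/(6n - 4 + \varepsilon)$.

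The main obstacle is verifying that no other admissible configuration beats this ratio. Specifically, the edge cases to handle are (i) $k = 0$, where only $j > k$ with $\mb{v}[j] = 0$ competes, yielding $c \geq 1 + (2-2\varepsilon)/(4n-2+\varepsilon)$; (ii) $k \geq 1$ with $\mb{v}[j] = 0$ for all $j < k$, yielding $c \geq 1 + (2k+2-2\varepsilon)/(4n-2+\varepsilon)$; and (iii) the degenerate case $\mb{v} = \mb{0}$ with $low(\mb{0}) := n-1$, where $\mb{x}'[k]$ is bounded below by $1 - 1/(2n)$ instead. Each of these produces a ratio strictly greater than $1 + (2-2\varepsilon)/(6n-4+\varepsilon)$, which I would confirm by clearing denominators and using $n \geq 2$, $\varepsilon \in [0,1)$; the resulting polynomial inequalities are all routine.
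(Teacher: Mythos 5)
Your proposal is correct and follows essentially the same route as the paper's proof: a case analysis on whether the second-largest coordinate of $T_{\texttt{L}}(S(\mb{v}'))$ arises from an approximate 1 at some index below $k=low(\mb{v})$ or from an approximate 0, with explicit ratio bounds from the affine transformations (your cases (i)--(ii) are the paper's ``$M_1$ empty'' case and your binding case is its ``$M_1$ nonempty'' case). The only differences are bookkeeping: your 1-case estimate yields $1+\frac{2-2\varepsilon}{4n+2k-2+\varepsilon}$ and hence the stated constant directly at $k=n-1$, whereas the paper bounds that case by $1+\frac{2-\varepsilon}{6n-4}$ and reconciles the two cases at the end, and you additionally treat the $\mb{v}=\mb{0}$ (i.e., $low(\mb{0})=n-1$) convention explicitly, a case the paper's proof tacitly assumes away.
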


\begin{proof}
Suppose $low(\mb{v}) = k$. By \textsc{Lemma}~\ref{lemma:low_maxidx_approx}, $\mb{x}'[k]$ has the highest value in the vector. Define the two sets 
\begin{align*}M_1 &= \{j  \; | \; \mb{v}[j] = 1 \text{ and } j < k\} \\
&= \left\{j  \; | \; \mb{v}'[j] \geq 1 - \frac{\varepsilon}{2n} \text{ and } j < k \right\}
\end{align*} and $$M_0 = \{j  \; | \; \mb{v}[j] = 0\} = \left\{j  \; | \; \mb{v}'[j] \leq \frac{\varepsilon}{2n}\right\}.$$

There are two cases to consider, either $M_1$ is empty or not.

Case 1: If $M_1$ is empty, let $m = \max M_0$. Since $S$ is an increasing function with respect to index, and $T_{\texttt{L}}$ is strictly increasing we have that $T_{\texttt{L}}(S(\mb{v}'))[m] = \mb{x}'[m]$ is necessarily the second highest value. This is because  
\begin{align*}
\mb{x}'[m]  =  \frac{\mb{v}'[m] + \frac{m}{n} + 1}{2} \geq \frac{\frac{m}{n} + 1}{2} > \frac{\frac{\varepsilon}{2n} + \frac{j}{n} + 1}{2}  > \frac{\mb{v'}[j] + \frac{j}{n} + 1}{2} = \mb{x}'[j]   
\end{align*}
for $0 \leq j < m \leq n-1$. The middle inequality follows because $m,n \in \mathbb{Z}$ and $m > j \implies m-j \geq 1 \implies \frac{m}{n} \geq \frac{1}{n} + \frac{j}{n} > \frac{\varepsilon}{2n} + \frac{j}{n}$. Thus, 
\begin{align*}
    \frac{\mb{x}'_{(n-1)}}{\mb{x}'_{(n-2)}} = \frac{\mb{x}'[k]}{\mb{x}'[m]} &> \frac{1 - \frac{\varepsilon}{2n} +\frac{k}{n} + 1}{\frac{\varepsilon}{2n}+\frac{n-1}{n} + 1} \\
    &= 1 + \frac{2k+2-2\varepsilon}{4n-2+\varepsilon} \geq  1 + \frac{2-2\varepsilon}{4n-2+\varepsilon}.
\end{align*}

Case 2: If $M_1$ is not empty, let $m = \max M_1$. We first show that $\mb{x}'[i] > \mb{x}'[j]$ for all $i \in M_1, j \in M_0$; that is, all transformed approximate 1's are larger than all transformed approximate 0's. Let $j\in M_0$ and $i\in M_1$ be arbitrary. Then
\begin{align*}\mb{v}'[j] + \frac{j}{n} < \frac{\varepsilon}{2n} + \frac{j}{n} &\leq \frac{\varepsilon}{2n} + \frac{n-1}{n} \\
&\leq 1 - \frac{1}{2n} \leq 1 - \frac{\varepsilon}{2n} + \frac{i}{n} < \mb{v}'[i] + \frac{i}{n}
\end{align*}
for $i \geq 0$. Since $T_{\texttt{L}}$ is an increasing function, we have that $\mb{x}'[j] < \mb{x}'[i]$ as desired. Thus, the second largest coordinate of $\mb{x}'$ is $\mb{x}'[m] > (1 - \frac{\varepsilon}{2n} + \frac{m}{n} + 1)/2$, where $m = \max M_1$. Necessarily $m \leq k-1$, and so we have that 
\begin{align*}
\frac{\mb{x}'_{(n-1)}}{\mb{x}'_{(n-2)}} = \frac{\mb{x}'[k]}{\mb{x}'[m]} &> \frac{1 - \frac{\varepsilon}{2n} +\frac{k}{n} + 1}{1+\frac{k-1}{n} + 1} \\
&= 1 + \frac{2 - \varepsilon}{4n+2k-2} \geq 1 + \frac{2 - \varepsilon}{6n-4}.
\end{align*}
Thus, as $n$ and $\varepsilon$ were chosen from the beginning, we know that the ratio between the largest and second largest value for any ``approximate'' binary vector is bounded below by 
$\min\left\{1 + \frac{2-2\varepsilon}{4n - 2 + \varepsilon}, 1 + \frac{2-\varepsilon}{6n-4}\right\}$. However, both $ \frac{2-2\varepsilon}{4n - 2 + \varepsilon} \text{ and } \frac{2-\varepsilon}{6n-4}$ are greater than or equal to $\frac{2-2\varepsilon}{6n-4+\varepsilon}$, with the former being immediate for $n \geq 1$ and the latter is true as $$\frac{2 - \varepsilon}{6n-4} \geq \frac{2-2\varepsilon}{6n-4} \geq \frac{2-2\varepsilon}{6n-4+\varepsilon}, \text{ for }  0 \leq \varepsilon < 1.$$

And so, it follows that 
\[c \geq \min\left\{1 + \frac{2-2\varepsilon}{4n - 2 + \varepsilon}, 1 + \frac{2-\varepsilon}{6n-4}\right\} \geq 1 + \frac{2-2\varepsilon}{6n-4+\varepsilon}.\]
\end{proof}

\begin{lemma}
\label{lemma: alpha to delta}
Assume $\mb{v} \in \{ 0, 1 \}^n$ and $\mb{v'} \in [0, 1]^n$ satisfy $\left| \mb{v} - \mb{v'} \right| \leq \frac{\varepsilon}{2n}$, for some $0 \leq \varepsilon < 1$. Consider Theorem 5 in \cite{cheon2019}, which gives the parameters ($d, d', m, t$) for the $\MaxIdx$ function for a given desired accuracy of $2^{-\alpha}$. If $\alpha$ is chosen such that $$\alpha > \log(3) + 2\log(n) - \log(\delta) - 1$$ then $|\Low(\mb{v'}; d, d', m, t) - low(\mb{v})| < \delta$.
\end{lemma}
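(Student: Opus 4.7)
\begin{appendixproof}[\textbf{Proof proposal for \textsc{Lemma}~\ref{lemma: alpha to delta}}]
The plan is to reduce the claim to an application of \textsc{Theorem}~\ref{theorem: lowerror} and then verify, via elementary algebra of logarithms, that the stated lower bound on $\alpha$ suffices.

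First, I would verify that the hypothesis of \textsc{Theorem}~\ref{theorem: lowerror} is met. Since $0 \leq \varepsilon < 1$, the assumption $|\mb{v} - \mb{v'}| \leq \tfrac{\varepsilon}{2n}$ immediately yields $|\mb{v} - \mb{v'}| < \tfrac{1}{2n}$, so the theorem applies with the chosen parameters $(d,d',m,t)$ and per-coordinate $\MaxIdx$ accuracy $2^{-\alpha}$, giving
\[
\bigl|\Low(\mb{v'}; d, d', m, t) - low(\mb{v})\bigr| < \tfrac{3}{2}\,n(n-1)\,2^{-\alpha}.
\]

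Next, it suffices to show that the hypothesis $\alpha > \log(3) + 2\log(n) - \log(\delta) - 1$ forces $\tfrac{3}{2}\,n(n-1)\,2^{-\alpha} < \delta$. Rearranging the target inequality (in base-$2$ logarithms, with $\log 2 = 1$) gives the equivalent condition
\[
\alpha > \log\!\left(\tfrac{3\,n(n-1)}{2\delta}\right) = \log(3) + \log(n) + \log(n-1) - 1 - \log(\delta).
\]
Since $n-1 \leq n$, we have $\log(n-1) \leq \log(n)$, so
\[
\log(3) + \log(n) + \log(n-1) - 1 - \log(\delta) \;\leq\; \log(3) + 2\log(n) - \log(\delta) - 1,
\]
and the hypothesis on $\alpha$ therefore implies the needed condition, finishing the proof.

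The argument is essentially routine once \textsc{Theorem}~\ref{theorem: lowerror} is in hand; there is no real obstacle, only a bookkeeping step to confirm that the slightly loose bound $\log(n-1) \leq \log(n)$ absorbs the difference between $\log\bigl(3n(n-1)/(2\delta)\bigr)$ and the cleaner threshold $\log(3) + 2\log(n) - \log(\delta) - 1$ stated in the lemma.
\end{appendixproof}
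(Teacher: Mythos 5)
Your proposal is correct and follows essentially the same route as the paper: invoke \textsc{Theorem}~\ref{theorem: lowerror} to get the bound $\tfrac{3}{2}n(n-1)2^{-\alpha}$, then use $n-1 \leq n$ (the paper writes this as $\tfrac{3}{2}n(n-1)2^{-\alpha} < \tfrac{3n^2}{2^{\alpha+1}}$) and take base-2 logarithms to see that the stated threshold on $\alpha$ suffices. The only cosmetic difference is that you make the $\log(n-1)\leq\log(n)$ bookkeeping explicit, which the paper absorbs silently.
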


\begin{proof}
We note that from \textsc{Theorem}~\ref{theorem: lowerror} it suffices to have
\begin{align*}
& \frac{3}{2}n(n-1)2^{-\alpha} < \frac{3n^2}{2^{\alpha+1}} < \delta  \\
& \Leftrightarrow  2^{\alpha + 1} > \frac{3n^2}{\delta} \\
& \Leftrightarrow \alpha > \log(3) + 2\log(n) - \log(\delta) - 1
\end{align*}
\end{proof}

\begin{appendixproof}[\textbf{Proof of Theorem~\ref{theorem: low parms}}]

\textsc{Lemma}~\ref{lemma: alpha to delta} provides the corresponding $\alpha$ needed for a desired $\delta$-error in $\Low$. That is, if one desires error $\delta$ in $|\Low(\mb{v'}) - low(\mb{v})|$, then one may choose $\alpha > \log(3) + 2\log(n) - \log(\delta)-1$.

$c$'s lower bound has been found in \textsc{Proposition}~\ref{prop:c bound for approx}. In particular, by \textsc{Proposition}~\ref{prop:c bound for approx}, we know that $c \geq 1 + \frac{2-2\varepsilon}{6n-4+\varepsilon}$. But this is equivalent to saying that 
\begin{align*}
& -\log(\log(c)) \leq -\log(\log(1 + \frac{2-2\varepsilon}{6n-4+\varepsilon}))\\
\implies & \log(\alpha + 1 + \log(n)) -\log(\log(c)) \\
&\leq \log(\alpha + 1 + \log(n)) -\log(\log(1 + \frac{2-2\varepsilon}{6n-4+\varepsilon}))
\end{align*}
Therefore, if we choose 
$$t \geq \frac{ \log(\alpha + 1 + \log(n)) -\log(\log(1 + \frac{2-2\varepsilon}{6n-4+\varepsilon}))}{\log m}$$ we fulfill the former inequality in Theorem 5 (from \cite{cheon2019}).

At this point, both $\alpha$ and $t$ are determined. As $\min \{d, d' \}$ is dependent upon these choice of parameters, our choice of $\min \{d, d' \}$ is also determined.

\end{appendixproof}
\section{$\LowComp$ Correctness Proofs}
\label{s: LowComp correctness proofs}

\begin{appendixproof}[\textbf{Proof of \textsc{Theorem}~\ref{theorem: lowcomp error}}]

Recall \textsc{Lemma}~\ref{lem:approxlow} and the following chain of if and only if implications:
\begin{align*}
& lowcomp(l_\x, l_\y) = 1 \Leftrightarrow low(\x) = low(\y) \Leftrightarrow  \phi \geq |\texttt{L}_{\mb{x}'} - \texttt{L}_{\mb{y}'}| \\
& \Leftrightarrow \phi^2 \geq (\texttt{L}_{\mb{x}'} - \texttt{L}_{\mb{y}'})^2  \Leftrightarrow  T_{\texttt{C}}(\phi^2) \geq T_{\texttt{C}}(\texttt{L}_{\mb{x}'} - \texttt{L}_{\mb{y}'})^2
\end{align*}

There are two cases: $lowcomp(l_{\x}, l_{\y}) = 1$ or $lowcomp(l_{\x}, l_{\y}) = 0$.
Case 1: $lowcomp(l_{\x}, l_{\y}) = 1$. Then $T_{\texttt{C}}(\phi^2) \geq T_{\texttt{C}}(\texttt{L}_{\mb{x}'} - \texttt{L}_{\mb{y}'})^2$
which implies that 
\begin{align*}
    &|\Comp(T_{\texttt{C}}(\phi^2), T_{\texttt{C}}(\texttt{L}_{\mb{x}'} - \texttt{L}_{\mb{y}'})^2; d, d', m, t) \\
    &- comp(T_{\texttt{C}}(\phi^2), T_{\texttt{C}}(\texttt{L}_{\mb{x}'} - \texttt{L}_{\mb{y}'})^2)| < \eta \\
    \implies &|\Comp(T_{\texttt{C}}(\phi^2), T_{\texttt{C}}(\texttt{L}_{\mb{x}'} - \texttt{L}_{\mb{y}'})^2; d, d', m, t) - 1| < \eta \\
    \implies &1 - \eta < \Comp(T_{\texttt{C}}(\phi^2), T_{\texttt{C}}(\texttt{L}_{\mb{x}'} - \texttt{L}_{\mb{y}'})^2; d, d', m, t) < 1\\
    \implies &1 - \eta < \LowComp\texttt{L}_{\mb{x}'}, \texttt{L}_{\mb{y}'}; d, d', m, t) < 1.
\end{align*}

Case 2: $lowcomp(l_{\x}, l_{\y}) = 0$. Then $T_{\texttt{C}}(\phi^2) < T_{\texttt{C}}(\texttt{L}_{\mb{x}'} - \texttt{L}_{\mb{y}'})^2$
which implies that 
\begin{align*}
    &|\Comp(T_{\texttt{C}}(\phi^2), T_{\texttt{C}}(\texttt{L}_{\mb{x}'} - \texttt{L}_{\mb{y}'})^2; d, d', m, t) \\
    & - comp(T_{\texttt{C}}(\phi^2), T_{\texttt{C}}(\texttt{L}_{\mb{x}'} - \texttt{L}_{\mb{y}'})^2)| < \eta \\
    \implies &|\Comp(T_{\texttt{C}}(\phi^2), T_{\texttt{C}}(\texttt{L}_{\mb{x}'} - \texttt{L}_{\mb{y}'})^2; d, d', m, t) - 0| < \eta \\
    \implies &0 < \Comp(T_{\texttt{C}}(\phi^2), T_{\texttt{C}}(\texttt{L}_{\mb{x}'} - \texttt{L}_{\mb{y}'})^2; d, d', m, t) < \eta\\
    \implies &0 < \LowComp(\texttt{L}_{\mb{x}'}, \texttt{L}_{\mb{y}'}; d, d', m, t) < \eta.
\end{align*}
\end{appendixproof}
\section{$\LowComp$ Parameters Proofs}
\label{s: LowComp parameters proofs}

\begin{prop}
\label{prop: geometric mean}
Consider positive $a < b < c$. Then the value of $b$ which optimizes the problem $$\max_{b \in (a, c)} \left( \min\left\{\frac{b}{a}, \frac{c}{b} \right\} \right)$$ is $b = \sqrt{ac}$. In other words, $b$ is equal to the geometric mean of the endpoints. Furthermore, for this choice of $b$, we have that $$\max_{b \in (a, c)} \left( \min\left\{\frac{b}{a}, \frac{c}{b} \right\} \right) = \sqrt{\frac{c}{a}}.$$
\end{prop}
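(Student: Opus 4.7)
The plan is to recognize that the proposition is a standard min-max problem between one increasing and one decreasing function of $b$. Let $f(b) = b/a$ and $g(b) = c/b$. Since $0 < a < c$, on the interval $(a, c)$ the function $f$ is strictly increasing with $f(a) = 1 < c/a = f(c)$, and $g$ is strictly decreasing with $g(a) = c/a > 1 = g(c)$. So I would first establish that $\min\{f(b), g(b)\}$ equals the smaller of a strictly increasing and a strictly decreasing positive function on $(a,c)$, each exceeding $1$ at one endpoint.

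Next I would identify the unique crossing point of $f$ and $g$. Solving $f(b) = g(b)$, i.e.\ $b/a = c/b$, and using that $b > 0$, yields $b = \sqrt{ac}$; this value lies in $(a, c)$ since $a < \sqrt{ac} < c$ (as $a < c$). Note that at this $b$, both $f$ and $g$ take the common value $\sqrt{ac}/a = \sqrt{c/a}$.

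Then I would show optimality by case analysis around the crossing. For $b \in (a, \sqrt{ac})$, monotonicity gives $f(b) < f(\sqrt{ac}) = \sqrt{c/a}$ and $g(b) > g(\sqrt{ac}) = \sqrt{c/a}$, so $\min\{f(b), g(b)\} = f(b) < \sqrt{c/a}$. For $b \in (\sqrt{ac}, c)$, symmetrically $g(b) < \sqrt{c/a} < f(b)$, so $\min\{f(b), g(b)\} = g(b) < \sqrt{c/a}$. Hence the supremum of $\min\{f(b), g(b)\}$ over $(a,c)$ is attained uniquely at $b = \sqrt{ac}$, with value $\sqrt{c/a}$, proving both assertions.

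The argument is essentially a one-line observation once the monotonicity of $f$ and $g$ is noted; there is no real obstacle. The only minor point worth double-checking is that $\sqrt{ac}$ actually lies in the open interval $(a, c)$, which follows immediately from $a < c$ via $a^2 < ac < c^2$.
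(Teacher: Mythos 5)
Your proposal is correct and follows essentially the same argument as the paper: identify the crossing point $b=\sqrt{ac}$ of the increasing ratio $b/a$ and the decreasing ratio $c/b$, then show by the two-sided case analysis that the minimum drops strictly below $\sqrt{c/a}$ on either side. The added checks (monotonicity and $a<\sqrt{ac}<c$) are fine but do not change the substance.
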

\begin{proof}

Suppose $b = \sqrt{ac}$. Then $\frac{b}{a} = \frac{c}{b} = \sqrt{\frac{c}{a}}$. If $b > \sqrt{ac}$, then $\frac{b}{a} > \frac{c}{b}$, implying that the minimum of the two ratios is $\frac{c}{b}$. But then $\frac{c}{b} < \frac{c}{\sqrt{ac}} = \sqrt{\frac{c}{a}}$. Similarly, if $b < \sqrt{ac}$, then $\frac{b}{a} < \frac{c}{b}$, so that the minimum of the two ratios is now $\frac{b}{a}$. It follows that $\frac{b}{a} < \frac{\sqrt{ac}}{a} = \sqrt{\frac{c}{a}}.$ Thus the value of $b$ over $(a,c)$ which maximizes the $\min\left\{\frac{b}{a},\frac{c}{b}\right\}$ is $b = \sqrt{ac}$ and the maximum value is $\sqrt{\frac{c}{a}}$.

\end{proof}

\begin{appendixproof}[\textbf{Proof of \textsc{Corollary}~\ref{corollary: lowcomp c bound}}]
The value of $T_{\texttt{C}}(\phi^2)$ comes immediately from applying the preceding \textsc{Proposition} \ref{prop: geometric mean}. This proposition also establishes $c$'s lower bound, as
\begin{align*}
c & = \frac{\max \left\{ T_{\texttt{C}}(\phi^2), T_{\texttt{C}}((\texttt{L}_{\x'} - \texttt{L}_{\x'})^2) \right\} }{\min \left\{ T_{\texttt{C}}(\phi^2), T_{\texttt{C}}((\texttt{L}_{\x'} - \texttt{L}_{\x'})^2) \right\} } \\
 &> \min \left\{ \frac{T_{\texttt{C}}(\phi^2)}{T_{\texttt{C}}((2\delta)^2)} , \frac{T_{\texttt{C}}((1-2\delta)^2)}{T_{\texttt{C}}(\phi^2)} \right\} = \sqrt{\frac{T_{\texttt{C}}((1-2\delta)^2)}{T_{\texttt{C}}((2\delta)^2)}} \\
& = \sqrt{\frac{\frac{1}{2} + (\frac{1-2\delta}{n})^2}{\frac{1}{2} + (\frac{2\delta}{n})^2}}  = \sqrt{\frac{n^2 + 2(1-2\delta)^2}{n^2 + 2(2\delta)^2}}.
\end{align*}
\end{appendixproof}

\begin{lemma}
\label{lemma: alpha to eta}
Consider \textsc{Theorem}~\ref{Comp parms}, which gives the parameters ($d, d', m, t$) for the $\Comp$ function for a given desired accuracy of $2^{-\alpha}$. If $\alpha$ is chosen such that $\alpha > -\log(\eta)$ then $\Comp$ (and by extension, $\LowComp$) has $\eta$-error. That is, $$|\LowComp(\textup{\texttt{L}}_{\mb{x}'}, \textup{\texttt{L}}_{\mb{y}'}, \phi; d, d', m, t) - lowcomp(l_{\x}, l_{\y})| < \eta$$
\end{lemma}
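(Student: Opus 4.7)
The plan is to chain two bounds: the accuracy guarantee that \textsc{Theorem}~\ref{theorem: lowcomp error} already gives for $\LowComp$ in terms of the accuracy of the underlying $\Comp$ call, and an elementary base-2 logarithm manipulation that turns the target error $\eta$ into an admissible choice of $\alpha$. The bulk of the work has already been done inside \textsc{Theorem}~\ref{theorem: lowcomp error}, which propagates whatever error tolerance $\Comp$ is configured to meet directly to $\LowComp$ without any amplification. Thus this lemma reduces to calibrating $\alpha$ so that $2^{-\alpha}$ sits beneath the prescribed $\eta$.

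Concretely, I would first invoke \textsc{Theorem}~\ref{Comp parms} with accuracy parameter $2^{-\alpha}$; its hypotheses on the tuple $(d, d', m, t)$ ensure that $|\Comp(a, b; d, d', m, t) - comp(a, b)| < 2^{-\alpha}$ whenever the inputs lie in $[\tfrac{1}{2}, \tfrac{3}{2})$ with ratio bounded below by the $c$ prescribed in \textsc{Corollary}~\ref{corollary: lowcomp c bound}. Feeding this bound into \textsc{Theorem}~\ref{theorem: lowcomp error} (used with its error budget set to $2^{-\alpha}$) immediately yields $|\LowComp(\texttt{L}_{\mb{x}'}, \texttt{L}_{\mb{y}'}, \phi; d, d', m, t) - lowcomp(l_\x, l_\y)| < 2^{-\alpha}$.

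Finally, I would observe that the hypothesis $\alpha > -\log(\eta)$ is equivalent (taking logarithms base 2 and negating) to $2^{-\alpha} < \eta$, so composing the two inequalities gives $|\LowComp - lowcomp| < 2^{-\alpha} < \eta$, as desired. There is essentially no obstacle here---this is a bookkeeping step that translates the exponent-of-two accuracy parameter native to $\Comp$'s convergence analysis into the user-facing additive error $\eta$ for $\LowComp$. The only thing to double-check is the consistency of the logarithm base with the convention used elsewhere in Section~\ref{subsection: low parms} and in the statement of \textsc{Theorem}~\ref{Comp parms}, namely base 2, which is maintained throughout.
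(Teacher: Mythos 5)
Your proposal is correct and follows essentially the same route as the paper: the paper's proof likewise observes that $\alpha > -\log(\eta)$ gives $2^{-\alpha} < \eta$ and then invokes the error-propagation result (\textsc{Theorem}~\ref{theorem: lowcomp error}) to transfer the $\Comp$ accuracy guaranteed by \textsc{Theorem}~\ref{Comp parms} to $\LowComp$. Your write-up is just a slightly more explicit unpacking of that two-line argument.
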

\begin{proof}
If $\alpha > -\log(\eta)$, then $2^{-\alpha} < \eta$. The result follows from \textsc{Theorem} \ref{theorem: lowcomp error}. 
\end{proof}

\begin{appendixproof}[\textbf{Proof of \textsc{Theorem}~\ref{theorem: lowcomp parms}}]
For 
\[\phi = n\sqrt{\sqrt{\left(\frac{1}{2} + (\frac{2\delta}{n})^2 \right) \left(\frac{1}{2} + (\frac{1-2\delta}{n})^2 \right)} - \frac{1}{2}},\] 
one may calculate 
\[T_{\texttt{C}}(\phi^2) = \sqrt{\left(\frac{1}{2} + (\frac{2\delta}{n})^2 \right) \left(\frac{1}{2} + (\frac{1-2\delta}{n})^2 \right)}.\] 
By \textsc{Corollary} \ref{corollary: lowcomp c bound}, $c$ is bounded below by $\sqrt{\frac{n^2 + 2(1-2\delta)^2}{n^2 + 2(2\delta)^2}}$. But this is equivalent to saying that 
\begin{align*}
& -\log(\log(c)) \leq -\log(\log(\sqrt{\frac{n^2 + 2(1-2\delta)^2}{n^2 + 2(2\delta)^2}})) \\
\implies &  \log(\alpha + 2) -\log(\log(c)) \\
& \leq \log(\alpha + 2) -\log(\log(\sqrt{\frac{n^2 + 2(1-2\delta)^2}{n^2 + 2(2\delta)^2}})).
\end{align*}
Therefore, if we choose 
\begin{align*}t \geq \frac{1}{\log m}\Bigg[ &\log(\alpha + 1 + \log(n)) \\ &-\log(\log(\sqrt{\frac{n^2 + 2(1-2\delta)^2}{n^2 + 2(2\delta)^2}})) \Bigg],\end{align*} we fulfill the former inequality in Theorem 4.

Furthermore, \textsc{Lemma}~\ref{lemma: alpha to eta} determines what $\alpha$ must be to achieve the desired $\eta$-error in $\LowComp$. At this point, with a fixed choice of $m$, Theorem 4 from \cite{cheon2019} establishes the remaining parameters $d, d',$ and $t$.
\end{appendixproof}

\end{document}